\def\Me {\mathcal M}
\def\Ha {\mathcal H}
\def\Be {\mathcal B}
\def\Ee {\mathcal E}
\def\Ue {\mathcal U}
\def\Ce {\mathcal C}
\def\Le {\mathcal L}
\def\Fe {\mathsf F}
\def\Ve {\mathcal V}
\def\We {\mathcal W}
\def\states{\mathfrak S}
\def\Mbb{\mathbb M}
\def\Nbb{\mathbb N}
\def\<{\langle}
\def\>{\rangle}
\def\succ{\mathrm{succ}}
\def\pre{\mathrm{pre}}
\def\post{\mathrm{post}}
\def\Tr {{\rm Tr}\,}
\def\ptr {{\rm Tr}}
\newtheorem{thm}{Theorem}
\newtheorem{lemma}{Lemma}
\newtheorem{prop}{Proposition}
\newtheorem{coro}{Corollary}
\theoremstyle{remark}
\newtheorem{ex}{Example}
\newtheorem{rem}{Remark}
\def\llangle{\langle\kern-0.4ex\langle}
\def\rrangle{\rangle\kern-0.4ex\rangle}
\def\doublek{\rrangle\kern-0.3ex\llangle}
\title{A general theory of comparison of quantum channels (and beyond)}
\author{Anna Jen\v cov\'a\thanks{A part of this paper was presented at the Algebraic and Statistical ways into Quantum Resource Theories workshop at BIRS in July 2019}\\ \small \emph{Mathematical Institute, Slovak Academy of Sciences},
\small \emph{\v Stef\'anikova 49, 814 73 Bratislava, Slovakia,} \\ \small \emph{jenca@mat.savba.sk}}
\date{}
\begin{document}

\maketitle
\begin{abstract} We present a general theory of comparison of quantum channels, concerning with the question of simulability or approximate simulability of a given quantum channel by allowed transformations of 
another given channel. 
We introduce a modification of conditional min-entropies, with respect to the set $\Fe$ of allowed
 transformations, and show that  under some conditions on $\Fe$, these quantities
characterize approximate simulability. 
If $\Fe$ is the set of free superchannels in a quantum resource theory of
processes, the modified conditional min-entropies form a complete set of resource monotones.
If the transformations in $\Fe$ consist of  a preprocessing and a postprocessing of specified forms, approximate
simulability is also characterized in terms of success probabilities in certain guessing games, where a preprocessing of
a given form can be chosen and the measurements are restricted. These results are
applied to several specific cases of simulability of quantum channels, including postprocessings, preprocessings
and processing of bipartite channels by LOCC superchannels and by partial superchannels, as well as simulability of sets of
quantum measurements. 

 These questions are first studied in a general setting that is an  extension of the  framework of general probabilistic
theories (GPT), suitable for dealing with channels. Here we prove a general theorem that shows that approximate
simulability can be characterized by comparing outcome probabilities in certain tests. This result is inspired by the
classical Le Cam randomization criterion for statistical experiments and contains its finite dimensional version
 as a special case.

\end{abstract}

\section{Introduction}

For a pair of quantum channels $\Phi_1$ and $\Phi_2$, we consider the following problem: is it possible to simulate 
$\Phi_2$  by transforming $\Phi_1$ by a quantum network of a specified type? 
Since quantum channels are the fundamental objects in quantum
information theory, this question subsumes a variety of special cases already studied extensively in the literature: 
comparison of statistical experiments \cite{matsumoto2010aquantum,buscemi2012comparison}, simulability
of measurements \cite{skrzypczyk2019robustness,guerini2017operational} or more general comparison of channels 
\cite{chefles2009quantum,buscemi2016degradable,buscemi2017comparison,gour2018quantum,gour2019comparison,jencova2016isit,jencova2014randomization,takagi2019general}.
In fact, this kind of questions goes back to the classical theory of comparison of classical statistical experiments 
\cite{blackwell1951comparison,lecam1964sufficiency} (see also \cite{strasser1985mathematical}).
This problem can be also put into the setting of resource theories of processes
\cite{liu2019resource,liu2020operational,gour2019theentanglement,gour2020dynamical} by choosing the
allowed  maps to be the free operations in the theory, whence it becomes the important question of convertibility of one
device to another using free operations.  

Transformations between quantum channels are given by superchannels, consisting of a preprocessing and a postprocessing
channel connected by an ancilla, \cite{chiribella2009theoretical, gour2019comparison}. So the question is the equality or 
approximate equality
 \begin{center}
\begin{minipage}[c]{0.4\textwidth}
\centering
\includegraphics{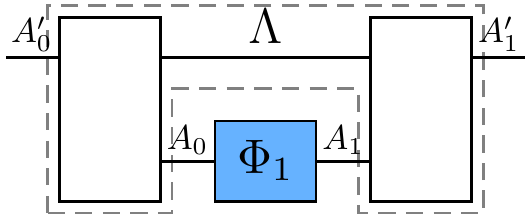}
\end{minipage}
\begin{minipage}[c]{0.01\textwidth}
 $\approx$
\end{minipage}
\begin{minipage}[c]{0.2\textwidth}
\centering
\includegraphics{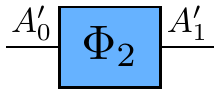}
\end{minipage}
 \end{center}
where $\Lambda$ is a superchannel from a given family.

Two types of characterizations of simulability  are mostly discussed: either by inequalities in (some
modification of) conditional min-entropy (e.g. \cite{gour2018quantum,gour2019comparison}), or in terms of success probabilities in some
discrimination tasks \cite{chefles2009quantum,takagi2019general}. These two
characterizations are closely related, in fact, the latter can be seen as an operational interpretation of the former. 
These conditions provide a complete set of monotones in the given resource theory. 

In  the more general situation where the target channel is simulated only approximately,
there are different possible approaches as to e.g. the distance  measures used to assess the accuracy of the
approximation. A common choice is the diamond norm, which is natural since it is well known as the distinguishability norm for channels
\cite{kitaev1997quantum}. With this choice the problem becomes a direct  extension of the problem of the classical theory of comparison of
statistical experiments. This framework also includes some more specific cases such as quantum dichotomies
\cite{buscemi2019aninformation} (pairs of states), where we can restrict to channels that simulate one of the states
exactly, but the other may differ from the target.

Similarly as in the case of exact simulations, the aim of this paper is to characterize the diamond norm accuracy of the
approximation in two ways: by inequalities in
terms of quantities that can be seen as modifications of the conditional min entropy and by comparison of 
 success probabilities in some types of guessing games. The crucial observation for the first type of characterisation
is the fact that the conditional min entropy is related to a norm: for any state $\rho$ on a composite system $AB$, we
have
\[
H_{\min}(B|A)_\rho=-\log \|\rho\|^\diamond_{B|A}.
\]
Using the Choi isomorphism, the set of operators on the bipartite Hilbert space $\Ha_{AB}$ can be seen as the dual space 
 of the set of linear maps $B(\Ha_A)\to B(\Ha_B)$ and with this identification,  $\|\cdot\|^\diamond_{A|B}$ is the dual norm  to the diamond norm.  A corresponding fact for semifinite von Neumann algebras was also used in
\cite{ganesan2020quantum} to extend the conditional min entropy characterization of the majorization ordering of
bipartite states to the infinite dimensional setting. 

The above duality relation of $H_{\min}$ and the diamond norm is based on  affine duality of convex sets, this was
 observed also in \cite{chiribella2016optimal} and used to extend $H_{\min}$ to quantum networks and their SDP
optimization. While our primary interest lies in simulability of quantum channels, in the first part we will work in 
 a broader setting that in a sense is an extension of  general
probabilistic theories (GPTs), suitable for discussion of various types of quantum channels and
networks. The setting is introduced as a category called $\mathsf{BS}$ and the allowed transformations are specified as a
convex subcategory $\Fe$. The operational theories of \cite{chiribella2011informational} and higher-order theories of
\cite{bisio2019theoretical} can be identified with special objects in $\mathsf{BS}$, in particular, this includes the sets of channels,
superchannels and networks of different types.  Moreover,  the distinguishability norms and the 
 duality  described above are naturally defined here. The proof of the main results, in the GPT setting and for quantum
channels,  is based on the properties of these norms summarized in Proposition \ref{prop:basesec}, together with
 the minimax theorem. 
The advantage of this approach is that it captures the basic mathematical structure that lies behind the result and,
more importantly, it is applicable to a variety of specific cases. We also remark that it can be applied e.g. 
to the case when multiple copies of the channels are studied, under various parallel or sequential schemes.

For two elements $b_1,b_2$ of objects of $\Fe$, the (one-way) $\Fe$-conversion distance $\delta_\Fe(b_1\|b_2)$ is defined as 
the minimal distance 
 we can get to $b_2$ by allowed transformations of $b_1$. The main result in the GPT setting (Theorem \ref{thm:rand}) 
can be interpreted as the fact that $\delta_\Fe$ can be characterized by comparing the outcome probabilities of certain
tests applied to $b_1$ and $b_2$. It is also noted that our setting includes  comparison of classical
statistical experiments (in the simple finite dimensional case), here Theorem \ref{thm:rand} becomes the 
Le Cam randomization criterion \cite{lecam1964sufficiency} of the classical statistical decision theory. The result
is then applied to the case of quantum channels, where  we obtain a characterization of $\delta_\Fe$ in
terms of quantities  that can be interpreted as modifications of the conditional min entropy (Theorem \ref{thm:rand_chans}). In the setting of resource theories, where morphisms in $\Fe$ coincide
with free superchannels, we show that under some mild assumptions these quantities form a complete set of resource
monotones (see e.g. \cite{liu2019resource,liu2020operational, gour2019theentanglement,gour2020dynamical}).

We then turn to the characterization of $\delta_\Fe$  in terms of guessing games. Here we use  a connection between the conditional min entropy and success
probabilities that is specific to the quantum case, so this is studied  only for  quantum channels (apart from 
 examples \ref{ex:experiments} and \ref{ex:measurements} where statistical experiments and measurements are treated 
within the GPT framework). This connection is obtained from an isomorphism between quantum channels and bipartite
measurements which is close to the Choi isomorphism. We assume that the superchannels in $\Fe$ consist of preprocessings and postprocessings
belonging to given sets $\Ce_{\pre}$ and $\Ce_{\post}$. In this case, we find some sufficient conditions (Theorem \ref{thm:rand_chans_psuc}) under which $\delta_\Fe$ is characterized by success
probabilities in guessing games of the following general  form. Given an ensemble  $\Ee=\{\lambda_i, \rho_i\}$ of states $\rho_i$ with prior probabilities
$\lambda_i$, the guessing game is  depicted in the diagram:
\begin{center}
\includegraphics{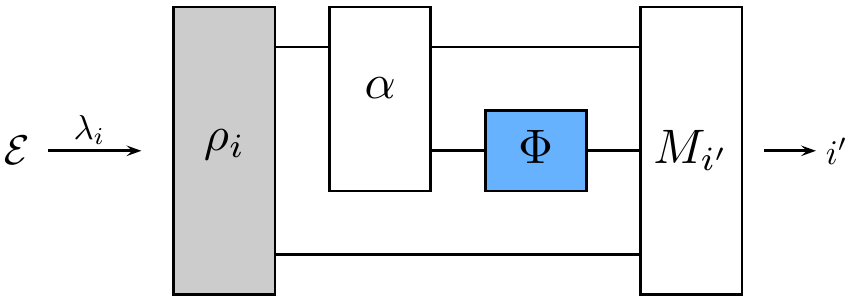} 
\end{center}
Here $\Phi$ is the channel in question ($\Phi_1$ or $\Phi_2$), we may choose the preprocessing $\alpha$
 and the  measurement $M$  from some allowed sets $\Ce_{\pre}$ and $\Me_{\post}$. We also permit  an ancilla between
$\alpha$ and $M$, but this  might be restricted by the allowed sets.
 
These results are applied to some special cases: postprocessing, preprocessing, processing of bipartite channels by
LOCC superchannels and by partial superchannels. We remark that, similarly as LOCC, Theorem \ref{thm:rand_chans_psuc}
can be applied to other cases of restricted resource theories, such as PPT or SEP. In these cases, $\Ce_{\pre}$,
$\Ce_{\post}$ and
$\Me_{\post}$ consist of LOCC (PPT/SEP) channels resp. measurements, see Section \ref{sec:LOCC}. 
 
In the case of postprocessings, we obtain previously known results \cite{jencova2016isit}: $\delta_\Fe$ is characterized
by comparing the  success probabilities of ensembles $(\Phi_i\otimes id)(\Ee)$, see Section \ref{sec:post}.
 For preprocessings, there is only one fixed measurement $M$ in the guessing game but the preprocessings can be chosen
freely.  We also characterize the related preprocessing pseudodistance 
 as a Hausdorff distance of  ranges of the two channels tensored with identity, Section \ref{sec:pre}. The description
of the guessing game for partial processings can be found in Section \ref{sec:partial}.
As another example, we treat classical simulability for sets of quantum measurements and show that it can be formulated
by processing of a certain bipartite channel by a classical-to-classical partial superchannel. We show that
 approximate simulability is characterized by success probabilities, where no ancilla is needed in the guessing game,
 see Section \ref{sec:measurements}.

The outline of the paper is as follows. We start with the general GPT formulation in Section \ref{sec:gpt}. 
Here the category $\mathsf{BS}$ is introduced and the properties of the corresponding norms and their duality are discussed. We then treat the comparison in GPT, the main result is formulated in  Theorem \ref{thm:rand}. 
In Section \ref{sec:comp_channels} we specialize to quantum channels. We first introduce the basic notions and show that
 the sets of channels and superchannels are objects in the category $\mathsf{BS}$. We also recall the connections between 
the diamond norm,  $H_{\min}$ and success probabilities of guessing games (Section \ref{sec:guessing}) that are crucial in further
sections. The main results of this part are Theorem \ref{thm:rand_chans}, characterizing $\delta_\Fe$ in terms of modified
conditional min entropies, and Theorem \ref{thm:rand_chans_psuc}, characterizing $\delta_\Fe$ by success probabilities. 
Applications of these results are contained in Sections \ref{sec:post} - \ref{sec:measurements}.

\section{The GPT formulation}\label{sec:gpt}

General probabilistic theories (GPT) form a framework for description of a large class of physical theories involving
probabilistic processes, see \cite{lami2018nonclassical} for an introduction and background. This framework is built upon basic notions of states and effects and under some
general assumptions on the theories, it can be put into the setting of the theory of (finite dimensional) ordered vector
spaces. The classical and quantum theories are special cases of a GPT (see Example \ref{ex:cl_q} below), which allows to
study some well known quantum phenomena in a broader
context. This is especially useful in the investigation of the mathematical foundations of quantum theory. For us, it
is important that this setting describes the basic mathematical structure underlying the problem of approximate channel
simulability and can be applied in many different situations.

The basic object in GPT is the set of states of a physical system in the theory, represented as a compact convex subset
of an Euclidean space. Such a set can be always seen as a base of a closed convex cone  in a finite dimensional
real vector space. It is clear that the set of { channels}, or physical transformations of the systems in  the
theory, has a convex structure as well and 
 can be, at least formally,  treated as a "state space" in  the convenient framework of  GPT. For example, the set of quantum channels was considered
in this way in \cite{jencova2017conditions}. However, observe that  the set of quantum channels is, by definition, a special subset
of the cone of completely positive maps, but it no longer forms a base of this cone. We will therefore need  a
somewhat more general representation of compact convex sets, described in the next paragraph.

\subsection{Base sections and corresponding norms} \label{sec:base}

Let $\Ve$ be a finite dimensional real vector space and let $\Ve^+\subset \Ve$ be a closed convex cone which is pointed
($\Ve^+\cap (-\Ve^+)=\{0\}$) and generating 
($\Ve=\Ve^+-\Ve^+$). Below, we will say that $\Ve^+$ is a proper cone.  The cone $\Ve^+$ defines a partial order in $\Ve$, defined by $x\le y$ if $y-x\in \Ve^+$. Thus the pair $(\Ve,\Ve^+)$ is an
ordered vector space. 

A convex  subset $B\subset \Ve^+$ is called a  {base section} in $(\Ve,\Ve^+)$ if 
\begin{enumerate}
\item[(i)] $B$ is the base of the cone $\Ve^+\cap \mathrm{span}(B)$;
\item[(ii)] $B\cap \mathrm{int}(\Ve^+)\ne \emptyset$.

\end{enumerate}
 Base sections were studied in
\cite[Appendix]{jencova2014randomization} and in \cite{jencova2014base} in the special case of the space of hermitian
linear operators with the cone of positive operators. In this paragraph, we summarize some of the results.

For the ease of the presentation, it will be convenient to introduce  the category   $\mathsf{BS}$ whose objects are  finite dimensional real vector
spaces $\Ve$ endowed with a fixed proper cone $\Ve^+\subset \Ve$ and a base section $B(\Ve)$ in $(\Ve,\Ve^+)$. 
The morphisms $\Lambda:\Ve\to \We$ in $\mathsf{BS}$ are linear maps  such that $\Lambda(\Ve^+)\subseteq \We^+$ and
$\Lambda(B(\Ve))\subseteq B(\We)$, \cite{plavala2019notes}.

For $\Ve \in \mathsf{BS}$, we define the dual object $\Ve^*\in \mathsf{BS}$ as the dual vector space with the dual cone 
\[
\Ve^{*+}:=\{ \varphi\in \Ve^*, \<\varphi,c\>\ge 0,\ \forall c\in \Ve^+\}
\]
and the dual base section 
\[
B(\Ve^*)=\tilde B(\Ve):=\{\varphi\in \Ve^{*+},\ \<\varphi,b\>=1,\ \forall b\in B(\Ve)\}.
\] 
Note that indeed, $\Ve^{*+}$ is a proper cone in $\Ve^*$ and  $B(\Ve^*)$  is a  base section  in $(\Ve^*,\Ve^{*+})$.
Moreover,  $\Ve^{**}\simeq \Ve$ in $\mathsf {BS}$, where the isomorphism  is given by
the  natural vector space isomorphism $\Ve\to\Ve^{**}$.

 Let us denote 
\begin{align*}
[-B(\Ve),B(\Ve)]:
&=\{x\in \Ve, \ \exists b\in  B(\Ve), -b\le x \le b\}\\
&=\{c_1-c_2,\ c_1,c_2\in \Ve^+,\ c_1+c_2\in B(\Ve)\}. 
\end{align*}
Then we have 
\begin{equation}\label{eq:interval}
\Ve^+\cap [-B(\Ve),B(\Ve)]=[0,B(\Ve)]:=\{x\in \Ve^+,\ \exists b\in B(\Ve),\ x\le b\}.
\end{equation}
 We now define a norm  in $\Ve$ as
\[
\|x\|_{\Ve}:=\max_{\psi\in [-B(\Ve^*),  B(\Ve^*)]} \<\psi,x\>.
\]
The following proposition summarizes some  properties of these norms. Note that since $B(\Ve)$ and $B(\Ve^*)$ are given
by positivity and linear constraints, part (iii) below can be formulated as the primal and dual conic program for computing
the norm $\|\cdot\|_\Ve$.

\begin{prop}\label{prop:basesec} Let $\Ve\in \mathsf{BS}$.  Then 
\begin{enumerate}
\item[(i)] The unit ball of $\|\cdot\|_\Ve$ is $[-B(\Ve),B(\Ve)]$, so that 
\[
\|x\|_\Ve=\min_{b\in B(\Ve)}\min\{\lambda >0,\ -\lambda b\le x\le \lambda b\};
\]
\item[(ii)] the norms  $\|\cdot\|_\Ve$ and $\|\cdot\|_{\Ve^*}$ are mutually dual;

 \item[(iii)] if $c\in \Ve^+$, then 
\[
\|c\|_\Ve=\max_{\varphi\in B(\Ve^*)} \<\varphi,c\>=\min_{b\in B(\Ve)}\min\{\lambda >0,\ c\le \lambda b\};
\]
\item[(iv)] if $b_1,b_2\in B(\Ve)$, then 
\[
\frac12\|b_1-b_2\|_\Ve=\max_{\varphi\in [0,B(\Ve)]} \<\varphi,b_1-b_2\>.
\]
\item[(v)] Let $\Lambda:\Ve\to \We$ be a morphism in $\mathsf{BS}$. Then $\Lambda$ is a contraction with respect to the
corresponding norms:
\[
\|\Lambda(x)\|_\We\le \|x\|_\Ve,\qquad \forall x\in \Ve.
\]

\end{enumerate}

\end{prop}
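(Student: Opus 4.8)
The plan is to treat part (i) as the heart of the proposition and to read off (ii)--(v) as consequences. The key reformulation is that the right-hand side of (i) is exactly the Minkowski gauge of the set $C:=[-B(\Ve),B(\Ve)]$, while the defining $\max$ is the support function of $D:=[-B(\Ve^*),B(\Ve^*)]$; part (i) then amounts to a strong-duality statement between a conic program and its dual. I would first record that $C$ is a compact, convex, symmetric set with $0$ in its interior: convexity and symmetry are immediate from the second description of $[-B(\Ve),B(\Ve)]$, compactness follows because $C$ is the image of the compact set $\{(c_1,c_2):c_i\in\Ve^+,\ c_1+c_2\in B(\Ve)\}$ under $(c_1,c_2)\mapsto c_1-c_2$, and $0\in\mathrm{int}(C)$ because, choosing $b_0\in B(\Ve)\cap\mathrm{int}(\Ve^+)$ (which exists by axiom (ii)), the order interval $[-b_0,b_0]$ is a neighbourhood of $0$ contained in $C$. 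Consequently its gauge $\mu_C$ is a genuine norm whose unit ball is $C$, and $\mu_C(x)=\min_{b\in B(\Ve)}\min\{\lambda>0:-\lambda b\le x\le\lambda b\}$.

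To identify $\mu_C$ with $\|\cdot\|_\Ve$ I would write $\mu_C$ as a conic program. Fixing any $e^*\in\Ve^*$ with $\langle e^*,b\rangle=1$ for all $b\in B(\Ve)$, a short computation shows
\[
\mu_C(x)=\min\{\langle e^*,y\rangle:\ y\in\mathrm{span}(B(\Ve)),\ y-x\in\Ve^+,\ y+x\in\Ve^+\}.
\]
Dualizing with multipliers $\psi_1,\psi_2\in\Ve^{*+}$ for the two conic constraints, the inner minimization over $y$ forces $\langle\psi_1+\psi_2,b\rangle=1$ for all $b\in B(\Ve)$, i.e. $\psi_1+\psi_2\in B(\Ve^*)$, and leaves the dual objective $\langle\psi_1-\psi_2,x\rangle$. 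By the second description of the interval (applied to $\Ve^*$) the dual-feasible elements $\psi_1-\psi_2$ range exactly over $D=[-B(\Ve^*),B(\Ve^*)]$, so the dual value is $\max_{\psi\in D}\langle\psi,x\rangle=\|x\|_\Ve$. The primal is strictly feasible---$y=tb_0$ satisfies $y\pm x\in\mathrm{int}(\Ve^+)$ for large $t$---so Slater's condition holds and there is no duality gap; hence $\mu_C=\|\cdot\|_\Ve$, which is (i). Comparing the two expressions for this common function as support functions of the compact convex sets $C^\circ$ and $D$ also yields the polar identity $C^\circ=D$.

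Part (ii) is then immediate: the dual norm of $\|\cdot\|_\Ve$ is the support function of its unit ball $C$, i.e. the gauge of $C^\circ=D$, which by (i) applied to $\Ve^*$ is $\|\cdot\|_{\Ve^*}$; the reverse follows from $\Ve^{**}\simeq\Ve$. For (iii), restricting the $\max$ in (i) to $c\in\Ve^+$ and using that for $\psi\le\phi$ with $\phi\in B(\Ve^*)$ one has $\langle\phi-\psi,c\rangle\ge 0$, hence $\langle\psi,c\rangle\le\langle\phi,c\rangle$, the maximum is attained on $B(\Ve^*)\subseteq D$, giving the first equality; in the gauge formula the lower constraint $-\lambda b\le c$ is automatic once $c,\lambda b\in\Ve^+$, leaving $\min_b\min\{\lambda:c\le\lambda b\}$. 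For (iv), the decisive observation is that $\langle\phi,b_1-b_2\rangle=0$ for every $\phi\in B(\Ve^*)$, since $\langle\phi,b_1\rangle=\langle\phi,b_2\rangle=1$; writing a generic $\psi\in D$ as $\psi=2\varphi-\phi$ with $\phi\in B(\Ve^*)$ and $\varphi:=\tfrac12(\psi+\phi)\in[0,B(\Ve^*)]$, this kills the $\phi$-term and yields $\langle\psi,b_1-b_2\rangle=2\langle\varphi,b_1-b_2\rangle$, so the $\max$ in (i) equals twice the claimed quantity (the extremization being over the dual interval $[0,B(\Ve^*)]$).

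Finally, (v) is direct from (i): given $x$, pick $b\in B(\Ve)$ with $-\lambda b\le x\le\lambda b$ and $\lambda=\|x\|_\Ve$; since $\Lambda(\Ve^+)\subseteq\We^+$ the map is order-preserving, so $-\lambda\Lambda(b)\le\Lambda(x)\le\lambda\Lambda(b)$, and $\Lambda(b)\in B(\We)$ because $\Lambda(B(\Ve))\subseteq B(\We)$; hence $\|\Lambda(x)\|_\We\le\lambda=\|x\|_\Ve$. The one genuinely substantial step is part (i)---equivalently the polar identity $C^\circ=D$---and the main obstacle there is ensuring strong duality of the conic program; this is precisely where base-section axiom (ii), furnishing an interior point of $\Ve^+$ inside $B(\Ve)$, is indispensable as a Slater condition.
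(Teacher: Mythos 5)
The paper does not actually prove Proposition~1: it defers to the appendix of \cite{jencova2014randomization} and to \cite{jencova2014base}, so there is no in-text argument to compare against. Your proposal supplies a correct, self-contained proof, and the route --- identifying the stated $\max$ as the support function of $[-B(\Ve^*),B(\Ve^*)]$, the stated $\min$ as the gauge of $[-B(\Ve),B(\Ve)]$, and closing the gap between them by conic (Lagrangian) duality with Slater's condition furnished by $B(\Ve)\cap\mathrm{int}(\Ve^+)\ne\emptyset$ --- is a natural one; the paper itself gestures at exactly this when it remarks that part (iii) ``can be formulated as the primal and dual conic program for computing the norm.'' Your derivations of (ii)--(v) from (i) and from the polar identity $C^\circ=D$ are all correct, and you rightly read the $[0,B(\Ve)]$ in (iv) as the dual interval $[0,B(\Ve^*)]$ (a typo in the statement, confirmed by how (iv) is used later in the paper). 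Two small points you lean on without proof: the compactness of $C$ (and the attainment claims in (i), (iii), (v)) requires knowing that a base of a closed pointed cone in a finite-dimensional space is compact --- this follows from the standard fact that a base is cut out by a linear functional strictly positive on the cone minus the origin, which also supplies the functional $e^*$ you fix; and the nonemptiness and compactness of $B(\Ve^*)$ (needed for the defining $\max$ to make sense and for $\|\cdot\|_\Ve$ to be a genuine norm rather than a seminorm) is asserted by the paper as part of $\Ve^*$ being an object of $\mathsf{BS}$ but deserves the same one-line justification. Neither is a gap in substance, only in explicitness.
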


We next discuss some basic examples of objects in $\mathsf{BS}$.

\begin{ex}[GPT state spaces]\label{ex:GPT} 
 The state spaces of GPT can be also seen as objects in $\mathsf{BS}$. Indeed, let $K$ be any compact convex
subset in $\mathbb R^N$, then there is a finite dimensional real vector space $\Ve$ with a proper cone 
 $\Ve^+$, such that $K$ is a base of $\Ve^+$. This means that there is some functional $u\in
int((\Ve^+)^*)$ such
that
\[
K=\{c\in \Ve^+,\ \<u,c\>=1\}.
\]
The space $\Ve$ with $\Ve^+$ and  $B(\Ve)=K$ is clearly an object in  $\mathsf{BS}$. For the dual object, we have
$B(\Ve^*)=\{u\}$. The norm $\|\cdot\|_\Ve$ is the base norm with respect to $K$ and $\|\cdot\|_{\Ve^*}$ is the order unit norm with
respect to $u$. In this way, the category $\mathsf{BS}$ is a common generalization of order unit and  base normed
spaces.
\end{ex}

\begin{ex}[Classical and quantum state spaces]\label{ex:cl_q}
The prototypical examples in GPT are the classical state space of probability distributions over a finite set, and the
quantum state space of all density density operators on a finite dimensional Hilbert space. In the classical GPT, we have $\Ve=\mathbb R^n$, with the simplicial cone $\Ve^+=(\mathbb R^{+})^n$ and
$K=\{(p_1,\dots,p_n)\in \Ve^{+},\ \sum_ip_i=1\}$ is the probability simplex. The morphisms in $\mathsf{BS}$ between such spaces are precisely the stochastic maps,  given by stochastic matrices.
The norm $\|\cdot\|_\Ve=\|\cdot\|_1$ is the $L_1$-norm. The dual ordered vector space is affinely isomorphic to
$(\Ve,\Ve^+)$ and the unit functional $u=1_n:=(1,\dots,1)$. The dual norm is the $L_\infty$-norm
$\|\cdot\|_{\Ve^*}=\|\cdot\|_\infty$.
In the quantum  case,  $(\Ve,\Ve^+)$ is the space of hermitian operators on a finite dimensional Hilbert space with the cone of positive
operators and the  morphisms are positive trace preserving maps. The dual ordered vector space is again affinely
isomorphic to $(\Ve,\Ve^+)$ and the unit functional is the trace, $u=\Tr$.   
The two dual norms are the trace norm and the operator norm, respectively.
\end{ex}

\begin{ex}[Quantum channels]\label{ex:channels}
The prototypical example of an object in $\mathsf{BS}$ is the set of quantum channels which  is a base section in the
vector space of hermitian maps with the cone of completely positive maps. More details will be given in Section
\ref{sec:diamond}.

\end{ex}

\begin{ex}[$\mathsf{BS}$ morphisms]\label{ex:bsmorph} Let $\Ve$ and $\We$ be  two objects in $\mathsf{BS}$
and let $\Le=\Le(\Ve,\We)$ be the space of linear maps $\Ve\to \We$. With $\Le^+$ the cone of positive maps and 
$B(\Le)=\mathsf{BS}(\Ve,\We)$ the set of all morphisms $\Ve\to \We$, it is easily seen that $\Le$ is again an object in 
 $\mathsf{BS}$. It is this property that makes this category especially useful for description of channels and higher
order theories.

\end{ex}

In general, an object in $\mathsf{BS}$ can be interpreted as a set of special states or devices of a physical theory, so
 we can consider the problem of testing or performing measurements over such sets. A test (or yes-no measurement)
is identified with an affine map  $B(\Ve)\to [0,1]$, assigning to each element the probability of the "yes" outcome.
Such maps correspond to elements $\varphi\in [0,B(\Ve^*)]$, similarly to the GPT setting, these elements will be called
effects. Any measurement with $k$ outcomes is given by a $k$-tuple of effects $\{\varphi_i\}_{i=1}^k$ such that 
 $\sum_i \varphi_i\in B(\Ve^*)$. Note that in the case of quantum channels this corresponds precisely to the quantum
testers of \cite{chiribella2009theoretical}.

It is not difficult to see that the restriction of $\|\cdot\|_\Ve$ to  $\mathrm{span}(B(\Ve))$ coincides with
the base norm with respect to the base $B(\Ve)$ of the cone $\Ve^+\cap \mathrm{span}(B(\Ve))$. The advantage of our
extended definition is that the dual space is now an object of the same  category. As the base norm,  $\|\cdot\|_\Ve$
has an operational interpretation as a distinguishability norm for elements in $B(\Ve)$. It can be seen that this
 holds also if the discrimination procedures are given by the effects as above, indeed, using Proposition
\ref{prop:basesec} (iv) we see that that the optimal probability of correctly distinguishing two elements  
$b, b'\in B(\Ve)$ each of which has a prior probability 1/2 is $\frac12(1+\frac12\|b-b'\|_\Ve)$.

More generally, an ensemble on $\Ve$  is a finite sequence $\Ee=\{\lambda_i,x_i\}_{i=1}^k$ of elements  $x_i\in B(\Ve)$ and
prior probabilities $\lambda_i$. The interpretation is that   $x_i$ is prepared  with probability
$\lambda_i$ and the task is to guess which element was prepared.
Any guessing procedure is described by a $k$-outcome measurement $\psi=\{\psi_i\}$, the
value $\<\psi_i,x_j\>$ is interpreted as the probability of guessing $i$ if the true state was $x_j$.
  The average success probability using $\psi=(\psi_i)$ is 
\[
P_{\mathrm{succ}}(\Ee,\psi):=\sum_i\lambda_i\<\psi_i,x_i\>
\]
 and the optimal success probability for $\Ee$ is
\[
P_{\succ}(\Ee):= \max_{\psi} P_{\succ}(\Ee,\psi).
\]

\subsection{Comparison in GPT}\label{sec:comp_gpt}

We now formulate the comparison problem in the above setting and prove a general theorem which in later
sections will be applied to quantum channels. Assume that a subcategory $\Fe$ in $\mathsf{BS}$ is given, such that for any $\Ve,\We\in \Fe$, the set $\Fe(\Ve,\We)$ of all morphisms
$\Ve\to \We$ in $\Fe$ is convex (we will say in this case that $\Fe$ is a convex subcategory).  For two objects $\Ve_1,\Ve_2\in \Fe$, let $b_1\in B(\Ve_1)$,
$b_2\in B(\Ve_2)$. The (one-way) $\Fe$-conversion distance $\delta_\Fe(b_1\|b_2)$ is defined as the minimum distance we can get to $b_2$ by images of $b_1$ under all morphisms $\Ve_1\to \Ve_2$ in $\Fe$: 
\[
\delta_\Fe(b_1\|b_2):=\inf_{\Lambda\in \Fe(\Ve_1,\Ve_2)} \|\Lambda(b_1)-b_2\|_{\Ve_2}.
\]
We also define the {$\Fe$-distance} of $b_1$ and $b_2$ as
\[
\Delta_\Fe(b_1,b_2):=\max\{\delta_\Fe(b_1\|b_2),\delta_\Fe(b_2\|b_1)\}.
\]
\begin{prop} $\Delta_\Fe$ is a pseudometric on the set $\{b\in B(\Ve),\ \Ve\in \Fe\}$.

\end{prop}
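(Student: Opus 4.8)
The plan is to verify the four defining properties of a pseudometric: non-negativity, symmetry, vanishing on the diagonal, and the triangle inequality. Non-negativity is immediate, since each one-way distance $\delta_\Fe(b_i\|b_j)$ is an infimum of norms and hence non-negative, and $\Delta_\Fe$ is the maximum of two such quantities. Symmetry holds by construction, because $\Delta_\Fe(b_1,b_2)=\max\{\delta_\Fe(b_1\|b_2),\delta_\Fe(b_2\|b_1)\}$ is manifestly symmetric in its arguments. For the diagonal I would use that $\Fe$, being a subcategory of $\mathsf{BS}$, contains the identity morphism $\mathrm{id}_\Ve\in\Fe(\Ve,\Ve)$ for every object $\Ve\in\Fe$; since $\mathrm{id}_\Ve(b)=b$, choosing $\Lambda=\mathrm{id}_\Ve$ in the defining infimum gives $\delta_\Fe(b\|b)=0$, whence $\Delta_\Fe(b,b)=0$.

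The substantive step is the triangle inequality, which I would first establish for the one-way distance $\delta_\Fe$ and then transfer to $\Delta_\Fe$. Fix objects $\Ve_1,\Ve_2,\Ve_3\in\Fe$ with $b_i\in B(\Ve_i)$. For arbitrary $\Lambda_1\in\Fe(\Ve_1,\Ve_2)$ and $\Lambda_2\in\Fe(\Ve_2,\Ve_3)$, the composite $\Lambda_2\circ\Lambda_1$ lies in $\Fe(\Ve_1,\Ve_3)$ because $\Fe$ is closed under composition. Using linearity of $\Lambda_2$ together with the triangle inequality for $\|\cdot\|_{\Ve_3}$, I would split
\[
\|\Lambda_2(\Lambda_1(b_1))-b_3\|_{\Ve_3}\le \|\Lambda_2(\Lambda_1(b_1)-b_2)\|_{\Ve_3}+\|\Lambda_2(b_2)-b_3\|_{\Ve_3}.
\]
The contraction property of $\mathsf{BS}$ morphisms, Proposition \ref{prop:basesec}(v), bounds the first term by $\|\Lambda_1(b_1)-b_2\|_{\Ve_2}$. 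Since the two resulting terms depend on $\Lambda_1$ and $\Lambda_2$ separately, taking the infimum over each variable independently yields
\[
\delta_\Fe(b_1\|b_3)\le \delta_\Fe(b_1\|b_2)+\delta_\Fe(b_2\|b_3).
\]

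Finally, I would deduce the triangle inequality for $\Delta_\Fe$ by applying this one-way estimate in both directions and bounding each $\delta_\Fe$ by the corresponding $\Delta_\Fe$: from $\delta_\Fe(b_1\|b_3)\le\delta_\Fe(b_1\|b_2)+\delta_\Fe(b_2\|b_3)\le\Delta_\Fe(b_1,b_2)+\Delta_\Fe(b_2,b_3)$ and the symmetric estimate for $\delta_\Fe(b_3\|b_1)$, taking the maximum gives $\Delta_\Fe(b_1,b_3)\le\Delta_\Fe(b_1,b_2)+\Delta_\Fe(b_2,b_3)$. The only place where anything beyond formal manipulation enters is the contraction property, so I expect no real obstacle; the one subtlety worth flagging is that $\Delta_\Fe$ is genuinely a pseudometric and not a metric, since $\Delta_\Fe(b_1,b_2)=0$ cannot force $b_1=b_2$ (indeed $b_1,b_2$ may live in different objects $\Ve_1\neq\Ve_2$). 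Vanishing of $\Delta_\Fe$ is precisely the mutual $\Fe$-simulability that motivates the comparison problem.
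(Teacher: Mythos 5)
Your proof is correct and follows essentially the same route as the paper: the reflexivity, symmetry and non-negativity are dispatched as trivial, and the triangle inequality is obtained by composing an (approximately optimal) morphism $\Ve_1\to\Ve_2$ with an arbitrary morphism $\Ve_2\to\Ve_3$, splitting via the norm triangle inequality, and invoking the contraction property of Proposition \ref{prop:basesec}(v). The only cosmetic difference is that the paper works with an explicit $\mu$-approximate minimizer whereas you take the two infima independently after observing that the terms decouple; these are interchangeable.
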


\begin{proof}  The only thing to prove
is the triangle inequality. So let $\Ve_1,\Ve_2,\Ve_3\in \Fe$ and let $b_i\in \Ve_i$, $i=1,2,3$.
Let
$\Theta\in \Fe(\Ve_2,\Ve_3)$ and let for $\mu>0$, $\Lambda_\mu\in \Fe(\Ve_1,\Ve_2)$
be such that $\delta_\Fe(b_1\|b_2)+\mu\ge \|\Lambda_\mu(b_1)-b_2\|_{\Ve_2}$. Then
\begin{align*}
\delta_\Fe(b_1\|b_3)&\le \|\Theta\circ\Lambda_\mu(b_1)-b_3\|_{\Ve_3}\le
\|\Theta(\Lambda_\mu(b_1)-b_2)\|_{\Ve_3}+\|\Theta(b_2)-b_3\|_{\Ve_3}\\
&\le \mu+\delta_\Fe(b_1\|b_2)+\|\Theta(b_2)-b_3\|_{\Ve_3}.
\end{align*}
since this holds for all $\Theta\in \Fe(\Ve_2,\Ve_3)$ and all $\mu>0$, we get the result.

\end{proof}

The following data processing inequalities for $\delta_\Fe$ follow  easily from Prop. \ref{prop:basesec} (v) (cf.
\cite[Prop. 3]{jencova2016isit}).

\begin{prop} Let $\Ve_1,\Ve_2,\Ve_3\in \Fe$, $b_i\in B(\Ve_i)$, $i=1,2,3$. Then
\begin{enumerate}
\item[(i)] For any $\Lambda\in \Fe(\Ve_1,\Ve_2)$, $\delta_\Fe(\Lambda(b_1)\|b_3)\ge \delta_\Fe(b_1\|b_3)$;
\item[(ii)] For any $\Theta\in \Fe(\Ve_2,\Ve_3)$, $\delta_\Fe(b_1\|\Theta(b_2))\le \delta_\Fe(b_1\|b_2)$;
\end{enumerate}

\end{prop}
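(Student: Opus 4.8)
The plan is to unravel the infimum in the definition $\delta_\Fe(b_1\|b_2)=\inf_{\Lambda\in \Fe(\Ve_1,\Ve_2)}\|\Lambda(b_1)-b_2\|_{\Ve_2}$ and to exploit two structural features of the convex subcategory $\Fe$: that it is closed under composition, so that composing allowed morphisms again yields an allowed morphism, and that every morphism is a contraction for the associated norms, which is precisely Proposition~\ref{prop:basesec}~(v). Before anything else I would record the observation that an image of a base-section element under a morphism is again a base-section element, so that $\Lambda(b_1)\in B(\Ve_2)$ and $\Theta(b_2)\in B(\Ve_3)$ and the distances appearing on the left in (i) and (ii) are well defined. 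Each part then reduces to bounding a single infimum by a convenient subfamily of the competing morphisms.

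For part (i) I would argue as follows. For any $\Theta\in\Fe(\Ve_2,\Ve_3)$ the composite $\Theta\circ\Lambda$ lies in $\Fe(\Ve_1,\Ve_3)$ and is therefore one of the morphisms over which the infimum defining $\delta_\Fe(b_1\|b_3)$ ranges; hence
\[
\delta_\Fe(b_1\|b_3)\le \|\Theta(\Lambda(b_1))-b_3\|_{\Ve_3}
\]
for every such $\Theta$. Taking the infimum over $\Theta\in\Fe(\Ve_2,\Ve_3)$ on the right then yields exactly $\delta_\Fe(b_1\|b_3)\le\delta_\Fe(\Lambda(b_1)\|b_3)$. This half uses only closure under composition.

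For part (ii) I would fix an arbitrary $\Lambda\in\Fe(\Ve_1,\Ve_2)$, again use $\Theta\circ\Lambda\in\Fe(\Ve_1,\Ve_3)$, and write
\[
\delta_\Fe(b_1\|\Theta(b_2))\le \|\Theta(\Lambda(b_1)-b_2)\|_{\Ve_3}\le \|\Lambda(b_1)-b_2\|_{\Ve_2},
\]
where the last step is the contraction property of the morphism $\Theta$ supplied by Proposition~\ref{prop:basesec}~(v). Since this holds for every $\Lambda$, taking the infimum over $\Lambda\in\Fe(\Ve_1,\Ve_2)$ gives $\delta_\Fe(b_1\|\Theta(b_2))\le\delta_\Fe(b_1\|b_2)$.

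The argument is essentially bookkeeping, and I do not expect a genuine obstacle. The only points that require care are confirming that $\Lambda(b_1)$ and $\Theta(b_2)$ genuinely land in the relevant base sections, so that the left-hand distances make sense, and invoking the subcategory axiom to guarantee $\Theta\circ\Lambda\in\Fe$. No minimax or duality machinery enters at this stage; the work is entirely in the norm contraction of Proposition~\ref{prop:basesec}~(v) together with composition closure.
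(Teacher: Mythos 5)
Your proof is correct and is exactly the argument the paper has in mind: the paper omits the details, noting only that the claims follow from Proposition~\ref{prop:basesec}~(v), and your use of composition closure of the subcategory $\Fe$ for (i) together with the contraction property for (ii) is the standard way to fill that in. Your preliminary remark that $\Lambda(b_1)\in B(\Ve_2)$ and $\Theta(b_2)\in B(\Ve_3)$ (so the left-hand distances are well defined) is a sensible piece of bookkeeping that the paper leaves implicit.
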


We now turn to the main result of this section, contained in Theorem \ref{thm:rand} below. Note that since the elements
$\varphi\in \Ve^{*+}$, $\psi\in \We^{*+}$ in (ii) and (iii) of this theorem can 
always be normalized and the positive part of the unit ball of $\|\cdot\|_{\Ve^*}$ coincides with the set of all tests on 
 $\Be(\Ve)$, this theorem says that the  conversion distance  can be characterized by comparing the probabilities of the 
"yes" outcome for certain tests applied to $b_1$ and $b_2$.

 For the proof of Theorem \ref{thm:rand},
we will need the following minimax theorem (cf. \cite[Thm.48.5]{strasser1985mathematical}).

\begin{thm}[Minimax theorem]\label{thm:minimax} Let $T$ be a convex and compact subset of a locally convex space and let $Y$ be a convex
subset of a vector space. Let $f:T\times Y\to \mathbb R$ be convex in $y$ and continuous and concave in $t$. Then
\[
\inf_{y\in Y}\sup_{t\in T} f(t,y)=\sup_{t\in T}\inf_{y\in Y} f(t,y).
\]

\end{thm}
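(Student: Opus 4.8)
The plan is to prove the two inequalities separately. The inequality $\sup_{t}\inf_{y} f(t,y)\le \inf_{y}\sup_{t} f(t,y)$ (``weak duality'') holds with no hypotheses at all: for every $t\in T$ and $y\in Y$ one has $\inf_{y'} f(t,y')\le f(t,y)\le \sup_{t'} f(t',y)$, and taking $\sup_t$ on the left and then $\inf_y$ on the right gives the claim. So the whole content is the reverse inequality
\[
\inf_{y\in Y}\sup_{t\in T} f(t,y)\le \sup_{t\in T}\inf_{y\in Y} f(t,y),
\]
which I would argue by contradiction.

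Suppose $v^\ast:=\sup_t\inf_y f(t,y) < v := \inf_y\sup_t f(t,y)$, and fix $c$ with $v^\ast<c<v$. First I would use compactness of $T$ to reduce the infinite family of constraints indexed by $Y$ to a finite one. Since $v^\ast<c$, for every $t$ there is some $y$ with $f(t,y)<c$, so the sets $K_y:=\{t\in T: f(t,y)\ge c\}$ have empty intersection over $y\in Y$; each $K_y$ is closed because $f(\cdot,y)$ is continuous and convex because $f(\cdot,y)$ is concave. By compactness of $T$ and the finite intersection property there exist $y_1,\dots,y_n\in Y$ with $\bigcap_{i=1}^n K_{y_i}=\emptyset$, which is exactly the statement $\sup_{t}\min_i f(t,y_i)\le c$: no $t$ makes all $n$ values at least $c$.

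The heart of the argument is then a finite-dimensional separation. Consider $F(t)=(f(t,y_1),\dots,f(t,y_n))\in\mathbb R^n$ and the down-closed set $S:=\{F(t): t\in T\}-\mathbb R^n_+$. Concavity of each $f(\cdot,y_i)$ together with convexity of $T$ makes $S$ convex, while compactness of $T$ with continuity makes $\{F(t): t\in T\}$ compact, so $S$ is closed. The previous step says the point $c\mathbf 1=(c,\dots,c)$ does not lie in $S$, so I can separate it: there is a nonzero $a\in\mathbb R^n$ with $\sup_{s\in S}\langle a,s\rangle<\langle a,c\mathbf 1\rangle$. Down-closedness of $S$ forces $a\ge 0$, and after rescaling so that $\sum_i a_i=1$ this reads $\sup_t\sum_i a_i f(t,y_i)<c$. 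Finally I would use convexity in the $y$-variable: the point $\bar y:=\sum_i a_i y_i$ lies in $Y$ (convex), and convexity of $f(t,\cdot)$ gives $f(t,\bar y)\le\sum_i a_i f(t,y_i)$, whence $\sup_t f(t,\bar y)<c<v$, contradicting $v=\inf_y\sup_t f(t,y)$.

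The point requiring care is the geometry of the level sets: concavity of $f(\cdot,y)$ yields convex \emph{super}level sets $\{f\ge c\}$ (needed for the finite intersection step) but not convex sublevel sets, so the separation must be carried out on the down-closure $S$ of the image $\{F(t)\}$ rather than on a naive sublevel set, and one must verify that $S$ is simultaneously convex (from concavity plus convexity of $T$) and closed (from compactness of $T$ plus continuity in $t$) so that $c\mathbf 1$ can be \emph{strictly} separated. A cleaner but essentially equivalent alternative is to invoke the finite-dimensional von Neumann minimax theorem on $T\times\Delta_n$ (with $\Delta_n$ the probability simplex) applied to $(t,\lambda)\mapsto\sum_i\lambda_i f(t,y_i)$; I would prefer the direct separation since it uses only the Hahn--Banach separation theorem in $\mathbb R^n$ and keeps the proof self-contained.
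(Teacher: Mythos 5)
Your proof is correct. Note that the paper does not actually prove this theorem: it is quoted as a known result with a citation to Strasser's \emph{Mathematical Theory of Statistics} (Thm.~48.5), so there is no in-paper argument to compare against. Your write-up is the standard Kneser/Ky Fan-type proof of such concave--convex minimax theorems, and every step checks out: weak duality is trivial; the compactness-plus-finite-intersection reduction to finitely many $y_i$ is sound (the closedness of $K_y$ from continuity in $t$ is what matters there -- the convexity of $K_y$ you mention is true but never used); the set $S=F(T)-\mathbb R^n_+$ is indeed convex (concavity of each $f(\cdot,y_i)$ plus convexity of $T$) and closed (compact plus closed set), and $c\mathbf 1\notin S$ is exactly the empty-intersection statement, so strict separation applies; down-closedness forces the separating functional into the positive orthant, and convexity in $y$ converts the resulting convex combination $\sum_i a_i f(t,y_i)$ into a single $\bar y\in Y$ violating the definition of $v$. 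Two cosmetic remarks: after the finite-intersection step you actually get $\min_i f(t,y_i)<c$ for every $t$ (and hence, by compactness, $\sup_t\min_i<c$), but the weaker $\le$ you state is irrelevant since all you need downstream is $c\mathbf 1\notin S$; and your closing observation about why one must separate on the down-closure rather than on sublevel sets is exactly the right point of care. The alternative you mention (reducing to von Neumann's theorem on $T\times\Delta_n$) would be equally acceptable, but your direct separation argument is self-contained and fills a gap the paper leaves to the literature.
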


\begin{thm} \label{thm:rand}  Let $\Ve_1,\Ve_2\in \Fe$ and let $b_1\in B(\Ve_1)$, $b_2\in
B(\Ve_2)$. Let $\epsilon\ge 0$. Then the following are equivalent.
\begin{enumerate}
\item[(i)] $\delta_\Fe(b_1\|b_2)\le \epsilon$;
\item[(ii)] for all $\varphi\in \Ve_2^{*+}$, there is some $\Theta\in \bar{\Fe}(\Ve_1,\Ve_2)$ such that
\[
\<\varphi,b_2\>\le \<\varphi,\Theta(b_1)\>+ \frac{\epsilon}2\|\varphi\|_{\Ve^*},
\]
here $\bar{\Fe}(\Ve_1,\Ve_2)$ is the closure of $\Fe(\Ve_1,\Ve_2)$ in $\mathsf{BS}(\Ve_1,\Ve_2)$;
\item[(iii)] for all $\We\in \Fe$ and all $\psi\in \We^{*+}$, we have
\[
\sup_{\Theta\in \Fe(\Ve_2,\We)} \<\psi,\Theta(b_2)\>\le \sup_{\Theta'\in \Fe(\Ve_1,\We)}\<\psi, \Theta'(b_1)\>+ 
\frac{\epsilon}2\|\psi\|_{\We^*}.
\]
\end{enumerate}

\end{thm}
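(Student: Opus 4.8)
The plan is to prove the two equivalences (i) $\Leftrightarrow$ (ii) and (ii) $\Leftrightarrow$ (iii) separately, the first being the analytic core (distinguishability formula plus minimax) and the second a formal manipulation with adjoint maps. Before starting I would record two facts about the dual base section. By Proposition \ref{prop:basesec}(iii) applied to $\Ve_2^*$, the effect set $[0,B(\Ve_2^*)]$ is precisely the positive part of the closed unit ball of $\|\cdot\|_{\Ve_2^*}$; this is what will let me rescale between normalized and arbitrary positive functionals. Second, axiom (ii) of a base section forces $B(\Ve_2^*)$, hence $[0,B(\Ve_2^*)]$, to be \emph{compact}: an interior point $b_0\in B(\Ve_2)\cap\mathrm{int}(\Ve_2^+)$ bounds every $\varphi\in B(\Ve_2^*)$ through $\langle\varphi,b_0\rangle=1$. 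Compactness of the test set is exactly the hypothesis needed to run the minimax theorem.

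For (i) $\Leftrightarrow$ (ii): since $\Lambda(b_1)\in B(\Ve_2)$ for every morphism $\Lambda$, Proposition \ref{prop:basesec}(iv) gives
\[
\tfrac12\|\Lambda(b_1)-b_2\|_{\Ve_2}=\max_{\varphi\in[0,B(\Ve_2^*)]}\langle\varphi,b_2-\Lambda(b_1)\rangle,
\]
so that $\tfrac12\delta_\Fe(b_1\|b_2)=\inf_{\Lambda\in\Fe(\Ve_1,\Ve_2)}\max_{\varphi}\langle\varphi,b_2-\Lambda(b_1)\rangle$. The function $(\varphi,\Lambda)\mapsto\langle\varphi,b_2-\Lambda(b_1)\rangle$ is affine in each argument, the test set is convex and compact, and $\Fe(\Ve_1,\Ve_2)$ is convex by the convex-subcategory hypothesis, so Theorem \ref{thm:minimax} allows exchanging the infimum and the maximum. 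Thus $\delta_\Fe(b_1\|b_2)\le\epsilon$ becomes $\langle\varphi,b_2\rangle\le\sup_{\Lambda}\langle\varphi,\Lambda(b_1)\rangle+\tfrac{\epsilon}{2}$ for every effect $\varphi$. Positive homogeneity in $\varphi$ (writing $\varphi=\|\varphi\|_{\Ve_2^*}\hat\varphi$ with $\hat\varphi$ a normalized effect) upgrades this to the inequality of (ii) for all $\varphi\in\Ve_2^{*+}$, and the supremum equals a maximum over the compact closure $\bar\Fe(\Ve_1,\Ve_2)$, attained at the $\Theta$ demanded in (ii). Every step is reversible, so this is an equivalence.

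For (ii) $\Rightarrow$ (iii): fix $\We\in\Fe$, $\psi\in\We^{*+}$ and $\Theta\in\Fe(\Ve_2,\We)$. As $\Theta$ is a morphism, its adjoint satisfies $\Theta^*(\We^{*+})\subseteq\Ve_2^{*+}$, so $\varphi:=\Theta^*\psi\in\Ve_2^{*+}$, and dualizing the contraction of Proposition \ref{prop:basesec}(v) gives $\|\Theta^*\psi\|_{\Ve_2^*}\le\|\psi\|_{\We^*}$. Applying (ii) to $\varphi$ yields $\Xi\in\bar\Fe(\Ve_1,\Ve_2)$ with
\[
\langle\psi,\Theta(b_2)\rangle=\langle\varphi,b_2\rangle\le\langle\varphi,\Xi(b_1)\rangle+\tfrac{\epsilon}{2}\|\varphi\|_{\Ve_2^*}=\langle\psi,(\Theta\circ\Xi)(b_1)\rangle+\tfrac{\epsilon}{2}\|\varphi\|_{\Ve_2^*}.
\]
Since $\Fe$ is closed under composition (and closure), $\Theta\circ\Xi\in\bar\Fe(\Ve_1,\We)$, so the middle term is at most $\sup_{\Theta'\in\Fe(\Ve_1,\We)}\langle\psi,\Theta'(b_1)\rangle$; combining with $\|\varphi\|_{\Ve_2^*}\le\|\psi\|_{\We^*}$ and taking the supremum over $\Theta$ gives (iii). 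The converse (iii) $\Rightarrow$ (ii) is the special case $\We=\Ve_2$: since the identity lies in $\Fe(\Ve_2,\Ve_2)$, the left-hand supremum dominates $\langle\varphi,b_2\rangle$, and extracting the maximizer of the right-hand supremum over $\bar\Fe(\Ve_1,\Ve_2)$ recovers (ii) verbatim.

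The main obstacle is the minimax exchange underlying (i) $\Leftrightarrow$ (ii): one must secure genuine compactness of the test set, which is exactly where the base-section axiom $B\cap\mathrm{int}(\Ve^+)\ne\emptyset$ enters essentially, and convexity of $\Fe(\Ve_1,\Ve_2)$, supplied by the convex-subcategory hypothesis. Everything downstream — the rescaling in $\varphi$, attainment of suprema over the compact set $\bar\Fe$, and the adjoint and composition bookkeeping in (ii) $\Leftrightarrow$ (iii) — is routine.
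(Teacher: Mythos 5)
Your proof is correct, and its analytic core is the paper's: Proposition \ref{prop:basesec}(iv) converts the distance into a supremum of the linear pairing over the compact convex effect set $[0,B(\Ve_2^*)]$, the minimax theorem (Theorem \ref{thm:minimax}) exchanges $\inf_\Lambda$ and $\sup_\varphi$, and positive homogeneity plus attainment of the supremum on the compact closure $\bar\Fe(\Ve_1,\Ve_2)$ gives the statement for all $\varphi\in\Ve_2^{*+}$. The only structural difference is in the easy leg. The paper runs the cycle (i) $\Rightarrow$ (iii) $\Rightarrow$ (ii) $\Rightarrow$ (i), proving (i) $\Rightarrow$ (iii) by inserting a near-optimal $\Lambda_0$, splitting $\<\psi,\Theta(b_2)\>=\<\psi,\Theta\circ\Lambda_0(b_1)\>+\<\psi,\Theta(b_2-\Lambda_0(b_1))\>$, and bounding the error term with the primal contraction $\|\Theta(x)\|_{\We}\le\|x\|_{\Ve_2}$ of Proposition \ref{prop:basesec}(v) together with the norm duality. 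You instead establish (i) $\Leftrightarrow$ (ii) in one stroke from the minimax equality and then prove (ii) $\Rightarrow$ (iii) by pulling $\psi$ back to $\varphi=\Theta^*\psi$ and using the dual contraction $\|\Theta^*\psi\|_{\Ve_2^*}\le\|\psi\|_{\We^*}$; this is the dual formulation of the same estimate, resting on the same two ingredients (composition in $\Fe$, resp.\ $\bar\Fe$, and Proposition \ref{prop:basesec}(v)). Neither route is stronger: yours gets (i) $\Leftrightarrow$ (ii) as a single identity, while the paper reaches (iii) from (i) without invoking minimax at all. Your explicit justification of compactness of the effect set via an order unit $b_0\in B(\Ve_2)\cap\mathrm{int}(\Ve_2^+)$, and of the identity morphism lying in $\Fe(\Ve_2,\Ve_2)$ for the (iii) $\Rightarrow$ (ii) step, fills in details the paper treats as immediate; both are sound.
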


\begin{proof} Let $\epsilon'>\epsilon$ and let  $\Lambda_0\in \Fe(\Ve_1,\Ve_2)$ be such that $\|\Lambda_0(b_1)-b_2\|_{\Ve_2}\le
\epsilon'$.  Let $\We\in \Fe$ and let  $\psi\in \We^{*+}$. For any $\Theta\in \Fe(\Ve_2,\We)$, we have
\begin{align*}
\<\psi,\Theta(b_2)\>&=\<\psi,\Theta\circ\Lambda_0(b_1)\>+\<\psi, \Theta(b_2-\Lambda_0(b_1))\>\\
&\le \sup_{\Theta'\in \Fe(\Ve_1,\We)}\<\psi,\Theta'(b_1)\>+\frac12\| \Theta(b_2-\Lambda_0(b_1))\|_{\We}\|\psi\|_{\Ve^*}\\
&\le \sup_{\Theta'\in \Fe(\Ve_1,\We)}\<\psi,\Theta'(b_1)\>+\frac{\epsilon'}2\|\psi\|_{\Ve^*},
\end{align*}
where we have used Prop. \ref{prop:basesec} (iv) and (v). Since this holds for all $\Theta\in \Fe(\Ve_2,\We)$ and
$\epsilon'>\epsilon$, this proves that (i) implies (iii). Since (iii) obviously implies (ii), it is enough to prove (ii)
$\implies$ (i).

 By Prop. \ref{prop:basesec}  (iv), we have
\[
\frac12\delta_\Fe(b_1\|b_2)=\inf_{\Lambda\in \Fe(\Ve_1,\Ve_2)}\sup_{\varphi\in [0,B(\Ve^*_2)]}\<\varphi,b_2-\Lambda(b_1)\>.  
\]
Note that by  Prop. \ref{prop:basesec} (i), 
the set  $[0,B(\Ve_2^*)]=\Ve_2^{*+}\cap [-B(\Ve_2^*),B(\Ve^*_2)]$ is convex and  compact  and the map 
$(\varphi,\Lambda)\mapsto \<\varphi,b_2-\Lambda(b_1)\>$ is linear in both components, 
so that we may apply the minimax theorem (Thm. \ref{thm:minimax}). Assume that (ii) holds, then 
\begin{align*}
\frac12\delta_\Fe(b_1\|b_2) &= \sup_{\varphi\in [0,B(\Ve^*_2)]}\inf_{\Lambda\in
\Fe(\Ve_1,\Ve_2)}\<\varphi,b_2-\Lambda(b_1)\>\\
&= \sup_{\varphi\in [0,B(\Ve^*_2)]}\left( \<\varphi,b_2\>- \sup_{\Lambda\in \Fe(\Ve_1,\Ve_2)}
\<\varphi,\Lambda(b_1)\>\right)\le \frac{\epsilon}2.
\end{align*}

\end{proof}

We next some examples of an application of Theorem \ref{thm:rand}, the first two of which show the relation to 
the theory of comparison of statistical experiments. In these examples, $\Ve\in \mathsf{BS}$ is
such that $K=B(\Ve)$ is a base of $\Ve^+$. The dual object $\Ve^*$ is the order unit space $(\Ve^*,\Ve^{*+},u)$, where
$u\in \Ve^{*+}$ is the functional determined by $\<u,x\>=1$ for all $x\in K$.

\begin{ex}[Statistical experiments] \label{ex:experiments}  A (finite) statistical experiment in $\Ve$ is a finite set $x_1,\dots,x_k\in K$.
The set of all experiments with fixed  $k$ and $\Ve$ is an object in $\mathsf{BS}$. Indeed, let $\Ve^k=\oplus_{i=1}^k \Ve$ and
$\Ve^{k+}=\oplus_{i=1}^k \Ve^+$, then $K^k=\oplus_{i=1}^k K$ is a base section in 
the ordered vector space $(\Ve^k, \Ve^{k+})$. 
As for the dual object, $\Ve^{k*}=\oplus_{i=1}^k \Ve^*$ and 
$\Ve^{k*+}=\oplus_{i=1}^k \Ve^{*+}$. Moreover, it is easily checked that the dual section is
\[
\Be(\Ve^{k*})=\tilde K^k=\{(p_1u,\dots,p_ku),\ p_i\in [0,1], \ \sum_ip_i=1\}
\]
and 
\begin{align*}
\|v\|_{\Ve^k}&=\max_i \|v_i\|_\Ve, \qquad v=(v_i)\in \Ve^k\\
\|\psi\|_{\Ve^{k*}}&=\sum_i \|\psi_i\|_{\Ve^*},\qquad \psi=(\psi_i)\in \Ve^{k*}.
\end{align*}
Let $\mathsf F$ be the subcategory whose objects are statistical experiments with fixed $k$ and morphisms in $\mathsf F(\Ve^k,\mathcal W^k)$ are given 
by $\Phi^k$ with $\Phi\in \mathcal F(\Ve,\We)\subseteq \mathsf{BS}(\Ve,\mathcal W)$ for some convex subset $\mathcal
F(\Ve,\We)$. 

Let $x=(x_i)\in B(\Ve^k)$, 
$y=(y_i)\in B(\mathcal W^k)$, then the $\Fe$-conversion distance has the form 
\[
\delta_{\mathsf F}(x\|y)=\inf_{\Phi\in \mathcal F} \max_i\|\Phi(x_i)-y_i\|_\Ve.
\]
Under an obvious normalization, Theorem \ref{thm:rand} (iii) says that $\delta_{\mathsf F}(x\|y)\le \epsilon$ if and only if for any $(\psi_i)\in
 \Ve^{k*+}$ with $\sum_i\|\psi_i\|_{\Ve^*}\le 1$, we have
\[
\sup_{\Phi\in \mathcal F}\sum_i \<\psi_i,\Phi(y_i)\>\le \sup_{\Phi'\in \mathcal F}\sum_i
\<\psi_i,\Phi'(y_i)\>+\frac{\epsilon}2.
\]
Since the norm $\|\cdot\|_{\Ve^*}$ is the order unit norm with respect to $u$, any $(\psi_i)$ of the above form
satisfies $0\le \psi_i\le \|\psi_i\|_{\Ve^*}u$  and  $\sum_i\psi_i\le \sum_i\|\psi_i\|_{\Ve^*}u\le u$. Adding a positive
element $\psi_{k+1}=u-\sum_i \psi_i$, the collection $(\psi_i)_{i=1}^{k+1}$ becomes a measurement with $k+1$ outcomes
and the value $\frac1k\sum_i \<\psi_i,\Phi(y_i)\>=: \tilde P_{\succ}(\Ee,(\psi_i))$ becomes the success probability for
the ensemble $\Ee=\{\frac1k,\Phi(y_i)\}$ in the {inconclusive} discrimination with the measurement $(\psi_i)$, here
$\psi_{k+1}$ represents the inconclusive outcome. Now we see that $\delta_{\mathsf F}(x\|y)\le \epsilon$ if and only if
 for any measurement $(\psi_i)_{i=1}^{k+1}$,
\[
\sup_{\Phi\in \mathcal F}\tilde P_{\succ}(\{\frac1k,\Phi(y_i)\},(\psi_i))\le\sup_{\Phi'\in \mathcal F}\tilde
P_{\succ}(\{\frac1k,\Phi'(x_i)\},(\psi_i))+\frac{\epsilon}{2k},  
\]
extending the result \cite[Cor. 15]{takagi2019general}.

\end{ex}

\begin{ex}[Le Cam randomization criterion for classical statistical experiments]\label{ex:lecam} In the setting of the 
previous example, let us further restrict $\mathsf F$ to $\Ve^k$ for classical state spaces $\Ve$ (see Example
\ref{ex:cl_q}) and let $\mathcal F$ be 
the set of all stochastic maps. Here the objects of $\Fe$ are sets of classical statistical experiments with
$k$ elements. If $(p^i)$ is such an experiment and $q^i=\Phi(p^i)$ for some stochastic map $\Phi$, we say that $(q^i)$
is a randomization of $(p^i)$, so that $\Fe$ is the category of (finite dimensional) classical statistical experiments with
randomizations. Moreover, $\delta_\Fe((p^i)\|(q^i))$ is the Le Cam deficiency of $(p^i)$ with respect to $(q^i)$ and 
 $\Delta_\Fe$ becomes the the Le Cam distance \cite{lecam1964sufficiency}.  

The basic idea of comparison of statistical experiments that goes back to Blackwell \cite{blackwell1951comparison}
is to compare classical statistical experiments
by the performance of decision rules. Here the task is to choose a decision $d$ from a finite set $\{1,\dots, D\}$ 
using data that is known to be drawn according to one of the distributions in  $\{p^1,\dots,p^k\}\subset \Delta_m$.
The decision rules are given by stochastic maps $\Theta:\Delta_m\to \Delta_D$, where $\Theta(p)(d)$ is the probability
that $d$ is chosen if the data was sampled from the distribution  $p\in\Delta_m$. 
To each pair $i=1,\dots,k$ and $d=1,\dots,D$ a value $g(i,d)\ge 0$ is assigned expressing the gain obtained if $d$ was
chosen while the true distribution was $p^i$. Under a prior distribution $\lambda=(\lambda_1,\dots,\lambda_k)$,
the average gain of the decision rule $\Theta$ is given by
\[
G(\{p^i\}, \lambda, g, \Theta)=\sum_i\sum_d\lambda_i \Theta(p^i)(d)g(i,d)=\<\psi, \Theta(\{p^i\})\>,
\]   
where $\psi\in (\mathrm R_+^D)^k$ is given by $\psi^i_{d}=\lambda_ig(i,d)$, note also that we have 
$\|\psi\|_{\Ve^{k*}}=\sum_i\lambda_i\max_dg(i,d)$. The celebrated Le Cam randomization criterion
\cite{lecam1964sufficiency} says that the
deficiency of the experiment $(p^i)$ with respect to $(q^i)$ can be obtained by comparing 
 the optimal values of $G$ achievable by the two 
experiments, more precisely that $\delta_\Fe((p^i)\|(q^i))\le \epsilon$
if and only if for all gain functions $g$ and prior distributions
$\lambda$, we have
\[
\sup_{\Theta}G(\{q^i\},\lambda,g,\Theta)\le
\sup_{\Theta'}G(\{p^i\},\lambda,g,\Theta')+\frac{\epsilon}2\sum_i\lambda_i\max_d g(i,d). 
\]
In the finite dimensional setting, this is precisely the statement of Theorem \ref{thm:rand}. 
Therefore, Theorem \ref{thm:rand} can be seen as the most general GPT form of the Le Cam randomization theorem.
 
Similarly, restricting the objects $\Ve$ to quantum state spaces and letting $\mathcal
F$ be the set of all quantum channels, we obtain the quantum version of the randomization criterion, cf.
\cite{matsumoto2010aquantum}.

\end{ex}

\begin{ex}[Measurements]  \label{ex:measurements} A measurement (with $k$ outcomes) is a collection $M=(M_i)\in \Ve^{k*+}$, $\sum_i M_i=u$.
It easy to see that the set $\mathcal M_k(\Ve)$ of all such  measurements  
is a base section in $(\Ve^{k*}, \Ve^{k*+})$ we will denote this object of $\mathsf{BS}$  
by $\Ve^k_{\mathcal M}$.
The dual object  is $\Ve^k,\Ve^{k+}$ with the base section
\[
\tilde {\mathcal M}_k(\Ve)=\{(x,\dots,x),\ x\in K\}
\]
Note that for  $v\in \Ve^{k+}$, the dual norm is
\[
\|v\|_{\Ve^{k*}_{\mathcal M}}= \max_{M\in \mathcal M_k(\Ve)} \sum_i \<M_i,v_i\>.
\]
Dividing $v$ by $c:=\sum_i \<u,v_i\>$ and noting that $c^{-1}v_i= \lambda_ix_i$ with $x_i\in K$
and some probabilities $\lambda_i$, we obtain an ensemble: $\mathcal E:=\{\lambda_i,x_i\}$ such that
\[
\|v\|_{\Ve^{k*}_{\mathcal M}}=cP_{\succ}(\mathcal E).
\]
Again, let $\Fe$ be the subcategory with objects $\Ve_\Me^k$ (with $k$ fixed) and  morphisms $\Phi^{*k}:\Ve_\Me^k\to
\mathcal W_\Me^k$ 
for $\Phi\in \mathcal F(\We,\Ve)\subseteq \mathsf{BS}(\mathcal W,\Ve)$.
The Theorem \ref{thm:rand} tells us that $\delta_{\mathsf F}(M\|N)\le \epsilon$ if and only if, for any ensemble $\mathcal E$ on $\Ve$,
\[
P_{\succ}(\mathcal E,N)\le \sup_{\Phi\in \mathcal F} P_{\succ}(\Phi(\mathcal E),N)+\frac{\epsilon}2 P_{\succ}(\mathcal E),
\]
extending the result of \cite[Thm. 14]{takagi2019general}.

\end{ex}

More examples will be treated in the next section.

\section{Comparison of quantum channels}\label{sec:comp_channels}

In this section we will present the sets of quantum channels and superchannels as objects in $\mathsf{BS}$ and 
show how Theorem \ref{thm:rand}  applies, under some conditions  on the subcategory $\Fe$. 
We need some preparation first.

\subsection{Basic ingredients}

 In what follows, $\Ha_A,\Ha_B,\dots$ will always denote a finite dimensional Hilbert space, labelled by the system it represents. The Hilbert space will often be referred to by its label, so we denote by 
$\Be(A)$ the set of bounded operators on $\Ha_A$, similarly, $\Be_h(A)$ denotes the set of self-adjoint operators,
 $\Be_+(A)$ the set of positive operators  and $\states(A)$ the set of states on $\Ha_A$. We will also put
 $d_A:=\dim(\Ha_A)$ and $I_A$ denotes the  identity operator on $\Ha_A$. The trivial Hilbert space $\mathbb C$ will be labeled by 1.

For $W\in \Be(A_0)$, we will use
the notation (cf. \cite{chiribella2009theoretical})
\begin{equation}\label{eq:dobleket}
|W\rrangle:= \sum_i W|i\>_{A_0}\otimes |i\>_{A_0}=\sum_i|i\>_{A_0}\otimes W^{\mathsf{T}}|i\>_{A_0},
\end{equation}
here $W^{\mathsf{T}}$ denotes the transpose of $W$ in the standard basis $\{|i\>\}$.

\subsubsection{Linear maps  and Choi representation}
Let $\Le(A_0,A_1)$ denote the set of hermitian linear maps $\Be(A_0)\to \Be(A_1)$, that is, linear maps satisfying
\[
\Phi(X^*)=\Phi(X)^*,\qquad X\in \Be(A_0).
\] 
Let $\Le_+(A_0, A_1)$ denote the subset of completely positive maps in $\Le(A_0,A_1)$ and $\Ce(A_0,A_1)$ the set of
quantum channels, that is, trace preserving maps in $\Le_+(A_0,A_1)$.

The Choi matrix of $\Phi\in \Le(A_0, A_1)$ is defined as $C_\Phi:= (\Phi\otimes id_{A_0})(|I_{A_0}\doublek I_{A_0}|)$. 
The map $\Phi\mapsto C_\Phi$  establishes a linear isomorphism between  $\Le(A_0,A_1)$ and $\Be_h(A_1A_0)$ that maps 
$\Le_+(A_0,A_1)$ onto $\Be_+(A_1A_0)$ and  $\Ce(A_0, A_1)$ onto the set
\[
\{X\in \Be_+(A_1A_0),\ \ptr_{A_1}[X]=I_{A_0}\}.
\]

\subsubsection{Diagrams}

We will make use of the common diagrammatic representation of  maps in $\Le(A_0, A_1)$ as 
\begin{center}
\includegraphics{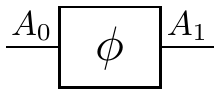} 
\end{center}
If some of the systems is trivial, the corresponding wire will be omitted.
The special symbols
\begin{center}
\includegraphics{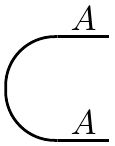} \hspace{1cm} \includegraphics{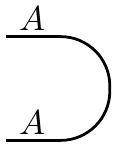}
\end{center}
will represent $|I_{A}\doublek I_{A}|$ as  a preparation (a map $1\to AA$)  and as an effect (a map
$AA\to 1$)  respectively. 
In this way, we may write the Choi isomorphism and its inverse as
\begin{center}
\begin{minipage}[c]{0.2\textwidth}
\centering
\includegraphics{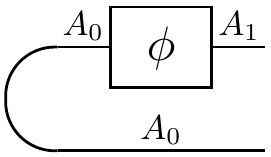}
\end{minipage}
\begin{minipage}[c]{0.01\textwidth}
 = 
\end{minipage}
\begin{minipage}[c]{0.2\textwidth}
\centering
\includegraphics{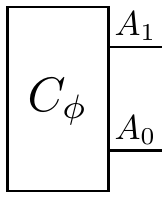}
\end{minipage}
\hspace{20pt}
\begin{minipage}[c]{0.2\textwidth}
\centering
\includegraphics{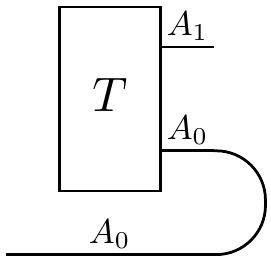}
\end{minipage}
\begin{minipage}[c]{0.01\textwidth}
 = 
\end{minipage}
\begin{minipage}[c]{0.2\textwidth}
\centering
\includegraphics{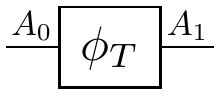}
\end{minipage}

\end{center}
We will use similar symbols for the maximally entangled state $\psi^{A}:=d_{A}^{-1}|I_A\doublek I_A|$:
\begin{center}
\includegraphics{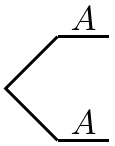} \hspace{1cm} \includegraphics{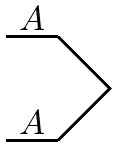}
\end{center}

\subsubsection{The link product}
The Choi matrix of a composition of maps is given by the link product of the respective Choi matrices,
\cite{chiribella2009theoretical}. 
For general multipartite matrices $X\in \Be(AB)$ and $Y\in \Be(BC)$, the link product is defined as
\[
X*Y=\ptr_{B} [(X\otimes I_C)(I_A\otimes Y^{\mathsf{T}_B})]
\]
here $(\cdot)^{\mathsf{T}_B}$ denotes the partial transpose on the system $B$. Diagrammatically:
\begin{center}
\begin{minipage}[c]{0.2\textwidth}
\centering
\includegraphics{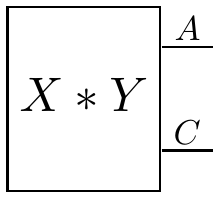}
\end{minipage}
\begin{minipage}[c]{0.01\textwidth}
 = 
\end{minipage}
\begin{minipage}[c]{0.2\textwidth}
\centering
\includegraphics{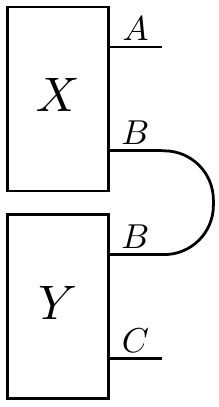}
\end{minipage}
\end{center}

The link product is commutative (up to the order of the
spaces) and associative provided that the three matrices have no labels in common. The order of the spaces is not taken
into account, applying an appropriate unitary conjugation swapping the spaces in the tensor products if necessary, so,
for example, if $X\in \Be(AB_1B_2)$ and $Y\in \Be(B_2B_1C)$, then 
\begin{equation}\label{eq:swaplink}
X*Y\equiv X*\mathcal U_{B_1,B_2}(Y),
\end{equation}
where $\mathcal U_{B_1,B_2}$ is the conjugation by the unitary swap $ U_{B_1,B_2}:\Ha_{B_1B_2}\to \Ha_{B_2B_1}$.

\subsubsection{Superchannels and 2-combs}
A quantum superchannel is a special type of causal quantum network that transforms channels into channels,
with possibly different input and output systems. Any superchannel $\Lambda$ that maps $\Ce(A_0,A_1)$ into
$\Ce(A'_0,A'_1)$ is a channel in $\Ce(A'_0A_1, A'_1A_0)$, consisting of a pre-processing channel $\Lambda_{\pre}\in
\Ce(A'_0,RA_0)$ and a post-processing channel $\Lambda_{\post}\in \Ce(RA_1,A'_1)$, where $R$ is some ancilla
\cite{chiribella2009theoretical}.  We will write
$\Lambda=\Lambda_{\pre}*\Lambda_{\post}$ for this concatenation of channels. The set of all such superchannels will be
denoted by $\Ce_2(A,A')$, where we used the abbreviation $A=A_0A_1$, $A'=A'_0A'_1$. Diagrammatically, $\Lambda$ can be represented
as
\begin{center}
\begin{minipage}[c]{0.4\textwidth}
\centering
\includegraphics{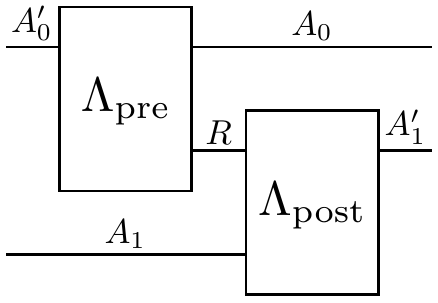}
\end{minipage}
\begin{minipage}[c]{0.05\textwidth}
 = 
\end{minipage}
\begin{minipage}[c]{0.4\textwidth}
\centering
\includegraphics{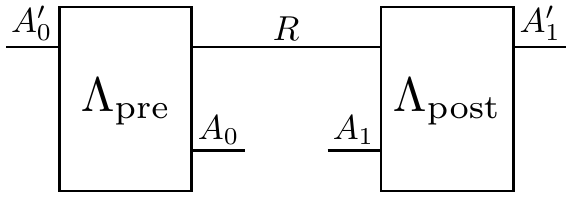}
\end{minipage}
\end{center}
and acts on a map $\phi$ as  
\begin{center}
\includegraphics{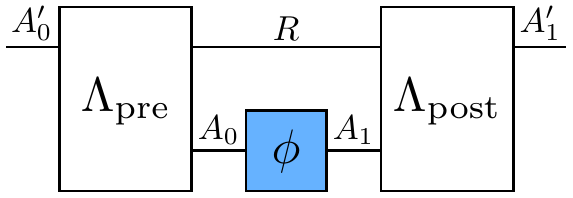}
\end{center}

The Choi matrices of superchannels are called 2-combs in \cite{chiribella2009theoretical}. Using the link product and
its properties, we have $C_\Lambda=C_{\Lambda_{\pre}}*C_{\Lambda_{\post}}$ and 
$C_{\Lambda(\Phi)}=C_\Lambda*C_\Phi=C_{\Lambda_{\pre}}*C_\Phi*C_{\Lambda_{\post}}$. An element $C\in \Be_+(A'_1A_0A'_0A_1)$
is a 2-comb if and only if
\begin{equation}\label{eq:comb}
\ptr_{A'_1}[C]= I_{A_1}\otimes C_2,\qquad \ptr_{A_0}[C_2]=I_{A'_0},
\end{equation}
which means that  $C_2$ is the Choi matrix of some channel in $\Ce(A'_0,A_0)$.

\subsubsection{Diamond norm and the conditional min-entropy} \label{sec:diamond}

It is not difficult to see that $\Le(A_0,A_1)$ with the cone $\Le_+(A_0,A_1)$ and 
$B(\Le(A_0,A_1))=\Ce(A_0,A_1)$ is an object in $\mathsf{BS}$, similarly for the set of superchannels. It was observed in
\cite{jencova2014base} that
in these cases the structures described in Section \ref{sec:base}
yield some well known quantities. This will be discussed in the present and the next section, see \cite{jencova2014base}
for more details.

We will use the identification of the dual space $\Le^*(A_0,A_1)$ with  $\Be_h(A_0A_1)$, with duality for
$X\in\Be_h(A_0A_1)$ and $\phi\in \Le(A_0,A_1)$ given by
\begin{align}
\<X,\phi\>&:= \llangle I_{A_1}|
(\phi\otimes id)(X)|I_{A_1}\rrangle = \Tr[XC_{\phi^*}] = \Tr[XC_{\phi^{\mathsf{T}}}^{\mathsf{T}}]  \notag\\
&=X*C_{\phi^{\mathsf{T}}}=X*C_\phi. \label{eq:dualitylink}
\end{align}
Diagrammatically, this can be expressed as
\begin{center}
\begin{minipage}[c]{0.1\textwidth}
\centering
$\<X,\phi\> =$
\end{minipage}
\begin{minipage}[c]{0.3\textwidth}
\centering
\includegraphics{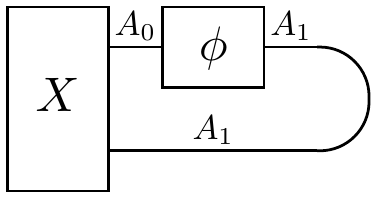}
\end{minipage}
\begin{minipage}[c]{0.01\textwidth}
 = 
\end{minipage}
\begin{minipage}[c]{0.3\textwidth}
\centering
\includegraphics{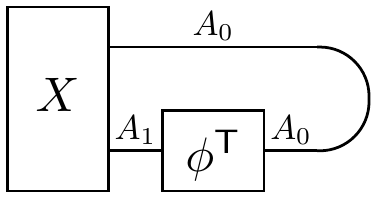}
\end{minipage}

\end{center}

Here  $\phi^*, \phi^{\mathsf{T}} \in \Le(A_1,A_0)$ are maps determined by
\begin{align*}
\Tr[X\phi^*(Y)]=\Tr [\phi(X)Y],\qquad \phi^{\mathsf{T}}(Y)=(\phi^*(Y^{\mathsf{T}}))^{\mathsf{T}}
\end{align*}
for $X\in \Be_h(A_0)$ and $Y\in \Be_h(A_1)$. The last equality in \eqref{eq:dualitylink} follows from
\eqref{eq:swaplink} and 
$C_{\phi^{\mathsf{T}}}=\Ue_{A_1,A_0}(C_\phi)$. Note that by \eqref{eq:dualitylink} we  also have
\begin{equation}\label{eq:dualityswp}  
\<X,\phi\>= \Tr[X\Ue_{A_1,A_0}(C_\phi)^{\mathsf{T}}]=\Tr [\Ue_{A_1,A_0}^*(X^{\mathsf{T}})C_\phi].
\end{equation}

With these identifications, the dual cone is $\Le_+^*(A_0,A_1)\simeq \Be_+(A_0A_1)$ and the dual section
\[
 B(\Le^*(A_0,A_1)) = \tilde\Ce(A_0,A_1)=\{ \sigma_{A_0}\otimes I_{A_1},\ \sigma_{A_0}\in \states(A_0)\}.
\]
The corresponding base section norm is the diamond norm
\[
\|\Phi\|_{\Le(A_0,A_1)}=\|\Phi\|_\diamond:=\max_{\rho\in \states(A_0A_0)} \|(\Phi\otimes id)(\rho)\|_1,\qquad \Phi\in \Le(A_0,A_1),
\] 
well known as the distinguishability norm for quantum channels, \cite{kitaev1997quantum, watrous2018thetheory}.

\begin{rem}\label{rem:dual}
Using the Choi representation, we may also identify  $\Le^*(A_0,A_1)\simeq \Le(A_1,A_0)$, with duality
$\<\cdot,\cdot\>_*$ given as
\[
\<\psi,\phi\>_*:=\<C_\psi,\phi\>=\tau(\phi\circ\psi),
\]
where the functional $\tau:\Le(A_0,A_0)\to \mathbb R$ is given by
\[
\tau(\xi)= \sum_{i,j} \Tr \left[|i\>\<j| \xi(|i\>\<j|)\right]=\llangle I_{A_0}|C_\phi|I_{A_0}\rrangle,
\] 
in diagram
\begin{center}
\begin{minipage}[c]{0.1\textwidth}
\centering
$\tau(\xi) =$
\end{minipage}
\begin{minipage}[c]{0.3\textwidth}
\centering
\includegraphics{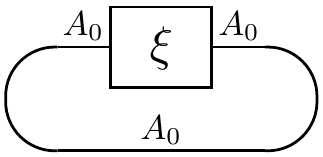}
\end{minipage}
\begin{minipage}[c]{0.01\textwidth}
 = 
\end{minipage}
\begin{minipage}[c]{0.3\textwidth}
\centering
\includegraphics{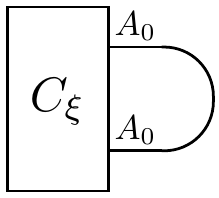}
\end{minipage}
\end{center}
Choosing the Hilbert-Schmidt inner product and the basis $\{|i\>\<j|\}$ in 
 $\Be(A_0)$, we see that $\tau$ is the usual trace of elements in $\Le(A_0,A_0)$ as linear maps. 
In this identification,  the dual section becomes the set of replacement channels in $\Ce(A_1,A_0)$, mapping all states in $\states(A_1)$
to a fixed state $\sigma_{A_0}\in \states(A_0)$. 

\end{rem}

Let us introduce the notation 
\[
\|\cdot\|_{A_1|A_0}^\diamond:=\|\cdot\|_{ \Le^*(A_0,A_1)}
\]
for the dual norm.  By Prop. \ref{prop:basesec} (iii), we have for ${\rho}\in
\Be_+(A_0A_1)$:
\begin{equation}\label{eq:cond_min}
\|{\rho}\|^\diamond_{A_1|A_0}=\min_{\sigma_{A_0}\in \states(A_0)} \min\{\lambda>0,\ {\rho}\le\lambda\sigma_{A_0}\otimes I_{A_1}\}=
2^{-H_{\min}(A_1|A_0)_{{\rho}}}
\end{equation}
where  $H_{\min}$ denotes the conditional min-entropy \cite{renner2008security,konig2009theoperational}. We also have
\begin{equation}\label{eq:cond_min_op}
\|{\rho}\|^\diamond_{A_1|A_0}= \max_{\alpha\in \Ce(A_0,A_1)}\<{\rho},\alpha\>= \max_{\alpha\in \Ce(A_0,A_1)}\llangle I_{A_1}|
(\alpha\otimes id)({\rho})|I_{A_1}\rrangle.
\end{equation}
Note that the last 
 equality  corresponds to the operational interpretation of the conditional min-entropy as (up to multiplication by
 $d_A$)
the maximum fidelity with the maximally entangled state $\psi^{A_1}$ that can be obtained by applying a quantum channel to part $A_0$
of the state ${\rho}_{A_0A_1}$ \cite{konig2009theoperational}.

The following result follows easily from the first equality in \eqref{eq:cond_min}.
\begin{lemma}\label{lemma:dualnorm_inf}
Let $\rho\in \Be_+(A_0A_1)$. Then there is some $V\in\Be(A_0)$, $\Tr[VV^*]=1$, and $G\in\Be_+(A_0A_1)$ such that
\[
\rho=(\chi_V\otimes id)(G),\qquad \|G\|=\|\rho\|^\diamond_{A_1|A_0}.
\]
Here $\chi_V:=V\cdot V^*\in \Le_+(A_0,A_0)$  and $\|\cdot\|$ denotes the operator norm.

\end{lemma}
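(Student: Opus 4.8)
The plan is to read off the construction directly from the first equality in \eqref{eq:cond_min}, which expresses $\lambda^* := \|\rho\|^\diamond_{A_1|A_0}$ as the minimal $\lambda>0$ for which $\rho\le \lambda\,\sigma_{A_0}\otimes I_{A_1}$ holds for some state $\sigma_{A_0}\in\states(A_0)$. First I would fix an optimal state $\sigma_{A_0}$ attaining this minimum: the feasible set $\{(\sigma,\lambda):\sigma\in\states(A_0),\ 0\le\lambda\le\lambda^*+1,\ \rho\le\lambda\,\sigma\otimes I_{A_1}\}$ is a closed (hence compact) subset of $\states(A_0)\times[0,\lambda^*+1]$ and is nonempty, so the continuous function $(\sigma,\lambda)\mapsto\lambda$ attains its infimum $\lambda^*$ there. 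The candidate operators are then the natural ones: put $V:=\sigma_{A_0}^{1/2}$, so that $VV^*=\sigma_{A_0}$ and $\Tr[VV^*]=\Tr[\sigma_{A_0}]=1$, and define $G$ by dividing out $V$ on both sides, $G:=(V^+\otimes I_{A_1})\,\rho\,(V^+\otimes I_{A_1})$, where $V^+$ is the Moore--Penrose pseudoinverse of $V$.

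Next I would verify the reconstruction identity $\rho=(\chi_V\otimes id)(G)$. Since $\chi_V=V\cdot V^*$, the right-hand side equals $(V\otimes I_{A_1})\,G\,(V^*\otimes I_{A_1})=(VV^+\otimes I_{A_1})\,\rho\,(V^+V^*\otimes I_{A_1})=(P\otimes I_{A_1})\,\rho\,(P\otimes I_{A_1})$, where $P=VV^+$ is the orthogonal projection onto $\supp(\sigma_{A_0})$. To identify this with $\rho$ I would use that the inequality $\rho\le\lambda^*\,\sigma_{A_0}\otimes I_{A_1}$ forces $\supp(\rho)\subseteq\supp(\sigma_{A_0})\otimes\Ha_{A_1}$: any vector in the kernel of $P\otimes I_{A_1}$ has zero expectation against $\sigma_{A_0}\otimes I_{A_1}$, hence nonpositive (so zero) expectation against $\rho$, and positivity of $\rho$ places it in $\ker\rho$. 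Therefore $(P\otimes I_{A_1})\,\rho\,(P\otimes I_{A_1})=\rho$, and $G\ge 0$ since it is a congruence of the positive operator $\rho$.

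It remains to compute $\|G\|$. Conjugating $\rho\le\lambda^*\,\sigma_{A_0}\otimes I_{A_1}$ by $V^+\otimes I_{A_1}$ gives $G\le\lambda^*\,(V^+\sigma_{A_0}V^+\otimes I_{A_1})=\lambda^*\,(P\otimes I_{A_1})\le\lambda^* I$, so $\|G\|\le\lambda^*$. For the matching lower bound I would substitute the reconstruction back into the ordering: from $\rho=(V\otimes I_{A_1})\,G\,(V^*\otimes I_{A_1})\le\|G\|\,(VV^*\otimes I_{A_1})=\|G\|\,\sigma_{A_0}\otimes I_{A_1}$, the value $\|G\|$ is admissible in the inner minimization of \eqref{eq:cond_min} for the state $\sigma_{A_0}$, so minimality of $\lambda^*$ gives $\|G\|\ge\lambda^*$. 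Combining the two bounds yields $\|G\|=\lambda^*=\|\rho\|^\diamond_{A_1|A_0}$, as claimed.

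The only delicate point is that the optimal $\sigma_{A_0}$ may be rank-deficient, which is precisely why the pseudoinverse and the support argument of the second step are needed; if one knew the optimizer to be full rank, the argument would collapse to a plain conjugation by $\sigma_{A_0}^{-1/2}\otimes I_{A_1}$. I expect no genuine obstacle beyond this careful bookkeeping of supports.
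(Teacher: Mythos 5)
Your proof is correct and is exactly the argument the paper intends: the paper offers no written proof beyond the remark that the lemma ``follows easily from the first equality in \eqref{eq:cond_min}'', and your construction (optimal $\sigma_{A_0}$, $V=\sigma_{A_0}^{1/2}$, $G$ obtained by conjugating with the pseudoinverse, plus the support argument and the two matching norm bounds) is the natural fleshing-out of that remark. The attention to the rank-deficient case via the Moore--Penrose pseudoinverse is the right bookkeeping and closes the only potential gap.
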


\subsubsection{Diamond 2-norm and conditional 2-min entropy}\label{sec:2diamond}

As we have seen, the set of superchannels $\Ce_2(A,A')$ is a subset of $\Ce(A'_0A_1,A'_1A_0)$. In fact, it is itself
a base section. More precisely, put 
 $\Le_2(A,A'):=\Le(A'_0A_1,A'_1A_0)$ with 
$\Le^+_2(A,A'):=\Le_+(A'_0A_1,A'_1A_0)$  and $B(\Le_2(A,A'))=\Ce_2(A,A')$, then  $\Le_2(A,A')\in \mathsf{BS}$.
Using the same identification of the dual space as before, we have $\Le^*(A,A')=B_h(A'_1A_0A'_0A_1)$ and the dual section is 
\[
B(\Le_2(A,A'))=\{\sigma*C_\gamma\otimes I_{A'_1},\ \sigma\in \states(A'_0R),\ \gamma\in \Ce(A_0R,A_1),\ R \mbox{ an
ancilla}\}
\]
Using the identification of the dual space as in  Remark \ref{rem:dual}, this corresponds to a set of superchannels  where the preprocessing  is a
replacement channel, of the form
\begin{center}
\includegraphics[align=c]{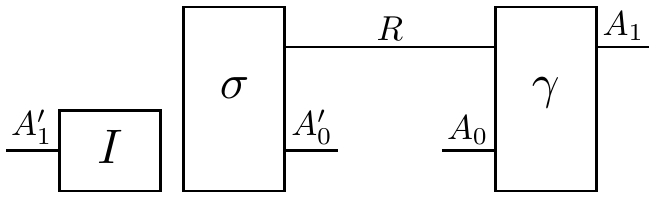}
\end{center}
The  norm $\|\cdot\|_{\Le_2(A,A')}$ is the distinguishability norm $\|\cdot\|_{2\diamond}$ for quantum networks, see
\cite{chiribella2008memory,gutoski2012onameasure} for the definition.
Let us denote 
\[
\|\cdot\|_{A'|A}^{2\diamond}:=\|\cdot\|_{ \Le^*_2(A,A')},
\]
  then  for
$\rho\in \Be_+(A'_0A_1A'_1A_0)$, we have
 \begin{align*}
\|{\rho}\|_{A'|A}^{2\diamond}&=\min_{\sigma,\gamma} \min\{\lambda>0,\ {\rho}\le\lambda(\sigma*C_{\gamma})_{A'_0A_1A_0}\otimes
I_{A'_1}\}=:2^{-H^{(2)}_{\min}(A'|A)_{{\rho}}}\\
&= \max_{\Theta\in \Ce_2(A,A')}\<{\rho},\Theta\>= \max_{\Theta\in \Ce_2(A,A') }\llangle I_{A_0A'_1}|
(\Theta\otimes id)(\rho)|I_{A_0A'_1}\rrangle.
\end{align*}
Here $H_{\min}^{(2)}$ will be called the conditional 2-min entropy. Note that this quantity coincides with the extended
conditional min-entropy of \cite{gour2019comparison}
but we prefer the present notation since  it can be extended to any $N\in \mathbb N$ in an obvious way using the set
of $N$-combs, see also  \cite{chiribella2016optimal}.
The last equality shows an operational interpretation as the maximum fidelity (again up to multiplication by the
dimension) with the maximally entangled state $\psi^{A_0B_1}$
that can be obtained by applying a structured quantum channel to the part $A'_0A_1$ of ${\rho}_{A'_0A_1A'_1A_0}$ as depicted in
the diagram 
\begin{center}
\includegraphics[align=c]{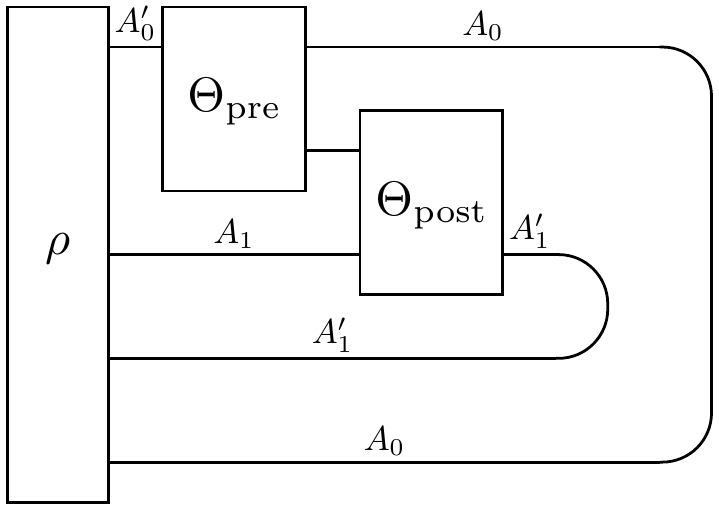}
\end{center}

\subsubsection{Guessing games}\label{sec:guessing}

Let  $\Ee=\{\lambda_i,\rho_i\}_{i=1}^k$ be an ensemble of states 
 $\rho_i\in \states(A)$ and
prior probabilities $\lambda_i$. Quantum measurements with $k$ outcomes are given by 
operators  $M=\{M_1,\dots,M_k\}$, where $M_j\in \Be_+(A)$, $\sum_j M_j=I_A$, the set of all such measurements for the
system $A$  will be denoted by $\Me_k(A)$.  It is well known that the optimal success probability $P_{\succ}(\Ee)$ is
related to the conditional min entropy as follows, \cite{konig2009theoperational}. 
Let us define the
quantum-classical state $\rho_\Ee=\sum_i\lambda_i\rho_i\otimes |i\>\<i|\in \states(AR)$, where $d_R=k$. 
Then for any channel $\alpha\in \Ce(A,R)$, we have 
\begin{equation}\label{eq:Psucc_e_m}
\<\rho_\Ee,\alpha\>=P_{\succ}(\Ee,M)=\<\rho_\Ee, \Phi_M\>,
\end{equation}
where $M\in \Me_k(A)$, $M_i=\alpha^*(|i\>\<i|)$ and  $\Phi_M\in \Ce(A,R)$ is the  quantum-to-classical (q-c) channel given by 
$\Phi_M(\sigma)=\sum_i \Tr[\sigma M_i]|i\>\<i|$.  It follows that
\begin{equation}\label{eq:ensemble}
\|\rho_\Ee\|_{R|A}^\diamond=P_{\succ}(\Ee),
\end{equation}
see also Example \ref{ex:measurements}. 
It was proved in \cite[Prop. 2]{jencova2016isit} that the dual norm $\|\cdot\|_{R|A}^\diamond$ can be interpreted as a success probability not only for
quantum-classical states. Since this result and the related constructions will be repeatedly used below, 
we give the proof here.

\begin{lemma}\label{lemma:dualnorm_Psuc} For any state $\rho\in \states(AR)$ there is
 an ensemble $\Ee^R_\rho$ on $AR$ such that 
\[
\|\rho\|^\diamond_{R|A}={d_R}P_{\succ}(\Ee^R_\rho).
\]
\end{lemma}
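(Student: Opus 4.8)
The plan is to turn the operational formula \eqref{eq:cond_min_op} into a semidefinite program over a convenient set, and then to ``spread out'' the single optimizer into a full POVM by a Weyl twirl acting on the register $R$. First I would rewrite the channel optimization in Choi form: by \eqref{eq:dualitylink} we have $\<\rho,\alpha\>=\Tr[\rho C_{\alpha^*}]$, and since the Choi isomorphism identifies $\{C_{\alpha^*}:\alpha\in\Ce(A,R)\}$ with $\{G\in\Be_+(AR):\ \ptr_R[G]=I_A\}$ (positivity of $G$ encodes complete positivity of $\alpha$, while unitality $\alpha^*(I_R)=I_A$ becomes the partial-trace constraint, as $\ptr_R[C_{\alpha^*}]=\alpha^*(I_R)$), equation \eqref{eq:cond_min_op} reads
\[
\|\rho\|^\diamond_{R|A}=\max\{\Tr[\rho G]:\ G\in\Be_+(AR),\ \ptr_R[G]=I_A\}.
\]

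Next I would fix a unitary operator basis $\{U_a\}_{a=1}^{d_R^2}$ of $\Be(R)$ (the Heisenberg--Weyl operators), which obeys the twirl identity $\frac{1}{d_R^2}\sum_a U_aXU_a^*=\frac{\Tr[X]}{d_R}I_R$ for all $X\in\Be(R)$, and define the ensemble $\Ee^R_\rho:=\{\tfrac1{d_R^2},\rho_a\}_{a=1}^{d_R^2}$ on $AR$, where $\rho_a:=(I_A\otimes U_a)\rho(I_A\otimes U_a^*)\in\states(AR)$. The heart of the argument is a correspondence between the feasible $G$'s above and the $d_R^2$-outcome POVMs on $AR$, realized by the same twirl. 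In one direction, given a feasible $G$ the operators $N_a:=\frac1{d_R}(I_A\otimes U_a)G(I_A\otimes U_a^*)$ are positive and, by the twirl identity applied to the $R$ factor together with $\ptr_R[G]=I_A$, satisfy $\sum_aN_a=I_{AR}$, so $N\in\Me_{d_R^2}(AR)$. In the converse direction, any $N\in\Me_{d_R^2}(AR)$ yields $G:=\frac1{d_R}\sum_a(I_A\otimes U_a^*)N_a(I_A\otimes U_a)$, which is positive and feasible: using the elementary fact $\ptr_R[(I_A\otimes U^*)X(I_A\otimes U)]=\ptr_R[X]$ for unitary $U$ on $R$, one gets $\ptr_R[G]=\frac1{d_R}\sum_a\ptr_R[N_a]=\frac1{d_R}\ptr_R[I_{AR}]=I_A$.

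With this correspondence, the value computation is uniform in both directions: since $(I_A\otimes U_a^*)\rho_a(I_A\otimes U_a)=\rho$, cyclicity of the trace gives, for $G$ and $N$ related as above,
\[
P_{\succ}(\Ee^R_\rho,N)=\sum_a\frac1{d_R^2}\Tr[\rho_aN_a]=\frac1{d_R^2}\Tr\Big[\rho\sum_a(I_A\otimes U_a^*)N_a(I_A\otimes U_a)\Big]=\frac1{d_R}\Tr[\rho G].
\]
Taking the supremum over $N\in\Me_{d_R^2}(AR)$ (equivalently, by the bijection, over feasible $G$) and inserting the displayed SDP form of the norm yields $P_{\succ}(\Ee^R_\rho)=\frac1{d_R}\|\rho\|^\diamond_{R|A}$, which is the claim. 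The step I expect to be the crux is verifying that the twirl genuinely maps onto \emph{all} $d_R^2$-outcome POVMs, not just the special ones coming from feasible $G$; this surjectivity is exactly what supplies the matching upper bound $d_R P_{\succ}(\Ee^R_\rho)\le\|\rho\|^\diamond_{R|A}$, and it rests entirely on the two partial-trace computations above. The lower bound is the easy direction, obtained by plugging the optimal $G$ into the first half of the correspondence.
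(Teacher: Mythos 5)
Your proof is correct and follows essentially the same route as the paper: the ensemble $\Ee^R_\rho$ is the Weyl orbit of $\rho$ (the paper conjugates by $(U_x^R)^{\mathsf T}$ rather than $U_x^R$, which is immaterial), and your twirl correspondence between feasible Choi operators $G$ and $d_R^2$-outcome POVMs is exactly the paper's channel--measurement correspondence $\beta\mapsto M^\beta$, $M\mapsto\beta^M$ written at the level of Choi matrices, with the same value identity $d_R^{-1}\<\rho,\beta\>=P_{\succ}(\Ee^R_\rho,M^\beta)$. The only cosmetic difference is that your backward map (twirling $N$ to $G$) is not the inverse of the forward map, whereas the paper's teleportation-based $\beta^M$ gives a genuine bijection, but your surjectivity-plus-value-preservation argument suffices for the optimization.
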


The proof is based on a relation between quantum channels $A\to R$ and measurements on $AR$, close to the Choi
representation of channels.  Let $\{U^R_1,\dots,U^R_{d_R^2}\}$ be the group of generalized Pauli unitaries on $R$ and
let $\Ue^R_x$
denote the conjugation by $U^R_x$, so that we have 
\[
\sum_x \Ue_x^R(X)=d_R\Tr[X]I_R.
\]
Let 
\[
\mathsf{B}^R_x:= (\Ue_x^R\otimes id)(\psi^R)=d_R^{-1}|U_x^{R}\doublek U_x^R|
\]
then
$\mathsf{B}^R=\{\mathsf{B}^R_1,\dots,\mathsf{B}^R_{d_R^2}\}$ defines the Bell measurement on $RR$.
For a channel $\beta\in \Ce(A,R)$, let $M^\beta\in \Me_{d_R^2}(AR)$ be defined as
\begin{equation}\label{eq:channel_meas1}
M^\beta_x=(\beta^*\otimes id)(\mathsf{B}^R_x),\quad x=1,\dots,d_R^2.
\end{equation}
Conversely, for any $M\in \Me_{d_R^2}(AR)$, let $\beta^M\in \Ce(A,R)$ be the channel obtained from the measurement $M$
 in the teleportation scheme, that is
\begin{equation}\label{eq:meas_channel}
\beta^M(\sigma)=\sum_x \Ue^R_x(\ptr_{A\tilde R}[(\sigma \otimes \psi^R)(M_x \otimes I_R)]),\quad \sigma\in \states(A)
\end{equation}
(here we take $\tilde R\simeq R$ and view $\psi^R$ as a state on $\tilde RR$ and $M$ as a measurement on $A\tilde R$). 
It is easily seen that we have $M^{\beta^M}=M$ and $\beta^{M^\beta}=\beta$.
Note that we have $\beta^M=\sum_x \beta^M_x$, where $\{\beta^M_x\}$ is an instrument whose elements are depicted as
\begin{center}
\includegraphics{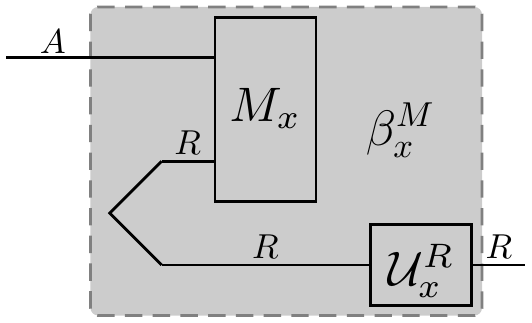}
\end{center}

\begin{proof}[Proof of Lemma \ref{lemma:dualnorm_Psuc}]

For $\rho\in \states(AR)$ we introduce the equiprobable ensemble 
\[
\Ee^R_\rho=\{d_R^{-2}, \rho_x\}_{x=1}^{d_R^2},\quad \rho_x=(id_A\otimes \tilde \Ue_x^R)(\rho),
\]
where $\tilde \Ue^R_x$ denotes the conjugation by $(U_x^R)^{\mathsf{T}}$.
Note that for  any channel $\beta\in \Ce(A,R)$ and $x=1,\dots,d_R^2$, we have
\begin{center}
\begin{minipage}[c]{0.1\textwidth}
\centering
$d_R^{-1}\<\rho,\beta\> =$
\end{minipage}
\begin{minipage}[c]{0.25\textwidth}
\centering
\includegraphics{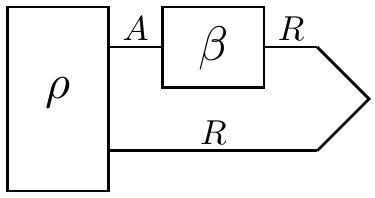}
\end{minipage}
\begin{minipage}[c]{0.01\textwidth}
 = 
\end{minipage}
\begin{minipage}[c]{0.4\textwidth}
\centering
\includegraphics{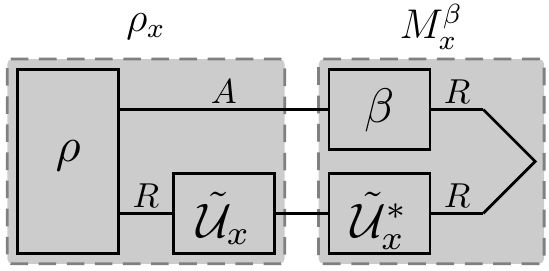}
\end{minipage}
\end{center}
 Multiplying by the probability $d_R^{-2}$ and  summing up over $x$ we obtain
\begin{equation}\label{eq:channel_meas}
d_R^{-1}\<\rho,\beta\>= P_{\succ}((\beta\otimes
id_R)(\Ee^R_\rho),\mathsf{B}^R)=  P_{\succ}(\Ee^R_\rho,M^\beta).
\end{equation}
Also conversely, it is readily checked that for any measurement $M\in \Me_{d_R^2}(AR)$ we have
\begin{equation}\label{eq:meas_channel1}
P_{\succ}(\Ee_\rho,M)=\sum_x \Tr[\rho_x M_x] = d_R^{-1}\sum_x \<\rho, \beta^M_x\>=d_R^{-1}\<\rho,\beta^M\>.
\end{equation}
Using \eqref{eq:cond_min_op}, this proves Lemma \ref{lemma:dualnorm_Psuc}.

\end{proof}

As an application, we have the following expression for the diamond norm distance of quantum channels in terms of the success probabilities  in guessing games. 

\begin{coro}\label{coro:psuc}
Let $\Phi_1,\Phi_2\in \Ce(A_0,A_1)$. Then 
\[
\sup_{\Ee, M}
\frac{P_{\succ}((\Phi_1\otimes id_R)(\Ee),M)-P_{\succ}((\Phi_2\otimes id_R)(\Ee),M)}{P_{\succ}(\Ee)}=\frac12\|\Phi_1-\Phi_2\|_\diamond,
\]
where the supremum is taken over all ensembles $\Ee$ on $A_0R$, all measurements $M$ on $A_1R$ and any ancilla $R$.
Moreover, the supremum is attained with $R\simeq A_1$ and the Bell measurement $M=\mathsf{B}^{A_1}$.
\end{coro}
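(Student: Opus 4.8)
The plan is to prove the two inequalities separately: an upper bound valid for \emph{arbitrary} ensembles, measurements and ancillas, and a matching achievability bound realized by the specific choice $R\simeq A_1$ with the Bell measurement. The unifying observation is that, for fixed $\Ee$ and $M$, the assignment $\Phi\mapsto P_{\succ}((\Phi\otimes id)(\Ee),M)=\sum_i\lambda_i\Tr[(\Phi\otimes id)(\rho_i)M_i]$ is a linear functional on $\Le(A_0,A_1)$. Hence, by the duality \eqref{eq:dualitylink}, it is represented by a unique $Y=Y_{\Ee,M}\in\Be_h(A_0A_1)$ with $\<Y,\Phi\>=P_{\succ}((\Phi\otimes id)(\Ee),M)$ for all $\Phi$, and the whole argument reduces to controlling $Y$ in the dual norm $\|\cdot\|^\diamond_{A_1|A_0}$. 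Throughout we may assume $P_{\succ}(\Ee)>0$, which always holds.

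For the upper bound I would first note that $Y\ge0$: for completely positive $\Phi\in\Le_+(A_0,A_1)$ each term $\Tr[(\Phi\otimes id)(\rho_i)M_i]$ is nonnegative, so $\<Y,\Phi\>\ge0$ on the whole cone $\Le_+(A_0,A_1)$, which by the identification of the dual cone with $\Be_+(A_0A_1)$ forces $Y\in\Be_+(A_0A_1)$. Next I would estimate $\|Y\|^\diamond_{A_1|A_0}$ via the operational formula \eqref{eq:cond_min_op}: $\|Y\|^\diamond_{A_1|A_0}=\max_{\alpha\in\Ce(A_0,A_1)}\<Y,\alpha\>=\max_{\alpha}P_{\succ}((\alpha\otimes id)(\Ee),M)$. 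Since $(\alpha^*\otimes id)(M)=\{(\alpha^*\otimes id)(M_i)\}$ is again a measurement on $A_0R$ (as $\alpha^*$ is unital and positive), moving $\alpha$ onto the measurement gives $P_{\succ}((\alpha\otimes id)(\Ee),M)=P_{\succ}(\Ee,(\alpha^*\otimes id)(M))\le P_{\succ}(\Ee)$, whence $\|Y\|^\diamond_{A_1|A_0}\le P_{\succ}(\Ee)$. Finally, using Proposition \ref{prop:basesec}(iv) together with the fact, from \eqref{eq:interval} and \eqref{eq:cond_min}, that the positive effects on channels are exactly $\{X\ge0:\ \|X\|^\diamond_{A_1|A_0}\le1\}$, I obtain $\<Y,\Phi_1-\Phi_2\>\le\tfrac12\|\Phi_1-\Phi_2\|_\diamond\,\|Y\|^\diamond_{A_1|A_0}\le\tfrac12\|\Phi_1-\Phi_2\|_\diamond\,P_{\succ}(\Ee)$, so the ratio is at most $\tfrac12\|\Phi_1-\Phi_2\|_\diamond$ for every $\Ee,M,R$.

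For achievability I would take $R\simeq A_1$ and, for a state $\rho\in\states(A_0A_1)$, use the equiprobable ensemble $\Ee^{A_1}_\rho$ together with the Bell measurement $\mathsf{B}^{A_1}$ from Lemma \ref{lemma:dualnorm_Psuc}. By \eqref{eq:channel_meas}, $P_{\succ}((\Phi\otimes id)(\Ee^{A_1}_\rho),\mathsf{B}^{A_1})=d_{A_1}^{-1}\<\rho,\Phi\>$ for each $\Phi$, while Lemma \ref{lemma:dualnorm_Psuc} gives $P_{\succ}(\Ee^{A_1}_\rho)=d_{A_1}^{-1}\|\rho\|^\diamond_{A_1|A_0}$. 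The factors $d_{A_1}^{-1}$ cancel, and the ratio becomes $\<\rho,\Phi_1-\Phi_2\>/\|\rho\|^\diamond_{A_1|A_0}$. Optimizing over $\rho$ and invoking Proposition \ref{prop:basesec}(iv) once more in its scale-invariant form, $\tfrac12\|\Phi_1-\Phi_2\|_\diamond=\sup_{\rho\ge0,\,\rho\neq0}\<\rho,\Phi_1-\Phi_2\>/\|\rho\|^\diamond_{A_1|A_0}$, recovers exactly $\tfrac12\|\Phi_1-\Phi_2\|_\diamond$; the maximizer of this compact optimization furnishes the ensemble attaining the supremum with $R\simeq A_1$ and $M=\mathsf{B}^{A_1}$. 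Combining the two bounds yields the claimed equality together with the attainability statement.

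The step I expect to be the crux is the dual-norm estimate $\|Y_{\Ee,M}\|^\diamond_{A_1|A_0}\le P_{\succ}(\Ee)$: this is what converts the abstract representing operator $Y$ into the operationally meaningful $P_{\succ}(\Ee)$, and it rests on recognizing the operational form \eqref{eq:cond_min_op} and on the monotonicity of the success probability under the preprocessing $\alpha\otimes id$. By comparison, the positivity of $Y$, the identification of the effect unit ball, and the final appeal to Proposition \ref{prop:basesec}(iv) are routine.
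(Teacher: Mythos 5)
Your proof is correct, and the achievability half is the same as the paper's: the ensemble $\Ee^{A_1}_\rho$ with the Bell measurement, the identities from \eqref{eq:channel_meas} and Lemma \ref{lemma:dualnorm_Psuc}, and the scale-invariant form of Proposition \ref{prop:basesec}(iv) (which is exactly \eqref{eq:diamond_psuc} in the paper). Where you genuinely diverge is the upper bound. The paper keeps the ancilla and the classical register in play: it writes $P_{\succ}((\Phi_i\otimes id)(\Ee),M)=\<\rho_\Ee,\Phi_M\circ(\Phi_i\otimes id_R)\>$ with $\rho_\Ee\in\states(A_0RR')$, applies the H\"older-type bound on the enlarged spaces, and then uses \emph{two} facts: $\|\rho_\Ee\|^\diamond=P_{\succ}(\Ee)$ from \eqref{eq:ensemble} and contractivity of the diamond norm under $\Phi_M\circ(\,\cdot\otimes id_R)$ from Proposition \ref{prop:basesec}(v). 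You instead collapse the entire game $(\Ee,M,R)$ into a single representing operator $Y_{\Ee,M}\in\Be_+(A_0A_1)$ and do all the work on the original input--output pair: positivity of $Y$ from duality of cones, the bound $\|Y\|^\diamond_{A_1|A_0}\le P_{\succ}(\Ee)$ from \eqref{eq:cond_min_op} plus the fact that $(\alpha^*\otimes id)(M)$ is again a measurement, and then a single application of Proposition \ref{prop:basesec}(iv). Both arguments are instances of the same norm duality, but your packaging has the merit of making explicit that the only feature of a game $(\Ee,M,R)$ that matters is the induced effect $Y$ together with the bound $\|Y\|^\diamond_{A_1|A_0}\le P_{\succ}(\Ee)$; the paper's version, on the other hand, generalizes more directly to the restricted-preprocessing and restricted-measurement settings of Theorem \ref{thm:rand_chans_psuc}, where one cannot freely absorb the game into an arbitrary positive operator. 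All the individual steps you flag as routine are indeed justified by the cited results, and your identification of the crux (the estimate $\|Y_{\Ee,M}\|^\diamond_{A_1|A_0}\le P_{\succ}(\Ee)$) is apt.
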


\begin{proof} 
Let $s$ denote the supremum on the left hand side of the equality to be proved. 
Let $\Ee=\{\lambda_j,\rho_j\}_{j=1}^k$ be an ensemble on $A_0R$ and $M\in \Me_k(A_1R)$. Put 
$\rho=\rho_\Ee\in \states(A_0RR')$, with $d_{R'}=k$. Then by \eqref{eq:Psucc_e_m}
\[
P_{\succ}((\Phi_i\otimes id)(\Ee),M)=P_{\succ}(\Ee, (\Phi_i^*\otimes id)(M))=\<\rho, \Phi_{(\Phi^*_i\otimes id)(M)}\>.
\]
$i=1,2$. Note that the q-c channel $\Phi_{(\Phi_i^*\otimes id)(M)}=\Phi_M\circ (\Phi_i\otimes id_R)$ and
therefore
\begin{align*}
P_{\succ}((\Phi_1\otimes id)(\Ee), M)&-P_{\succ}((\Phi_2\otimes id)(\Ee),M)=
\<\rho_\Ee,\Phi_M\circ(\Phi_1\otimes id_R)-\Phi_M\circ(\Phi_2\otimes id_R)\>
\\
&\le \frac12 \|\rho_\Ee\|_{R|A}^\diamond \|\Phi_M\circ[(\Phi_1-\Phi_2)\otimes id_R]\|_\diamond\le\frac12
P_{\succ}(\Ee)\|\Phi_1-\Phi_2\|_\diamond,
\end{align*}
here we used Proposition \ref{prop:basesec}. This shows that $s \le \frac12\|\Phi_1-\Phi_2\|_\diamond$.

For the converse, note that by Proposition \ref{prop:basesec} (iii) and (iv), we have
\begin{equation}\label{eq:diamond_psuc}
\frac12\|\Phi_1-\Phi_2\|_\diamond= \sup_{\rho\in \Be_+(A_0A_1)}\frac{\<\rho,
\Phi_1-\Phi_2\>}{\|\rho\|^\diamond_{A_1|A_0}}= 
\sup_{\rho\in \states(A_0A_1)}\frac{\<\rho,
\Phi_1-\Phi_2\>}{\|\rho\|^\diamond_{A_1|A_0}}. 
\end{equation}
Let now $R\simeq A_1$ and $M=\mathsf{B}^{A_1}$. For $\rho\in \states(A_0A_1)$, we take the ensemble $\Ee_\rho^{A_1}$ on 
 $A_0A_1$. By \eqref{eq:diamond_psuc}, \eqref{eq:channel_meas} and Lemma \ref{lemma:dualnorm_Psuc}, we obtain 
\[
\frac12\|\Phi_1-\Phi_2\|_\diamond=\sup_{\rho\in \states(A_0A_1)}\frac{P_{\succ}((\Phi_1\otimes
id)(\Ee_\rho^{A_1}),\mathsf{B}^{A_1})-P_{\succ}((\Phi_2\otimes id)(\Ee_\rho^{A_1}),\mathsf{B}^{A_1})
}{P_{\succ}(\Ee_\rho^R)}\le s.
\] 
 This finishes the proof.

\end{proof}

\begin{rem} The above construction implies another operational interpretation of $\|\rho\|^\diamond_{R|A}$. 
To see this, let $\Ee=\{\lambda_i, \Phi_i\}$ be an ensemble of quantum channels, $\Phi_i\in \Ce(R,R)$. The guessing
procedures for ensembles of channels can be described by pairs
$(\rho,M)$, consisting of an input state  $\rho\in \states(AR)$ with some ancilla $A$ and $M$ is a measurement on $AR$,
such triples are also called
quantum testers \cite{chiribella2008memory}. The average success probability for the tester $(\rho,M)$
is then
\[
P_{\succ}(\Ee,\rho,M):=P_{\succ}(\Ee(\rho),M)
\]
where $\Ee(\rho)=\{\lambda_i, (\Phi_i\otimes id)(\rho)\}$. 

Any state  $\rho\in \states(AR)$ can be seen as the input state of some tester. We claim that 
the norm $\|\rho\|^\diamond_{R|A}$ can be interpreted as ($d_R$ times) the maximal success
probability that can be obtained by all testers with input state $\rho$ for equiprobable ensembles 
$\Ee=\{d_R^{-2},\Phi_i\}_{x=1}^{d_R^2}$ of unital channels $\Phi_x\in \Ce(R,R)$.
Indeed, let  $M\in \Me_{d_R^2}(AR)$ be any measurement, we have
\[
P_{\succ}(\Ee,\rho,M)=d_R^{-2}\sum_x  \Tr[(id\otimes \Phi_x)(\rho)M_x]=  d_R^{-1}\Tr[\rho C],
\]
where $C=d_R^{-1}\sum_i (id\otimes \Phi_x^*)(M_x)\in \Be_+(AR)$. Since $\Phi_x^*$ is trace preserving, we see that 
$\ptr_R[C]=d_R^{-1}\ptr_R[I]=I_A$. Hence there is a channel $\alpha\in \Ce(R,A)$ such that $C=C_{\alpha^*}$. Finishing the
above computation, we obtain
\[
P_{\succ}(\Ee,\rho,M)=d_R^{-1}\Tr[\rho C_{\alpha^*}]=d_R^{-1}\<\rho,\alpha\>\le d_R^{-1}\|\rho\|_{R|A}.
\]
As we have seen, equality is attained for $\Phi_x=\tilde{\mathcal U}_x^R$.

\end{rem}

\subsection{General comparison theorems for quantum channels}

We are now ready to apply  the results of Section \ref{sec:comp_gpt} to the comparison of 
 a pair of quantum channels $\Phi_1$ and $\Phi_2$. For this, we consider a subcategory  $\Fe$ in $\mathsf{BS}$
whose objects  are some  spaces of channels (as in Sec.
\ref{sec:diamond}) and morphisms  between them are given by some convex subsets of superchannels, (so $\Fe$ is in fact a
convex subcategory of the category of quantum
channels with superchannels).

If $\Le(A_0,A_1)$ is an object in $\Fe$ we will say that $A=A_0A_1$ is an  input-output space admissible in $\Fe$. 
We will use the notation 
\[
\Fe(A,A'):=\Fe(\Le(A_0,A_1), \Le(A'_0,A'_1))\subseteq \Ce_2(A,A')
\]
for a pair of admissible spaces $A$, $A'$.

Let $A$, $A'$ be admissible in $\Fe$  and let $\Phi_1\in \Ce(A_0,A_1)$, $\Phi_2\in\Ce(A'_0,A'_1)$.
The $\Fe$-conversion distance  becomes 
\[
\delta_\Fe(\Phi_1\|\Phi_2)=\inf_{\Lambda\in \Fe(A,A')}\|\Lambda(\Phi_1)-\Phi_2\|_\diamond.
\]
{
For $\rho\in \Be_+(A'_0A_1A'_1A_0)$, we define
\[
\|\rho\|^\Fe_{A'|A}:=\sup_{\Theta\in \Fe(A,A')} \<\rho,\Theta\>.
\]
This notation may be somewhat misleading, since for a general  subcategory $\Fe$
there is no guarantee that $\|\cdot\|^\Fe_{A'|A}$ can be extended to a norm. 
However, there are some choices that lead to the norms introduced in Sections \ref{sec:diamond} and \ref{sec:2diamond}. Indeed, with the choice of a subcategory where all objects are spaces of states (that is, all admissible input spaces are
$A_0=1$) and the morphisms are all (super)channels between them, we obtain
$\|\cdot\|^\Fe_{A'|A}= \|\cdot\|^\diamond_{A'_1|A_1}$.  If $\Fe$ coincides with the category of quantum channels with superchannels, we similarly get 
 $\|\rho\|^\Fe_{A'|A}=\|\rho\|^{2\diamond}_{A'|A}$. Since 
$\Fe(A,A')\subseteq \Ce_2(A,A')\subseteq \Ce(A'_0A_1,A'_1A_0)$, the following inequalities are immediate:
\[
\|\rho\|^\Fe_{A'|A}\le \|\rho\|^{2\diamond}_{A'|A}\le \|\rho\|^\diamond_{A'_1A_0|A'_0A_1}.
\]
Moreover, the quantity
\[
H_{\min}^\Fe(A'|A)_\rho:=-\log(\|\rho\|^\Fe_{A'|A})
\]
can be seen as a modified conditional min entropy, since it coincides with $H_{\min}$ and $H^{(2)}_{\min}$ in the above
cases.

More generally, it may happen that
 $\Fe(A,A')$ is a base section in $\Le_2(A,A')$, possibly with respect to some subcone 
$\Le^+_\Fe(A,A')\subseteq \Le_2^+(A,A')$. 
Then $\|\cdot\|^\Fe_{A'|A}$ coincides with the base section norm with respect to the dual section, see Proposition
\ref{prop:basesec} (iii). Consequently, $\|\cdot\|_{A'|A}^\Fe$ satisfies the properties in Proposition \ref{prop:basesec},
so that there is a  dual expression for this norm. {In particular, it can be expressed  as a conic
program. } 

\begin{ex}\label{ex:ppt_sep_ns} Let $\Fe$ be the subcategory whose objects are spaces $\Le(A_0B_0,A_1B_1)$ 
of bipartite channels and the 
morphisms in $\Fe$ are PPT-superchannels defined by the property that they stay completely positive when pre and postcomposed by the partial transpose
supermap, see \cite{gour2019theentanglement} for more details. In this case, $\Fe(AB,A'B')$ is a base section in $\Le_2$
with respect to the subcone $\Le_{2,PPT}^+\subsetneq \Le_2^+$ of 
PPT-supermaps, \cite[Definition V.2]{gour2019theentanglement}. Similarly,  if the morphisms are given by 
separable superchannels, characterized by the condition that the Choi matrix is separable, then $\Fe(AB,A'B')$ is a base
section with respect to the cone
 $\Le_{2,SEP}^+$ of separable supermaps. Note that
these sets of superchannels might be different from their restricted variants that will be considered below.

Another such example is the subcategory with spaces of channels as objects and morphisms given by 
  no-signaling superchannels which can be easily described by properties of their Choi matrices: besides the
condition \eqref{eq:comb} of a 2-comb they also satisfy an analogical condition with input and output spaces exchanged. 
In this case,  $\Fe(A,A')$ is a base section in  the cone $\Le^+_2(A,A')$.

\end{ex}

For any choice of $\Fe$,  $\|\cdot\|^\Fe_{A'|A}$ is obviously convex and monotone under morphisms in $\Fe$, more precisely,
 for any $\Lambda\in \Fe(A',A'')$, we have
\[
\|\rho\|^\Fe_{A'|A}\ge \|\rho*C_\Lambda\|^\Fe_{A''|A}.
\] 
Further properties depend on the details of the structure of $\Fe$, e.g. with respect to the tensor products.
 
The importance of these quantities is seen from the following general comparison  theorem for quantum channels, which is 
 a straightforward reformulation of Theorem
\ref{thm:rand} from the GPT setting.

\begin{thm}\label{thm:rand_chans}  Let $\epsilon\ge0$. The following are equivalent.
\begin{enumerate}
\item[(i)] $\delta_\Fe(\Phi_1\|\Phi_2)\le \epsilon;$
\item[(ii)] for any $\rho\in \states(A'_0A'_1)$, there is some $\Theta\in \bar{\Fe}(A,A')$ such that
\[
\<\rho,\Phi_2\>\le \<\rho, \Theta(\Phi_1)\>+ \frac{\epsilon}2\|\rho\|^\diamond_{A'_1|A'_0}
\]
(here $\bar{\Fe}(A,A')$ is the closure of the set $\Fe(A,A')$);
\item[(iii)] for any $R=R_0R_1$ admissible in $\Fe$  and any $\rho\in \states(R_0R_1)$,
\begin{align*}
\|\rho*C_{\Phi_2}\|^\Fe_{R|A'}\le \|\rho*C_{\Phi_1}\|^\Fe_{R|A}+ \frac{\epsilon}2\|\rho\|^\diamond_{R_1|R_0}.
\end{align*}

\end{enumerate}
Moreover, in (iii) it is enough to assume $R_0\simeq A'_0$, $R_1\simeq A'_1$.

\end{thm}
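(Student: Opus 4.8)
The plan is to derive Theorem~\ref{thm:rand_chans} as a direct specialization of the GPT result Theorem~\ref{thm:rand}, under the dictionary $\Ve_1=\Le(A_0,A_1)$ with $b_1=\Phi_1$, $\Ve_2=\Le(A'_0,A'_1)$ with $b_2=\Phi_2$, and $\We=\Le(R_0,R_1)$ for $R$ admissible in $\Fe$. First I would record the dual identifications from Section~\ref{sec:diamond}: via the Choi isomorphism $\Ve_2^{*+}\simeq \Be_+(A'_0A'_1)$ and $\We^{*+}\simeq\Be_+(R_0R_1)$, with the base-section norms $\|\cdot\|_{\Ve_2^*}=\|\cdot\|^\diamond_{A'_1|A'_0}$ and $\|\cdot\|_{\We^*}=\|\cdot\|^\diamond_{R_1|R_0}$. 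Under these identifications an arbitrary positive functional corresponds to a positive operator, which can be normalized to a state: since both sides of the inequalities in (ii) and (iii) are homogeneous of degree one in the functional (the norms are homogeneous and the pairings linear), and the case $\rho=0$ is trivial, it suffices to range over $\rho\in\states(A'_0A'_1)$ in (ii) and $\rho\in\states(R_0R_1)$ in (iii).

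The one genuine computation is to translate the suprema in Theorem~\ref{thm:rand}(iii) into the modified min-entropy quantities $\|\cdot\|^\Fe$. For $\Theta\in\Fe(A',R)$ the image $\Theta(\Phi_2)$ is a channel with Choi matrix $C_{\Theta(\Phi_2)}=C_\Theta*C_{\Phi_2}$, so by the link-product form of the duality \eqref{eq:dualitylink} we have $\<\rho,\Theta(\Phi_2)\>=\rho*(C_\Theta*C_{\Phi_2})$. The operators $\rho\in\Be(R_0R_1)$, $C_\Theta\in\Be(R_1A'_0R_0A'_1)$ and $C_{\Phi_2}\in\Be(A'_1A'_0)$ share no label common to all three, so associativity of the link product gives $\rho*(C_\Theta*C_{\Phi_2})=(\rho*C_{\Phi_2})*C_\Theta=\<\rho*C_{\Phi_2},\Theta\>$. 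Taking the supremum over $\Theta\in\Fe(A',R)$ and recalling the definition of $\|\cdot\|^\Fe_{R|A'}$ yields $\sup_{\Theta\in\Fe(A',R)}\<\rho,\Theta(\Phi_2)\>=\|\rho*C_{\Phi_2}\|^\Fe_{R|A'}$, and likewise with $\Phi_1$, $A$ in place of $\Phi_2$, $A'$. With these substitutions, parts (i), (ii), (iii) of Theorem~\ref{thm:rand} become verbatim parts (i), (ii), (iii) of Theorem~\ref{thm:rand_chans}, so the equivalence is immediate.

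It remains to prove the ``moreover'' clause, that in (iii) one may restrict to $R\simeq A'$. Since the full statement (iii) trivially implies its restriction, and (i) is equivalent to the full (iii), it is enough to show that (iii) with $R_0\simeq A'_0$, $R_1\simeq A'_1$ implies (ii) (which in turn implies (i) by Theorem~\ref{thm:rand}). For this I would take $\We=\Ve_2=\Le(A'_0,A'_1)$, admissible because $A'$ is, and use that $\Fe$, being a subcategory, contains the identity superchannel $id_{A'}\in\Fe(A',A')$. Then for any $\rho\in\states(A'_0A'_1)$ the left-hand side of (iii) is bounded below by $\<\rho,id(\Phi_2)\>=\<\rho,\Phi_2\>$, so the restricted (iii) gives $\<\rho,\Phi_2\>\le\|\rho*C_{\Phi_1}\|^\Fe_{A'|A}+\frac\epsilon2\|\rho\|^\diamond_{A'_1|A'_0}=\sup_{\Theta'\in\Fe(A,A')}\<\rho,\Theta'(\Phi_1)\>+\frac\epsilon2\|\rho\|^\diamond_{A'_1|A'_0}$. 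Because $\bar\Fe(A,A')$ is compact (the closure of a bounded set in the finite-dimensional space $\Le_2(A,A')$) and $\Theta\mapsto\<\rho,\Theta(\Phi_1)\>$ is continuous, this supremum is attained at some $\Theta\in\bar\Fe(A,A')$, which is precisely the $\Theta$ required in (ii).

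The main obstacle is the careful bookkeeping in the link-product identity of the second paragraph: one must track which Hilbert-space factors are contracted and verify the no-common-label hypothesis so that associativity applies, giving the clean passage from $\<\rho,\Theta(\Phi_2)\>$ to $\<\rho*C_{\Phi_2},\Theta\>$. The reduction to $R\simeq A'$ is the other delicate point, since it relies on the identity morphism belonging to $\Fe$ and on the compactness/attainment argument that converts the supremum in the restricted (iii) into the existence statement of (ii).
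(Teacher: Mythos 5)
Your proposal is correct and follows essentially the same route as the paper: the paper's proof consists precisely of the link-product identity $\<\rho,\Theta(\Phi_i)\>=(\rho*C_{\Phi_i})*C_\Theta=\<\rho*C_{\Phi_i},\Theta\>$ followed by a direct appeal to Theorem~\ref{thm:rand} and the definition of $\|\cdot\|^\Fe_{R|A}$. Your additional care with the normalization to states, the no-common-label check for associativity, and the reduction to $R\simeq A'$ via the identity superchannel and compactness of $\bar\Fe(A,A')$ only spells out details the paper leaves implicit.
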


\begin{proof} We only need to observe that for any $\Theta\in \Fe(A,R)$ and $\rho\in \states(R)$, we have
\[
\<\rho,\Theta(\Phi_1)\>=\rho*C_{\Theta(\Phi_1)}=(\rho*C_{\Phi_1})*C_\Theta=\<\rho*C_{\Phi_1},\Theta\>,
\]
similarly for $\Phi_2$.
The proof now follows directly from Theorem \ref{thm:rand} and the definition of $\|\cdot\|^\Fe_{R|A}$
($\|\cdot\|^\Fe_{R|A'}$).

\end{proof}

\begin{rem}\label{rem:F_link} There is an ambiguity around  $\rho*C_{\Phi_i}$ that might cause some confusion: it may
happen that parts
of $R_0$ or $R_1$ coincide with some parts of the input or output spaces of $\Phi_i$, so it is unclear how to apply the
link product. In some cases, such as in some of the sections below, the subcategory $\Fe$ does not permit any processing on some
parts of the input and/or output spaces. In this case, these parts are always fixed and are viewed as
the same, so that they are connected in the link product. In diagram, if the fixed input of the channels is $B_0$ and
the fixed output is $B_1$, we obtain in this case
\begin{center}
\begin{minipage}[c]{0.15\textwidth}
\centering
$\<\rho,\Theta(\Phi)\> =$
\end{minipage}
\begin{minipage}[c]{0.5\textwidth}
\centering
\includegraphics{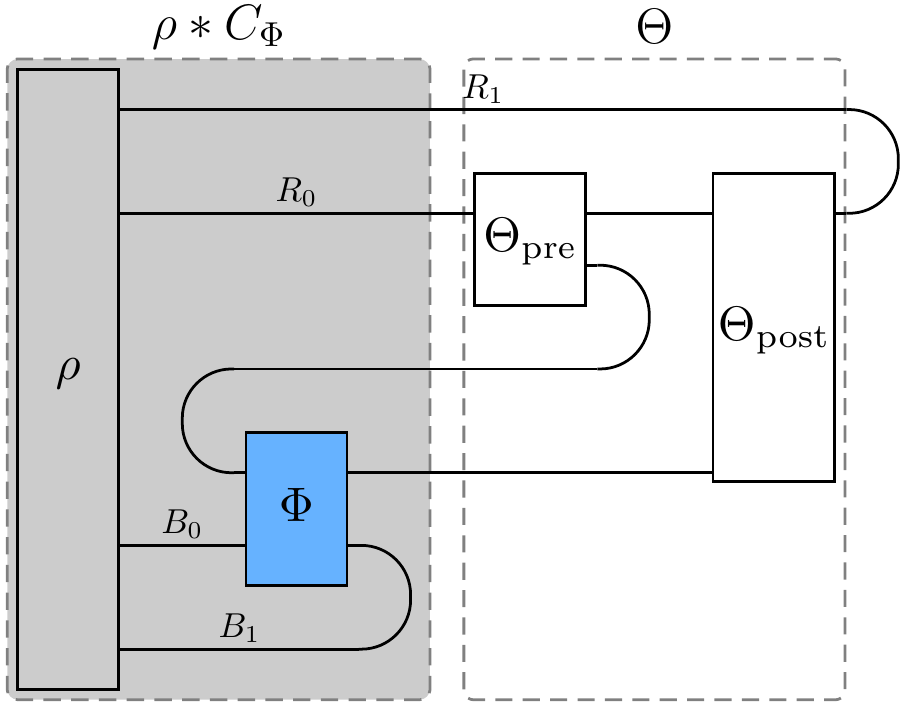}
\end{minipage}
\begin{minipage}[c]{0.2\textwidth}
\centering
$=\<\rho*C_\Phi,\Theta\>$.
\end{minipage}
\end{center}
In all other cases all the involved spaces are treated as independent, so
that the link product is in fact the tensor product, $\rho*C_{\Phi_i}=\rho\otimes C_{\Phi_i}$, in diagram
\begin{center}
\begin{minipage}[c]{0.15\textwidth}
\centering
$\<\rho,\Theta(\Phi)\> =$
\end{minipage}
\begin{minipage}[c]{0.4\textwidth}
\centering
\includegraphics{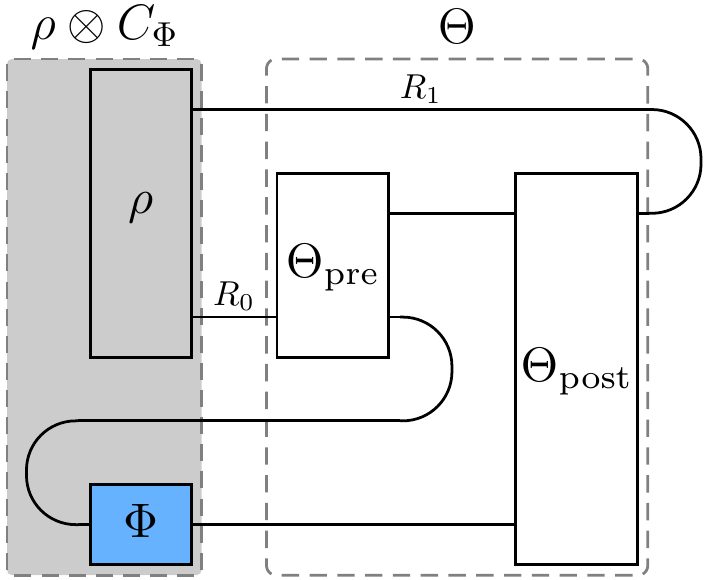}
\end{minipage}
\begin{minipage}[c]{0.15\textwidth}
\centering
$=\<\rho\otimes C_\Phi,\Theta\>$.
\end{minipage}
\end{center}

\end{rem}

The questions of simulability/convertibility of quantum channels naturally appear in the quantum resource theory of
processes, \cite{liu2020operational,liu2019resource,gour2019theentanglement,gour2020dynamical}. In this case, the subset $\mathcal
O(A_0,A_1):=\Fe(1,A)\subseteq \Ce(A_0,A_1)$ is the set of free channels and $\Fe(A,A')$ is the set of free
superchannels, that is, transformations that can be performed at no cost. 
The subcategory is usually assumed to have further properties, such as that it behaves well under tensor
products (i.e. it is a symmetric monoidal subcategory  in the category of quantum channels with
superchannels). Note that a convex symmetric monoidal category is called a convex resource theory in
\cite{coecke2016amathematical}.

   Assume that the resource theory is such that all free channels can be converted one into another
by free superchannels, then for each $\rho\in \states(R_0R_1)$ the map 
\[
\varphi_\rho: \Phi\mapsto \|\rho*C_\Phi\|^\Fe_{R|A}
\]
is constant over $\Phi\in \mathcal O$, namely $\varphi_\rho(\Phi)=\sup_{\Psi\in \mathcal O}\<\rho,\Psi\>$ for any $\Phi\in
\mathcal O$. Therefore each $\varphi_\rho$ can be easily normalized to be 0 on $\mathcal O$. Furthermore, if any channel
$\Phi$ can be converted into a channel arbitrarily close to $\mathcal O$ by some elements in ${\Fe}$, then Theorem \ref{thm:rand_chans} 
implies that $\varphi_\rho(\Phi)\ge \sup_{\Psi\in \mathcal O}\<\rho,\Psi\>$ and the set of normalized $\varphi_\rho$ becomes a complete family of resource monotones. In fact,
the case $\epsilon =0$ is closely related to \cite[Theorem III.3]{gour2019theentanglement}.

Specific examples of the subcategory $\Fe$ will be studied in detail in the next sections: postprocessings and 
preprocessings of quantum channels, processings of bipartite channels by LOCC and by partial superchannels.
In all these cases, the superchannels in $\Fe$ consist of 
pre- and postprocessings belonging to some 
specified families of channels $\mathcal C_{\pre}$ and $\mathcal C_{\post}$. More precisely, for $R=R_0R_1$, $S=S_0S_1$
admissible in $\Fe$, any superchannel $\Theta\in \Fe(R,S)$ has the form $\Theta=\Lambda_{\pre}*\Lambda_{\post}$,
where $\Lambda_{\pre}\in \Ce_{\pre}(S_0, UR_0)$ and $\Lambda_{\post}\in \Ce_{\post}(UR_1,S_1)$, where $U$ is some ancilla.
 To ensure that $\Fe$ is a convex subcategory, we have to assume that
for any input-output spaces $R,S,T$ admissible in $\Fe$ and any ancillas $U,V$ available in $\Ce_{\pre}$ and $\Ce_{\post}$, we have
\begin{enumerate}
\item[(a)] $id_{R_0}\in \Ce_{\pre}(R_0,R_0)$, $id_{R_1}\in \Ce_{\post}(R_1,R_1)$;
\item[(b)] both $\Ce_{\pre}$ and $\Ce_{\post}$ are closed
under composition, that is, 
\[
\alpha\in \Ce_{\pre}(S_0,UR_0),\ \alpha'\in \Ce_{\pre}(T_0,VS_0) \ \implies\ (id_V\otimes \alpha)\circ \alpha'\in 
\Ce_{\pre}(T_0,VUR_0)
\]
and 
\[
\beta\in \Ce_{\post}(UR_1,S_1),\ \beta'\in \Ce_{\post}(VS_1,T_1)\ \implies \beta'\circ(id_V\otimes \beta)\in
\Ce_{\post}(VUR_1,T_1);
\]
\item[(c)] the sets $\Ce_{\pre}(S_0,UR_0)$ and $\Ce_{\post}(UR_1,S_1)$ are convex.
\end{enumerate}
Further conditions may be needed to ensure convexity, such as
\begin{enumerate}
\item[(d)] $\Ce_{\pre/\post}$  are closed under appending/discarding  a classical register, more precisely, that  we have
\[
|i\>\<i|\otimes (\cdot) \in \mathcal C_{\pre}(R_0,UR_0),\qquad \sum_i \<i|\cdot  |i\>\otimes \Phi_i \in
\Ce_{\post}(UVR_1,S_1),\ \text{if } \Phi_i\in \mathcal C_{\post}(VR_1,S_1),\ \forall i.
\]

\end{enumerate}

 We will write 
\[
\Fe=\mathcal C_{\pre}*\mathcal C_{\post}
\]
to emphasize that the morphisms in $\Fe$ have this form.
This  is a natural assumption in the framework of resource theories, where $\mathcal C_{\pre}=\mathcal C_{\post}=\mathcal
O$ is the set of free channels \cite{coecke2016amathematical,gour2019theentanglement}. We will study the subcategory of LOCC
superchannels, where the objects are spaces of bipartite channels and $\mathcal O=\Ce_{LOCC}$. We may similarly consider
 the category of restricted PPT superchannels, where $\mathcal O=\Ce_{PPT}$, or restricted separable superchannels 
where $\mathcal O=\Ce_{SEP}$, \cite{gour2019theentanglement, gour2020dynamical}. 

We will use the decomposition $\Fe=\Ce_{\pre}*\Ce_{\post}$ to characterize $\delta_\Fe$ 
 by success probabilities in  modified guessing games of the form  depicted in the diagram
\begin{center}
\includegraphics{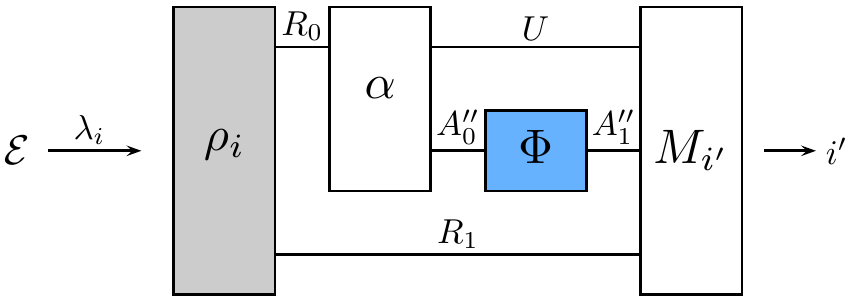}
\end{center}

Here $\Ee=\{\lambda_i,\rho_i\}$ is an ensemble on an admissible space $R=R_0R_1$, a preprocessing $\alpha$ can be chosen from 
$\Ce_{\pre}(R_0,UA_1'')$,   $M$ is a measurement chosen from a family $\Me_{\post}(UA_1''R_1)$ of allowed measurements
 and $A_0''$, $A_1''$ are the input and output spaces of  $\Phi$, which  is either $\Phi_1$ or $\Phi_2$.  The optimal success probability with this scheme is 
\[
P_{\succ}^{\Ce_{\pre}(R_0,\cdot A_0''),\Phi,\Me_{\post}}(\Ee):= \sup_{\alpha\in \Ce_{\pre}, M\in \Me_{\post}} 
P_{\succ}((\alpha\otimes id)(\Ee),(\Phi^*\otimes id)(M)),
\] 
here $\Ce_{\pre}(R_0,\cdot A_0'')$ indicates that we allow any ancilla $U$ that is permitted by the structure of
$\Ce_{\pre}$ and $\Me_{\post}$.
We will also  assume  that the family $\Me_{\post}$ is closed under  preprocessings by $\Ce_{\post}$, more precisely, for any $R,S,T$ admissible in $\Fe$ and any
ancillas $U,V$, we have
\[
\beta\in \Ce_{\post}(UR_1,S_1),\ M\in \Me_{\post}(VS_1T_1)\ \implies \ (id_{VT_1}\otimes \beta^*)(M)\in
\Me_{\post}(UVR_1T_1).
\]

\begin{thm}\label{thm:rand_chans_psuc} Let $\Fe=\Ce_{\pre}*\Ce_{\post}$ and let $\Me_{\post}$ be a family of
measurements closed under preprocessings by $\Ce_{\post}$. 
If $\delta_\Fe(\Phi_1\|\Phi_2)\le \epsilon$, then for any $R=R_0R_1$ admissible in $\Fe$ and any ensemble $\Ee$ on $R_0R_1$ we
have
\[
P_{\succ}^{\Ce_{\pre}(R_0,\cdot A'_0),\Phi_2,\Me_{\post}}(\Ee)\le
P_{\succ}^{\Ce_{\pre}(R_0,\cdot A_0),\Phi_1,\Me_{\post}}(\Ee)+\frac{\epsilon}2P_{\succ}(\Ee).
\]
Moreover, if the Bell measurement $\mathsf{B}^{A'_1}\in \Me_{\post}(A'_1A'_1)$ and for any measurement 
$M\in \Me_{\post}(\cdot A_1A'_1)$ with
$d_{A'_1}^2$ outcomes we have  $\beta^M\in \Ce_{\post}$, then the converse also holds, with 
 $R_0\simeq A'_0$ and $R_1\simeq A'_1$.
\end{thm}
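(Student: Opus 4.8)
The plan is to prove the two implications separately, mirroring the asymmetry of the hypotheses: the forward (``only if'') implication will use only assumption (b) (closure of $\Ce_{\pre}$ under composition) and the closure of $\Me_{\post}$ under $\Ce_{\post}$-preprocessings, whereas the converse will additionally exploit $\mathsf{B}^{A'_1}\in\Me_{\post}$ and $\beta^M\in\Ce_{\post}$ together with Theorem~\ref{thm:rand_chans}.

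For the forward direction I would argue directly at the level of the games, in the spirit of Corollary~\ref{coro:psuc}. Fix an ensemble $\Ee=\{\lambda_i,\rho_i\}$ on $R_0R_1$, a preprocessing $\alpha\in\Ce_{\pre}(R_0,\cdot A'_0)$ and a compatible measurement $M\in\Me_{\post}$. Since $\delta_\Fe(\Phi_1\|\Phi_2)\le\epsilon$, for every $\mu>0$ there is $\Lambda_0=\Lambda_{0,\pre}*\Lambda_{0,\post}\in\Fe(A,A')$ with $\|\Lambda_0(\Phi_1)-\Phi_2\|_\diamond\le\epsilon+\mu$. Writing both success probabilities through the q-c state $\rho_\Ee$ as in \eqref{eq:Psucc_e_m}, I get $P_{\succ}((\alpha\otimes id)(\Ee),(\Phi_2^*\otimes id)(M))=\<\rho_\Ee,\Psi_2\>$ and the analogous expression with $\Lambda_0(\Phi_1)$ in place of $\Phi_2$ equal to $\<\rho_\Ee,\Psi_1\>$, where $\Psi_2-\Psi_1=\Phi_M\circ(id\otimes(\Phi_2-\Lambda_0(\Phi_1))\otimes id)\circ(\alpha\otimes id)$. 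Because $\alpha$ and $\Phi_M$ are channels and tensoring with the identity preserves the diamond norm, $\|\Psi_2-\Psi_1\|_\diamond\le\|\Phi_2-\Lambda_0(\Phi_1)\|_\diamond\le\epsilon+\mu$; applying Proposition~\ref{prop:basesec}(iv) and \eqref{eq:ensemble} then yields $\<\rho_\Ee,\Psi_2-\Psi_1\>\le\frac12 P_{\succ}(\Ee)(\epsilon+\mu)$. It remains to recognise the $\Lambda_0(\Phi_1)$-term as a genuine $\Phi_1$-game: absorbing $\Lambda_{0,\pre}$ into the preprocessing gives $\alpha''=(id\otimes\Lambda_{0,\pre})\circ\alpha\in\Ce_{\pre}(R_0,\cdot A_0)$ by (b), and absorbing $\Lambda_{0,\post}$ into the measurement gives $M'=(id\otimes\Lambda_{0,\post}^*)(M)\in\Me_{\post}$ by the assumed closure, so that $\<\rho_\Ee,\Psi_1\>=P_{\succ}((\alpha''\otimes id)(\Ee),(\Phi_1^*\otimes id)(M'))\le P_{\succ}^{\Ce_{\pre}(R_0,\cdot A_0),\Phi_1,\Me_{\post}}(\Ee)$. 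Taking the supremum over $\alpha,M$ and letting $\mu\to0$ gives the claimed inequality.

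For the converse I would reduce to Theorem~\ref{thm:rand_chans}(iii), which by its final clause need only be verified for $R_0\simeq A'_0$, $R_1\simeq A'_1$. The bridge is the identity, valid for any $\Theta=\Lambda_{\pre}*\Lambda_{\post}\in\Fe(A,R)$ and $\rho\in\states(R_0R_1)$,
\[
d_{R_1}^{-1}\<\rho,\Theta(\Phi)\>=P_{\succ}\big((\Lambda_{\pre}\otimes id)(\Ee^{R_1}_\rho),(\Phi^*\otimes id)(N)\big),\qquad N_x=(\Lambda_{\post}^*\otimes id)(\mathsf{B}^{R_1}_x),
\]
obtained from \eqref{eq:channel_meas} applied to $\Theta(\Phi)=\Lambda_{\post}\circ(id_U\otimes\Phi)\circ\Lambda_{\pre}$ after transferring $\Lambda_{\pre}^*$ from the measurement $M^{\Theta(\Phi)}$ onto the ensemble states. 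Two one-sided consequences are all I need. For $\Phi_2$: since $\mathsf{B}^{A'_1}\in\Me_{\post}$ and $\Me_{\post}$ is closed under $\Ce_{\post}$-preprocessings, each $N$ above lies in $\Me_{\post}$, so taking the supremum over $\Theta$ gives $\|\rho*C_{\Phi_2}\|^\Fe_{R|A'}\le d_{R_1}P_{\succ}^{\Ce_{\pre}(R_0,\cdot A'_0),\Phi_2,\Me_{\post}}(\Ee^{R_1}_\rho)$. For $\Phi_1$: given any admissible $\alpha$ and any $M\in\Me_{\post}$ with $d_{A'_1}^2$ outcomes, the hypothesis $\beta^M\in\Ce_{\post}$ together with $M^{\beta^M}=M$ lets me choose $\Lambda_{\pre}=\alpha$, $\Lambda_{\post}=\beta^M$ in the identity and recover $N=M$, whence $d_{R_1}P_{\succ}^{\Ce_{\pre}(R_0,\cdot A_0),\Phi_1,\Me_{\post}}(\Ee^{R_1}_\rho)\le\|\rho*C_{\Phi_1}\|^\Fe_{R|A}$. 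Feeding these two bounds and $\|\rho\|^\diamond_{R_1|R_0}=d_{R_1}P_{\succ}(\Ee^{R_1}_\rho)$ (Lemma~\ref{lemma:dualnorm_Psuc}) into the assumed success-probability inequality applied to $\Ee=\Ee^{R_1}_\rho$ yields exactly Theorem~\ref{thm:rand_chans}(iii), so $\delta_\Fe(\Phi_1\|\Phi_2)\le\epsilon$.

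I expect the main obstacle to be the bookkeeping in the correspondence identity of the converse: one must carefully track the ancilla $U$ and the reference system through the decomposition $\Theta(\Phi)=\Lambda_{\post}\circ(id_U\otimes\Phi)\circ\Lambda_{\pre}$, verify that $M^{\Theta(\Phi)}$ factors so that the preprocessing $\Lambda_{\pre}$ separates cleanly from the $\Lambda_{\post}$-dressed Bell measurement $N$, and confirm that the number of outcomes $d_{R_1}^2$ matches on both sides so that the hypotheses on $\mathsf{B}^{A'_1}$ and on $\beta^M$ (for $M$ on $\,\cdot\,A_1A'_1$) apply verbatim. The forward direction, by contrast, is essentially a relative version of Corollary~\ref{coro:psuc} and should present no serious difficulty once the two games are rewritten through $\rho_\Ee$.
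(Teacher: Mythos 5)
Your proof is correct and follows essentially the same route as the paper's: the forward direction absorbs $\Lambda_{\pre}$ into the preprocessing and $\Lambda_{\post}^*$ into the measurement exactly as the paper does, and the converse uses the ensemble $\Ee^{A'_1}_\rho$, the correspondence $M\leftrightarrow\beta^M$ and Lemma \ref{lemma:dualnorm_Psuc} in the same way. The only cosmetic difference is that you reduce to condition (iii) of Theorem \ref{thm:rand_chans} (bounding $\|\rho*C_{\Phi_2}\|^\Fe_{A'|A'}$ from above by the $\Phi_2$-game value over all $\Theta\in\Fe(A',A')$), whereas the paper reduces to condition (ii) (bounding $\<\rho,\Phi_2\>$ itself, using only the trivial preprocessing and the Bell measurement on the $\Phi_2$ side).
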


\begin{proof} Assume that there is some $\Lambda\in \bar{\Fe}(A,A')$ such that $\|\Lambda(\Phi_1)-\Phi_2\|_\diamond\le
\epsilon$. Let $\Ee$ be an ensemble on $R_0R_1$ and let $\alpha\in \Ce_{\pre}(R_0,UA'_0)$, $M\in \Me_{\post}(UA'_1R_1)$.
Then
\[
\|\Lambda(\Phi_1)\circ\alpha-\Phi_2\circ\alpha\|_\diamond=\|(\Lambda(\Phi_1)-\Phi_2)\circ\alpha\|_\diamond \le \epsilon
\]
and using  Corollary \ref{coro:psuc}, we have
\[
P_{\succ}((\alpha\otimes id)(\Ee),(\Phi^*_2\otimes id)(M))\le P_{\succ}((\alpha\otimes id)(\Ee),(\Lambda(\Phi_1)^*\otimes
id)(M))+\frac{\epsilon}2P_{\succ}(\Ee).
\]
Let $\Lambda=\Lambda_{\pre}*\Lambda_{\post}$, with $\Lambda_{\pre}\in \Ce_{\pre}(A'_0,VA_0)$, $\Lambda_{\post}\in
\Ce_{\post}(VA_1,A'_1)$. Then $\alpha'=\Lambda_{\pre}\circ\alpha\in \Ce_{\pre}(R_0, UVA_0)$ and $M'= \Lambda_{\post}^*(M)\in
\Me_{\post}(UVA_1R_1)$, so that 
\[
P_{\succ}((\alpha\otimes id)(\Ee),(\Lambda(\Phi_1)^*\otimes id)(M))=P_{\succ}((\alpha'\otimes id)(\Ee),(\Phi_1^*\otimes
id)(M'))\le P_{\succ}^{\Ce_{\pre}(R_0,\cdot A_0),\Phi_1,\Me_{\post}}.
\] 

We now prove the converse, assuming the two additional conditions. Suppose that the inequality holds with $R_0\simeq
A'_0$ and $R_1\simeq A'_1$. Let $\rho\in \states(A'_0A'_1)$ and let $\Ee=\Ee^{A'_1}_\rho$ be as in Section \ref{sec:guessing}, 
then by \eqref{eq:channel_meas}, we have
\begin{align*}
\<\rho,\Phi_2\>&=d_{A'_1}P_{\succ}(\Ee,(\Phi_2^*\otimes id)(\mathsf{B}^{A'_1})) \le
d_{A'_1}P_{\succ}^{\Ce_{\pre}(A'_0,\cdot A'_0),\Phi_2,\Me_{\post}}(\Ee)\\
&\le d_{A'_1}P_{\succ}^{\Ce_{\pre}(A'_0,\cdot A_0),\Phi_1,\Me_{\post}}(\Ee)+\frac{\epsilon}2d_{A'_1}P_{\succ}(\Ee),
\end{align*}
here we have used that $id_{A'_0}\in \Ce_{\pre}(A'_0,A'_0)$ and $\mathsf{B}^{A'_1}\in \Me_{\post}(A'_1A'_1)$. Choose any 
 $\alpha\in \Ce_{\pre}(A'_0,UA_0)$ and $M\in \Me_{\post}(UA_1A'_1)$ and consider the success probability 
$P_{\succ}((\alpha\otimes id)(\Ee),(\Phi_1^*\otimes id)(M))$.  Since the channel $\Phi_1\circ \alpha$ acts only on 
$A'_0$, we wee that 
\[
P_{\succ}((\alpha\otimes id)(\Ee), (\Phi_1^*\otimes id)(M))=P_{\succ}(\Ee^{A'_1}_{\tilde \rho}, M),
\]
with $\tilde \rho=(\Phi_1\circ\alpha\otimes id)(\rho)$. 
By \eqref{eq:meas_channel1} and the assumption, we have $\beta =\beta^M\in \Ce_{\post}(UA_1,A'_1)$ and 
\[
d_{A'_1}P_{\succ}(\Ee^{A'_1}_{\tilde \rho}, M)=\<\tilde \rho,\beta\>=\<(\Phi_1\circ\alpha\otimes id)(\rho),\beta\>=\<\rho,
\Theta(\Phi_1)\>,
\]
with $\Theta=\alpha*\beta\in \Fe(A,A')$.    Using the definition of $\|\cdot\|^\Fe_{A'|A}$ and Lemma
\ref{lemma:dualnorm_Psuc} we now obtain
\[
\<\rho,\Phi_2\>\le \|\rho* C_{\Phi_1}\|^\Fe_{A'|A}+\frac{\epsilon}2\|\rho\|^\diamond_{A'_1|A'_0}.
\]
By the condition (ii) in Theorem \ref{thm:rand_chans} this finishes the proof.

\end{proof}

}

\subsection{Comparison by postprocessings}\label{sec:post}

 Comparison of quantum channels by postprocessings was already considered in
\cite{jencova2016isit} (see also \cite{jencova2015comparison} for a longer version with complete proofs), where
 quantum versions of the randomization theorem (Example \ref{ex:lecam}) were studied and   the results below were
obtained. We consider this case for completeness, just to show that they fit
into the setting of Theorem \ref{thm:rand_chans}.

Assume that the channels $\Phi_1\in \Ce(A_0,A_1)$ and $\Phi_2\in \Ce(A_0,A'_1)$ have the same input space $A_0$.
The postprocessing deficiency of $\Phi_1$ with respect to $\Phi_2$ was defined in \cite{jencova2016isit}  as 
\[
\delta_{\post}(\Phi_1\|\Phi_2):=\min_{\Lambda\in \Ce(A_1,A'_1)} \|\Lambda\circ \Phi_1-\Phi_2\|_\diamond.
\]
Let  $\Fe$ be  the subcategory with admissible spaces of the form $A_0R_1$, with  a fixed input system  $A_0$,  and morphisms $\Le(A_0,R_1)\to
\Le(A_0,S_1)$  given by postprocessings $\Phi\mapsto \Lambda\circ \Phi$ for some $\Lambda\in \Ce(R_1,S_1)$. The sought
approximation now becomes

 \begin{center}
\begin{minipage}[c]{0.3\textwidth}
\centering
\includegraphics{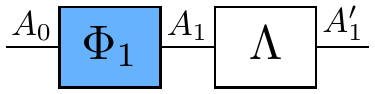}
\end{minipage}
\begin{minipage}[c]{0.01\textwidth}
 $\approx$
\end{minipage}
\begin{minipage}[c]{0.2\textwidth}
\centering
\includegraphics{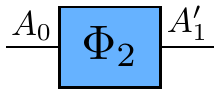}
\end{minipage}
 \end{center}

It is clear that $\delta_{\post}(\Phi_1\|\Phi_2)=\delta_\Fe(\Phi_1\|\Phi_2)$ and we may apply the results of the previous
section. Note also that $\Fe=\Ce_{\pre}*\Ce_{\post}$, where $\Ce_{\pre}=\{id_{A_0}\}$ and $\Ce_{\post}=\Ce$.  

\begin{thm}\label{thm:post} Let $\Phi_1\in \Ce(A_0,A_1)$, $\Phi_2\in \Ce(A_0,A'_1)$ and let $\epsilon\ge 0$. The
following are equivalent.
\begin{enumerate}
\item[(i)] $\delta_{\post}(\Phi_1\|\Phi_2)\le \epsilon$;
\item[(ii)] For any ancilla $R_1$ and  $\rho\in \states(A_0R_1)$, we have
\[
\|(\Phi_2\otimes id_{R_1})(\rho)\|_{R_1|A'_1}^\diamond\le \|(\Phi_1\otimes
id_{R_1})(\rho)\|_{R_1|A_1}^\diamond+\frac{\epsilon}2\|\rho\|_{R_1|A_0}^\diamond;
\]
\item[(iii)] For any ancilla $R_1$ and  any ensemble $\Ee$ on $A_0R_1$, we have
\[
P_{\succ}((\Phi_2\otimes id_{R_1})(\Ee))\le P_{\succ}((\Phi_1\otimes id_{R_1})(\Ee))+\frac{\epsilon}2P_{\succ}(\Ee).
\]
\end{enumerate}

Moreover, in (ii) and (iii), it is enough to use $R_1\simeq A'_1$. 

\end{thm}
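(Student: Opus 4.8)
The plan is to deduce all three equivalences from the two general comparison theorems for quantum channels, by checking that the postprocessing setting is an instance of their hypotheses and computing what the abstract quantities become here. Since $\delta_{\post}(\Phi_1\|\Phi_2)=\delta_\Fe(\Phi_1\|\Phi_2)$ and $\Fe=\Ce_{\pre}*\Ce_{\post}$ with $\Ce_{\pre}=\{id_{A_0}\}$ and $\Ce_{\post}=\Ce$, statement (i) is literally condition (i) of both Theorem \ref{thm:rand_chans} and Theorem \ref{thm:rand_chans_psuc}. The whole argument is then a matter of specializing the conditions (iii) of those two theorems.

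For (i) $\iff$ (ii) I would start from Theorem \ref{thm:rand_chans}(iii). Because every admissible space here has the fixed input $A_0$, the test space is $R=A_0R_1$, so $R_0=A_0$ and $\|\rho\|^\diamond_{R_1|R_0}=\|\rho\|^\diamond_{R_1|A_0}$. The main computational step is to identify the modified conditional min-entropy norms. Unwinding the definition and using the identity $\<\rho*C_{\Phi_2},\Theta\>=\<\rho,\Theta(\Phi_2)\>$ from the proof of Theorem \ref{thm:rand_chans}, together with the fact that a morphism of $\Fe$ acts on $\Phi_2$ by postcomposition with an arbitrary $\Lambda\in\Ce(A'_1,R_1)$, gives
\[
\|\rho*C_{\Phi_2}\|^\Fe_{R|A'}=\sup_{\Lambda\in\Ce(A'_1,R_1)}\<\rho,\Lambda\circ\Phi_2\>.
\]
Here one uses Remark \ref{rem:F_link}: the fixed input $A_0$ is contracted in the link product, so that $\rho*C_{\Phi_2}=(\Phi_2\otimes id_{R_1})(\rho)\in\Be_+(A'_1R_1)$ is the operator obtained by sending the $A_0$-part of $\rho$ through $\Phi_2$. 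By \eqref{eq:cond_min_op} the supremum over $\Lambda$ is exactly $\|(\Phi_2\otimes id_{R_1})(\rho)\|^\diamond_{R_1|A'_1}$, and likewise $\|\rho*C_{\Phi_1}\|^\Fe_{R|A}=\|(\Phi_1\otimes id_{R_1})(\rho)\|^\diamond_{R_1|A_1}$. Substituting these three identifications turns Theorem \ref{thm:rand_chans}(iii) verbatim into (ii), and the reduction to $R_1\simeq A'_1$ is inherited from the ``moreover'' clause there (with $R_0\simeq A'_0=A_0$ automatic).

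For (i) $\iff$ (iii) I would invoke Theorem \ref{thm:rand_chans_psuc} with $\Me_{\post}$ taken to be the family of all measurements. The hypotheses are straightforward to verify: all measurements are closed under preprocessing by arbitrary channels (as $\beta^*$ is unital and completely positive), the Bell measurement $\mathsf{B}^{A'_1}$ is a measurement, and $\beta^M$ is always a channel, hence lies in $\Ce_{\post}=\Ce$. It then remains to compute the guessing-game success probabilities. Since $\Ce_{\pre}=\{id_{A_0}\}$ forces $\alpha=id_{A_0}$ and a trivial ancilla $U$, the optimal success probability collapses, via \eqref{eq:Psucc_e_m}, to
\[
P_{\succ}^{\Ce_{\pre}(R_0,\cdot A''_0),\Phi,\Me_{\post}}(\Ee)=\sup_{M}P_{\succ}((\Phi\otimes id_{R_1})(\Ee),M)=P_{\succ}((\Phi\otimes id_{R_1})(\Ee))
\]
for $\Phi=\Phi_1,\Phi_2$. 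With this the inequality of Theorem \ref{thm:rand_chans_psuc} becomes exactly (iii), and again its ``moreover'' clause yields the reduction to $R_1\simeq A'_1$.

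The routine parts are the hypothesis checks and the bookkeeping of input/output labels; the one step requiring genuine care is the identification $\|\rho*C_\Phi\|^\Fe_{R|A}=\|(\Phi\otimes id_{R_1})(\rho)\|^\diamond_{R_1|\,\cdot\,}$, which is where the link-product convention of Remark \ref{rem:F_link} (contracting the fixed input $A_0$) and the operational formula \eqref{eq:cond_min_op} for the conditional min-entropy norm must be combined correctly.
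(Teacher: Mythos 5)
Your proposal is correct and follows essentially the same route as the paper: identify $\rho*C_{\Phi_i}=(\Phi_i\otimes id_{R_1})(\rho)$ via Remark \ref{rem:F_link} so that $\|\rho*C_{\Phi_i}\|^\Fe$ becomes the dual diamond norm by \eqref{eq:cond_min_op}, apply Theorem \ref{thm:rand_chans} for (i) $\iff$ (ii), and apply Theorem \ref{thm:rand_chans_psuc} with $\Me_{\post}$ the set of all measurements for (iii). Your write-up merely spells out the hypothesis checks and the collapse of the guessing game that the paper leaves implicit.
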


\begin{proof}
Note that for any $\rho\in \states(A_0R_1)$ we have $\rho*C_{\Phi_i}=(\Phi_i\otimes id)(\rho)$ (see Remark
\ref{rem:F_link}). It follows
that
\[
\|\rho*C_{\Phi_2}\|^\Fe_{A_0R_1|A_0A'_1}=\sup_{\Lambda\in \Ce(A'_1,R_1)} \<(\Phi_2\otimes id)(\rho), \Lambda\>=
\|(\Phi_2\otimes id)(\rho)\|^\diamond_{R_1|A'_1}
\]
and similarly for $\Phi_1$. This shows the equivalence (i) $\iff$ (ii) and the fact that we may assume $R_1\simeq A'_1$,
by a direct application of Theorem \ref{thm:rand_chans}. The rest of the proof follows by Theorem
\ref{thm:rand_chans_psuc} with $\Me_{\post}$ being the
set of all measurements.

\end{proof}

As we have seen, the guessing games have a simple form here, with no preprocessings and no restrictions on the
measurement $M$:

\begin{center}
\includegraphics{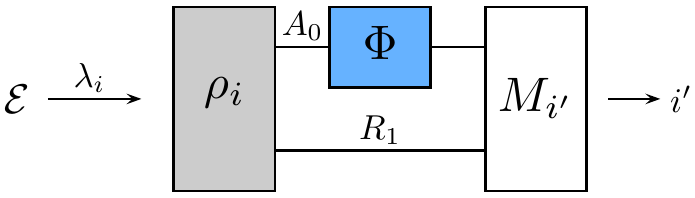}
\end{center}

 Note that classical-to-quantum channels can be identified with statistical experiments, see Example
\ref{ex:experiments} in the more general GPT case.  This provides another possible characterization for
classical-to-quantum channels, 
 in terms of guessing games with an inconclusive outcome,  where no ancilla is needed. 

For $\epsilon=0$, we obtain an ordering on the set of channels that was
treated also in \cite{buscemi2016degradable,buscemi2017comparison,chefles2009quantum,gour2018quantum}.
It was observed in \cite{buscemi2012comparison}  that in this case we may assume that the ensembles in (iii) consist of
separable states, see \cite[Theorem 2]{jencova2015comparison} for a proof close to the present setting. Equivalently, we may
restrict to separable states in (ii).
This also corresponds to the results of \cite{gour2018quantum}.

For similar results in the infinite dimensional case, see \cite{ganesan2020quantum}.

\subsection{Comparison by preprocessings}\label{sec:pre}

This time  the admissible spaces for $\Fe$ are $R_0A_1$ with fixed output system
$A_1$ and the morphisms in $\Fe(R_0A_1,S_0A_1)$ are restricted to preprocessings $\Phi\mapsto  \Phi\circ\Lambda$ for
some $\Lambda\in \Ce(S_0,R_0)$, so our problem has the form
 
\begin{center}
\begin{minipage}[c]{0.3\textwidth}
\centering
\includegraphics{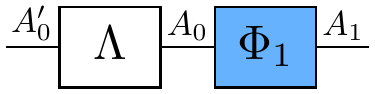}
\end{minipage}
\begin{minipage}[c]{0.01\textwidth}
 $\approx$
\end{minipage}
\begin{minipage}[c]{0.2\textwidth}
\centering
\includegraphics{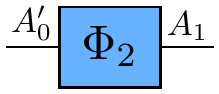}
\end{minipage}
 \end{center}

 Here $\Fe=\Ce_{\pre}*\Ce_{\post}$, where $\Ce_{\pre}=\Ce$ and $\Ce_{\post}=\{id_{A_1}\}$.
We have
\[
\delta_\Fe=\delta_{\pre}(\Phi_1\|\Phi_2):=\min_{\Lambda\in \Ce(A'_0,A_0)}\|\Phi_1\circ\Lambda-\Phi_2\|_\diamond.
\]
We will also consider the corresponding $\Fe$-distance, which  will be denoted by 
\[
\Delta_{\pre}:=\max\{\delta_{\pre}(\Phi_1\|\Phi_2),\delta_{\pre}(\Phi_2\|\Phi_1)\}.
\] 
A part of the following theorem was proved in \cite{jencova2014randomization}.
\begin{thm}\label{thm:pre} 
 Let $\Phi_1\in \Ce(A_0,A_1)$, $\Phi_2\in \Ce(A'_0,A_1)$ and let $\epsilon\ge 0$. The
following are equivalent.
\begin{enumerate}
\item[(i)] $\delta_{\pre}(\Phi_1\|\Phi_2)\le \epsilon$;

\item[(ii)] For any ancilla $R_0$ and  $\rho\in \states(R_0A_1)$, we have
\[
\|(id_{R_0}\otimes \Phi_2^{\mathsf{T}})(\rho)\|_{A'_0|R_0}^\diamond\le \|(id_{R_0}\otimes \Phi_1^{\mathsf{T}})(\rho)\|_{A_0|R_0}^\diamond+
\frac{\epsilon}2\|\rho\|_{A_1|R_0}^\diamond,
\]
\item[(iii)] For any ancilla $R_0$, any ensemble $\Ee$ on $R_0A_1$ and any fixed measurement $M$ on $A_1A_1$, we have
\[
P_{\succ}^{\Ce(R_0,A'_0),\Phi_2,\{M\}}(\Ee)\le P_{\succ}^{\Ce(R_0,A_0),\Phi_1,\{M\}}(\Ee)+\frac{\epsilon}2P_{\succ}(\Ee).
\]
\end{enumerate}

Moreover,  in (ii) and (iii), it is enough to use $R_0\simeq A'_0$.

\end{thm}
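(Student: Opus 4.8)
The plan is to follow the same route as in the postprocessing case (Theorem \ref{thm:post}), but now exploiting the \emph{dual} structure characteristic of preprocessings. First I would place the present situation in the general framework: here $\Fe=\Ce_{\pre}*\Ce_{\post}$ with $\Ce_{\pre}=\Ce$ and $\Ce_{\post}=\{id_{A_1}\}$, the admissible spaces have the form $R_0A_1$ with the output $A_1$ fixed, and a morphism in $\Fe(A_0A_1,R_0A_1)$ acts on a channel by $\Phi\mapsto \Phi\circ\Lambda$ for some $\Lambda\in \Ce(R_0,A_0)$. Since $\Ce_{\post}$ is trivial, the ancilla in any superchannel of $\Fe$ is forced to be trivial, so these are pure preprocessings. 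The equivalence (i) $\iff$ (ii) will be a direct application of Theorem \ref{thm:rand_chans}, and the equivalence with (iii) will come from Theorem \ref{thm:rand_chans_psuc}.

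The central computation is the evaluation of the modified conditional min-entropy norm $\|\rho*C_{\Phi_i}\|^\Fe$ appearing in Theorem \ref{thm:rand_chans}(iii). Using the definition of $\|\cdot\|^\Fe$ together with the identity $\<\rho*C_{\Phi_i},\Theta\>=\<\rho,\Theta(\Phi_i)\>$ established in the proof of Theorem \ref{thm:rand_chans}, I would write
\[
\|\rho*C_{\Phi_1}\|^\Fe_{R|A}=\sup_{\Lambda\in \Ce(R_0,A_0)}\<\rho,\Phi_1\circ\Lambda\>.
\]
The key step is then to transport $\Phi_1$ onto $\rho$: using $C_{\Phi_1\circ\Lambda}=C_\Lambda*C_{\Phi_1}$ and associativity of the link product one obtains $\<\rho,\Phi_1\circ\Lambda\>=\<(id_{R_0}\otimes\Phi_1^{\mathsf{T}})(\rho),\Lambda\>$, where the transpose map $\Phi_1^{\mathsf{T}}$ enters through the convention $C_{\phi^{\mathsf{T}}}=\Ue_{A_1,A_0}(C_\phi)$ governing the link product. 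Taking the supremum over $\Lambda\in\Ce(R_0,A_0)$ and invoking the operational formula \eqref{eq:cond_min_op} for the conditional min-entropy gives
\[
\|\rho*C_{\Phi_1}\|^\Fe_{R|A}=\|(id_{R_0}\otimes\Phi_1^{\mathsf{T}})(\rho)\|^\diamond_{A_0|R_0},
\]
and similarly $\|\rho*C_{\Phi_2}\|^\Fe_{R|A'}=\|(id_{R_0}\otimes\Phi_2^{\mathsf{T}})(\rho)\|^\diamond_{A'_0|R_0}$. Since $R_1=A_1=A'_1$ is fixed, the term $\|\rho\|^\diamond_{R_1|R_0}$ becomes $\|\rho\|^\diamond_{A_1|R_0}$; substituting everything into Theorem \ref{thm:rand_chans}(iii) reproduces condition (ii) verbatim, and its ``moreover'' clause $R_0\simeq A'_0$, $R_1\simeq A'_1$ collapses to $R_0\simeq A'_0$.

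For the equivalence with (iii) I would appeal to Theorem \ref{thm:rand_chans_psuc} with $\Ce_{\pre}=\Ce$ and $\Ce_{\post}=\{id_{A_1}\}$. For the direction (i) $\Rightarrow$ (iii) I take $\Me_{\post}=\{M\}$ to be a single fixed measurement on $A_1A_1$; closedness of $\Me_{\post}$ under preprocessing by $\Ce_{\post}=\{id_{A_1}\}$ is automatic, and because $\Me_{\post}$ leaves no room for an ancilla the structure forces the ancilla of the preprocessing $\alpha$ to be trivial, which is exactly why (iii) involves $\Ce(R_0,A'_0)$ with no ancilla and a measurement on $A_1A_1$. For the converse (iii) $\Rightarrow$ (i) I specialize to $M=\mathsf{B}^{A_1}$ and apply the converse part of Theorem \ref{thm:rand_chans_psuc}: its two hypotheses hold since $\mathsf{B}^{A'_1}=\mathsf{B}^{A_1}$ lies in $\Me_{\post}$, and the teleportation channel attached to the Bell measurement satisfies $\beta^{\mathsf{B}^{A_1}}=id_{A_1}\in\Ce_{\post}$ (indeed $\mathsf{B}^{A_1}=M^{id}$ in the bijection of Lemma \ref{lemma:dualnorm_Psuc}).

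I expect the main obstacle to be the bookkeeping of the transposes and of the placement of the conditioning system in the step $\<\rho,\Phi_1\circ\Lambda\>=\<(id_{R_0}\otimes\Phi_1^{\mathsf{T}})(\rho),\Lambda\>$: unlike the postprocessing case, where the channel simply acts as $(\Phi_i\otimes id)(\rho)$, here the channel is carried across the pairing, appears transposed, and the roles of the ``numerator'' and ``conditioning'' systems in the conditional min-entropy are interchanged. Checking that these conventions match the definitions of $\Phi^{\mathsf{T}}$ and of $\|\cdot\|^\diamond_{A_0|R_0}$ used in the statement is the only genuinely delicate point; everything else is a direct specialization of the two general theorems.
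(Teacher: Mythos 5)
Your proposal is correct and follows essentially the same route as the paper: the identity $\<\rho,\Phi\circ\Lambda\>=\<(id_{R_0}\otimes\Phi^{\mathsf{T}})(\rho),\Lambda\>$ reducing (i) $\iff$ (ii) to Theorem \ref{thm:rand_chans}, then Theorem \ref{thm:rand_chans_psuc} with $\Me_{\post}=\{M\}$ for (iii), closing the converse with $M=\mathsf{B}^{A_1}$ and $\beta^{\mathsf{B}^{A_1}}=id_{A_1}\in\Ce_{\post}$, is exactly the paper's argument.
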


The guessing games in (iii) are depicted in the diagram
 
\begin{center}
\includegraphics{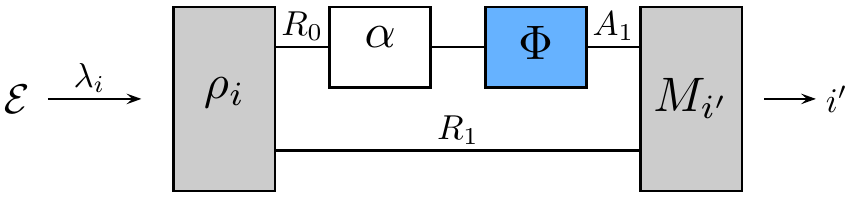}
\end{center}

Here the preprocessing $\alpha$ can be chosen freely and the measurement $M$ is fixed.

\begin{proof} The  equivalence (i) $\iff$ (ii) follow from Thm. \ref{thm:rand_chans} and 
\[
\<\rho,\Phi\circ\alpha\>= (\rho*C_\Phi)*C_\alpha=\<(id_{R_0}\otimes \Phi^{\mathsf{T}})(\rho),\alpha\>
\]
(see Remark \ref{rem:F_link}). In diagram:

\begin{center}
\begin{minipage}[c]{0.33\textwidth}
\includegraphics[align=c]{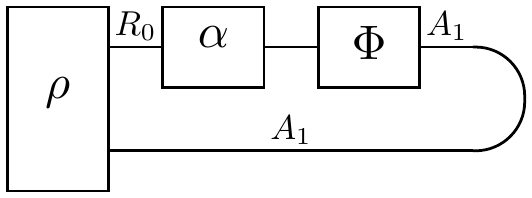}
\end{minipage}
\begin{minipage}[c]{0.03\textwidth}
\centering
${=}$
\end{minipage}
\begin{minipage}[c]{0.3\textwidth} 
{\includegraphics[align=c]{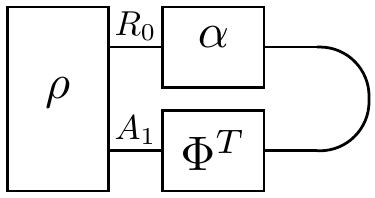}}
\end{minipage}
\end{center}

The implication (i) $\implies$ (iii) follows by Theorem \ref{thm:rand_chans_psuc}, with $\Me_{\post}=\{M\}$.
To finish the proof, we put $M=\mathsf{B}^{A_1}$ and observe 
 that  we have $\beta^M= id_{A_1}\in \Ce_{\post}$.

\end{proof}

Let us consider the more general situation when  $\Fe=\mathcal O*\{id_{A_1}\}$ where $\mathcal O(S_0,R_0)\subset
\Ce(S_0,R_0)$ is some suitable
subset.  Then
\[
\delta_\Fe(\Phi_1\|\Phi_2)=\inf_{\Lambda\in \mathcal O(A'_0,A_0)}\|\Phi_1\circ\Lambda-\Phi_2\|_\diamond.
\]
It can be seen as in  the above proof that we have $\delta_\Fe(\Phi_1\|\Phi_2)\le \epsilon$ if and only if 
for any ensemble $\Ee$ on $A'_0A_1$ and any fixed measurement $M$ on $A_1A_1$ we have
\[
P_{\succ}((\Phi_2\otimes id)(\Ee),M)\le \sup_{\Lambda\in \mathcal O}P_{\succ}((\Phi_1\circ\Lambda \otimes
id)(\Ee),M)+\frac{\epsilon}2P_{\succ}(\Ee).
\]
Let now $\Phi_1=id_{A_1}$, $\Phi_2=\Phi\in \Ce(A_0,A_1)$, then 
\[
\delta_\Fe(id\|\Phi)=\inf_{\Lambda\in \mathcal O(A_0,A_1)}\|\Lambda-\Phi\|_{\diamond},
\]
that is, the distance of $\Phi$ to the set $\mathcal O$. If $\mathcal O$ is the set of free channels in a resource theory for quantum channels, then 
this distance is a resource measure, \cite{liu2019resource}. The above considerations now give the
following operational characterization of this distance.

\begin{coro} $\inf_{\Lambda\in \mathcal F(A_0,A_1)}\|\Lambda-\Phi\|_{\diamond}\le \epsilon$ if and only if for any ensemble
$\Ee$ on $A_0A_1$ and any measurement $M$ on $A_1A_1$ we have
\[
P_{\succ}((\Phi\otimes id)(\Ee),M)\le \sup_{\Lambda\in \mathcal O} P_{\succ}((\Lambda\otimes
id)(\Ee),M)+\frac{\epsilon}2P_{\succ}(\Ee).
\]

\end{coro}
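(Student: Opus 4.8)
The plan is to derive this Corollary as a direct specialization of the general equivalence established just above it, namely the characterization of $\delta_\Fe(\Phi_1\|\Phi_2)\le\epsilon$ by success probabilities for the preprocessing subcategory $\Fe=\mathcal O*\{id_{A_1}\}$. First I would set $\Phi_1=id_{A_1}$ (as a channel in $\Ce(A_1,A_1)$) and $\Phi_2=\Phi\in\Ce(A_0,A_1)$ in that general statement. The left-hand distance then becomes
\[
\delta_\Fe(id\|\Phi)=\inf_{\Lambda\in\mathcal O(A_0,A_1)}\|id_{A_1}\circ\Lambda-\Phi\|_\diamond=\inf_{\Lambda\in\mathcal O(A_0,A_1)}\|\Lambda-\Phi\|_\diamond,
\]
which is exactly the quantity whose smallness we wish to characterize.

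Next I would substitute $\Phi_1=id$ into the general success-probability inequality. The term $P_{\succ}((\Phi_1\circ\Lambda\otimes id)(\Ee),M)$ becomes $P_{\succ}((\Lambda\otimes id)(\Ee),M)$, since composing with $id_{A_1}$ changes nothing, and the $\Phi_2$-term becomes $P_{\succ}((\Phi\otimes id)(\Ee),M)$. This reproduces the asserted inequality verbatim. The equivalence for general $\mathcal O$ was stated (\emph{``It can be seen as in the above proof...''}) as following from the argument used for Theorem \ref{thm:pre}, so I may invoke it directly; the only genuine content of the Corollary is recognizing that the free-channel distance $\inf_{\Lambda\in\mathcal O}\|\Lambda-\Phi\|_\diamond$ is precisely $\delta_\Fe(id\|\Phi)$ for the preprocessing category with $\mathcal O$ as the allowed preprocessings.

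One point to verify carefully is that $id_{A_1}$ is itself an admissible element, i.e.\ that the trivial channel on $A_1$ sits in the relevant space of $\Fe$ so that $\delta_\Fe(id\|\Phi)$ is well-defined; this holds because the preprocessing scheme fixes the output system $A_1$ and allows the input system to vary, so $id_{A_1}\in\Ce(A_1,A_1)$ and $\Phi\in\Ce(A_0,A_1)$ share the required fixed output and we compare via preprocessings $\Lambda\in\mathcal O(A_0,A_1)$. I would also note that the notation $\mathcal F$ appearing in the Corollary statement is synonymous with $\mathcal O$ (the free channels), matching the discussion immediately preceding it.

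The argument is essentially a relabeling, so there is no serious obstacle; the main thing to get right is the bookkeeping of which channel plays the role of $\Phi_1$ versus $\Phi_2$ and confirming that setting $\Phi_1=id$ collapses the preprocessing composition $\Phi_1\circ\Lambda$ to $\Lambda$ so that the supremum over $\Lambda\in\mathcal O$ on the right-hand side is exactly the optimization appearing in the distance to $\mathcal O$.

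\begin{proof}
This follows by applying the general equivalence established above for the preprocessing subcategory $\Fe=\mathcal O*\{id_{A_1}\}$ to the special case $\Phi_1=id_{A_1}$ and $\Phi_2=\Phi$. With this choice, $\Phi_1\circ\Lambda=\Lambda$ for any $\Lambda\in\mathcal O(A_0,A_1)$, so that
\[
\delta_\Fe(id\|\Phi)=\inf_{\Lambda\in\mathcal O(A_0,A_1)}\|\Lambda-\Phi\|_\diamond,
\]
which is the distance of $\Phi$ to the set $\mathcal F=\mathcal O$. Substituting $\Phi_1=id_{A_1}$ into the success-probability inequality, the term $P_{\succ}((\Phi_1\circ\Lambda\otimes id)(\Ee),M)$ becomes $P_{\succ}((\Lambda\otimes id)(\Ee),M)$, while the $\Phi_2$-term becomes $P_{\succ}((\Phi\otimes id)(\Ee),M)$. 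The claimed equivalence is therefore exactly the preprocessing characterization specialized to $\Phi_1=id_{A_1}$.
\end{proof}
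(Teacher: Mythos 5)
Your proposal is correct and follows exactly the paper's route: the corollary is obtained by specializing the preprocessing characterization for $\Fe=\mathcal O*\{id_{A_1}\}$ to $\Phi_1=id_{A_1}$, $\Phi_2=\Phi$, so that $\Phi_1\circ\Lambda$ collapses to $\Lambda$ and $\delta_\Fe(id\|\Phi)$ becomes the diamond-norm distance of $\Phi$ to $\mathcal O$. The paper gives no separate proof beyond this observation, and your bookkeeping of the roles of $\Phi_1$, $\Phi_2$ and the identification $\mathcal F=\mathcal O$ matches it.
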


{

\subsubsection{$\Delta_{\pre}$ as the distance of ranges}

In this paragraph, we  obtain  a characterization of $\delta_{\pre}$ and the pseudo-distance $\Delta_{\pre}$ in terms of
the ranges of channels. Recall that the range of a channel $\Phi\in \Ce(A_0,A_1)$ is defined as
\[
\mathcal R(\Phi)=\Phi(\states(A_0)).
\]
Our first result in this direction is based on the following simple lemma. The proof is rather standard and is included
for the convenience of the reader.

\begin{lemma}\label{lemma:simple} Let $\sigma\in \states(AR)$ and let $Z\in \Be(R)$ be such that
$\sigma_R=\ptr_A[\sigma]=ZZ^*$. Then there is some channel
$\beta\in \Ce(R,A)$ such that  $\sigma=(\beta\otimes id)(|Z^{\mathsf{T}}\doublek Z^{\mathsf{T}}|)$. 

\end{lemma}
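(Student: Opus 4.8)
The plan is to reduce the statement to a linear equation for the Choi matrix of the sought channel and then solve that equation explicitly, treating the kernel of $Z$ with care.

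First I would rewrite the rank-one operator in terms of the maximally entangled operator $|I_R\doublek I_R|$. Applying the identity $|W\rrangle=(W\otimes I)|I_R\rrangle=(I\otimes W^{\mathsf{T}})|I_R\rrangle$ of \eqref{eq:dobleket} to $W=Z^{\mathsf{T}}$ gives $|Z^{\mathsf{T}}\rrangle=(I_R\otimes Z)|I_R\rrangle$, whence $|Z^{\mathsf{T}}\doublek Z^{\mathsf{T}}|=(I_R\otimes Z)|I_R\doublek I_R|(I_R\otimes Z^*)$. Since $Z$ acts on the tensor factor that $\beta$ leaves untouched, for any $\beta\in\Ce(R,A)$ one gets
\[
(\beta\otimes id)(|Z^{\mathsf{T}}\doublek Z^{\mathsf{T}}|)=(I_A\otimes Z)\,C_\beta\,(I_A\otimes Z^*),
\]
where $C_\beta\in\Be_+(AR)$ is the Choi matrix of $\beta$. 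By the characterization of channels via their Choi matrices, the lemma thus reduces to producing some $C\in\Be_+(AR)$ with $\ptr_A[C]=I_R$ and $(I_A\otimes Z)C(I_A\otimes Z^*)=\sigma$.

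Next I would record that $\sigma$ is supported on $A\otimes\mathrm{ran}(Z)$. Indeed $\sigma_R=ZZ^*$ has range $\mathrm{ran}(Z)$; writing $P$ for the orthogonal projection onto $\mathrm{ran}(Z)$ and $Q=I_R-P$, positivity of $\sigma$ together with $\Tr[(I_A\otimes Q)\sigma]=\Tr[Q\sigma_R]=0$ forces $(I_A\otimes Q)\sigma=0$, that is, $(I_A\otimes P)\sigma(I_A\otimes P)=\sigma$. I would then exhibit the solution using the Moore--Penrose pseudoinverse $Z^+$. Fixing any state $\tau\in\states(A)$ and letting $P_{\ker Z}=I_R-Z^+Z$ be the projection onto $\ker Z$, I set
\[
C:=(I_A\otimes Z^+)\,\sigma\,(I_A\otimes (Z^+)^*)+\tau\otimes P_{\ker Z},
\]
which is manifestly positive. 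Using $ZZ^+=P$ and the support property, conjugation of the first term by $I_A\otimes Z$ returns $\sigma$, while the second term is annihilated since $ZP_{\ker Z}=0$; hence $(I_A\otimes Z)C(I_A\otimes Z^*)=\sigma$. For the normalization, the partial trace of the first term is $Z^+ZZ^*(Z^+)^*=(Z^+Z)(Z^+Z)=Z^+Z=I_R-P_{\ker Z}$, and that of the second is $\Tr[\tau]P_{\ker Z}=P_{\ker Z}$, so $\ptr_A[C]=I_R$. Thus $C$ is the Choi matrix of a channel $\beta\in\Ce(R,A)$ with the required property.

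The only delicate point is the non-invertibility of $Z$: when $Z$ is invertible the formula collapses to $C=(I_A\otimes Z^{-1})\sigma(I_A\otimes Z^{-*})$ and the verification is immediate. All the genuine work lies in the support statement and in the correction term $\tau\otimes P_{\ker Z}$, which are exactly what is needed to upgrade the positive operator $(I_A\otimes Z^+)\sigma(I_A\otimes (Z^+)^*)$ to a trace-preserving (hence channel) Choi matrix.
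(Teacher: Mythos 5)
Your proof is correct and follows essentially the same route as the paper: the paper writes $Z=\sigma_R^{1/2}U$ with $U$ unitary and conjugates $\sigma$ by $I\otimes U^*\sigma_R^{-1/2}$ (which is exactly $I\otimes Z^+$) before adding the correction $d_A^{-1}I_A\otimes(I-U^*p_RU)$ supported on $\ker Z$, i.e.\ your construction with the particular choice $\tau=d_A^{-1}I_A$. The only differences are notational (Moore--Penrose pseudoinverse versus polar decomposition) and your slightly more general free state $\tau$.
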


\begin{proof} Let $p=\mathrm{supp}(\sigma)$ and let $p_R=\mathrm{supp}(\sigma_R)$, then $p\le I\otimes p_R$ and we may
put 
\[
C:=(I\otimes U^*\sigma_R^{-1/2})\sigma(I\otimes \sigma^{-1/2}_RU)+d_A^{-1}I_A\otimes(I-U^*p_RU),
\]
where $U\in \Be(R)$ is a unitary such that $Z=\sigma^{1/2}_RU$ and the inverse is restricted to $p_R$. 
Since $C\ge 0$ and $\ptr_A[C]=I_R$, there is some $\beta\in \Ce(R,A)$ such that
$C=C_\beta$. We have
\[
\sigma=(I\otimes Z)C_\beta(I\otimes Z^*)=(\beta\otimes id)(|Z^{\mathsf{T}}\doublek Z^{\mathsf{T}}|).
\]

\end{proof}

\begin{coro}\label{coro:pre_ranges}
Let $\Phi_1\in \Ce(A_0,A_1)$, $\Phi_2\in \Ce(A'_0,A_1)$. Then 
\[
\delta_{\pre}(\Phi_1\|\Phi_2)=\sup_{\xi\in \states(A'_0R)}\inf_{\substack{\sigma\in \states(A_0R)\\
\sigma_{R}=\xi_{R}}}\|(\Phi_1\otimes id)(\sigma)-(\Phi_2\otimes id)(\xi)\|_1
\]
where $R\simeq A'_0$.
\end{coro}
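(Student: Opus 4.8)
The plan is to read each side as the value of a min--max game over preprocessing channels and test inputs, and to prove the two inequalities separately. First I would fix the ancilla $R\simeq A'_0$, which is legitimate because the maximization defining $\|\cdot\|_\diamond$ is attained with an ancilla isomorphic to the input space $A'_0$; this also accounts for the final assertion $R\simeq A'_0$. Unfolding the definition of $\delta_{\pre}$,
\[
\delta_{\pre}(\Phi_1\|\Phi_2)=\inf_{\Lambda\in\Ce(A'_0,A_0)}\ \sup_{\xi\in\states(A'_0R)}\ \|((\Phi_1\circ\Lambda-\Phi_2)\otimes id_R)(\xi)\|_1 .
\]
The inequality $\delta_{\pre}\ge(\text{right-hand side})$ is the easy half: for an optimal $\Lambda$ and any $\xi$, the state $\sigma:=(\Lambda\otimes id_R)(\xi)$ satisfies $\sigma_R=\xi_R$ since $\Lambda$ is trace preserving, hence it is feasible for the inner infimum, and $\|(\Phi_1\otimes id)(\sigma)-(\Phi_2\otimes id)(\xi)\|_1=\|((\Phi_1\Lambda-\Phi_2)\otimes id)(\xi)\|_1\le\|\Phi_1\Lambda-\Phi_2\|_\diamond=\delta_{\pre}$. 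Taking the infimum over $\sigma$ and then the supremum over $\xi$ keeps the bound.

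The substantive direction is $\delta_{\pre}\le(\text{right-hand side})$, which requires pulling the infimum over preprocessings outside the supremum over $\xi$. The bridge between ``image of $\Phi_1$ over the states reachable from $\xi$'' and ``image over all states with the prescribed $R$-marginal'' is exactly Lemma \ref{lemma:simple}: for $\xi$ pure with $\xi_R$ of full rank (write $\xi_R=ZZ^*$), every $\sigma\in\states(A_0R)$ with $\sigma_R=\xi_R$ arises as $\sigma=(\Lambda\otimes id)(\xi)$ for some $\Lambda\in\Ce(A'_0,A_0)$. For such $\xi$ the reachable set $\{(\Lambda\otimes id)(\xi):\Lambda\}$ therefore coincides with the convex affine set $\{\sigma:\sigma_R=\xi_R\}$, so $\inf_\Lambda\|((\Phi_1\Lambda-\Phi_2)\otimes id)(\xi)\|_1$ equals the inner infimum in the statement. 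Since $\|\cdot\|_\diamond$ is attained at a pure input whose $A'_0$-marginal may be taken of full rank (with a limiting argument to cover rank-deficient $\xi$), restricting the outer supremum to such $\xi$ does not change its value; this identifies the right-hand side with $\sup_\xi\inf_\Lambda\|((\Phi_1\Lambda-\Phi_2)\otimes id)(\xi)\|_1$.

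It then remains to justify the exchange $\inf_\Lambda\sup_\xi=\sup_\xi\inf_\Lambda$, and this is the step I expect to be the main obstacle, because the diamond norm is convex, not concave, in $\xi$, so Theorem \ref{thm:minimax} cannot be applied directly to the pair $(\Lambda,\xi)$. The remedy is to expose a genuinely bilinear structure on compact convex sets: dualize the trace norm as $\|X\|_1=\max_{-I\le T\le I}\Tr[TX]$, so that the inner infimum takes the form $\inf_{\sigma:\sigma_R=\xi_R}\max_{-I\le T\le I}\Tr[T((\Phi_1\otimes id)(\sigma)-(\Phi_2\otimes id)(\xi))]$. Here the pairing is bilinear in $(\sigma,T)$ over two compact convex sets, so Theorem \ref{thm:minimax} permits swapping $\inf_\sigma$ with $\max_T$. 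After the swap, the minimization over $\sigma$ under the single linear constraint $\sigma_R=\xi_R$ is an elementary conic program whose optimal value is linear in $\xi$, and the surviving supremum over $\xi$ collapses to a maximal-eigenvalue functional.

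This produces a closed dual form for the right-hand side, which I would then match with the dual form of $\delta_{\pre}=\inf_\Lambda\|\Phi_1\Lambda-\Phi_2\|_\diamond$ obtained from the same trace-norm dualization, yielding $\delta_{\pre}\le(\text{right-hand side})$ and hence equality. The crux of the whole argument is thus the order exchange of the infimum over preprocessings with the supremum over test inputs; I expect the careful part to be the bookkeeping around Lemma \ref{lemma:simple} (securing full-rank marginals, with the attendant limiting argument) so that reachability converts the infimum over channels into the convex infimum over $\{\sigma:\sigma_R=\xi_R\}$, together with packaging the diamond norm so that Theorem \ref{thm:minimax} applies. Once the minimax guarantees a single preprocessing that is simultaneously good for all $\xi$, the range formula follows.
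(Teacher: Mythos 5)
Your easy direction and your use of Lemma \ref{lemma:simple} are both correct: for a pure $\xi$ the set $\{(\Lambda\otimes id)(\xi):\Lambda\in\Ce(A'_0,A_0)\}$ does coincide with $\{\sigma:\sigma_R=\xi_R\}$, and since $\xi\mapsto\inf_{\sigma:\sigma_R=\xi_R}\|\cdot\|_1$ is convex, the outer supremum is attained at pure $\xi$, so the right-hand side equals $\sup_{\xi\ \mathrm{pure}}\inf_\Lambda\|((\Phi_1\circ\Lambda-\Phi_2)\otimes id)(\xi)\|_1$. You also correctly locate the crux: exchanging $\inf_\Lambda$ with $\sup_\xi$. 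But the mechanism you propose for that exchange does not work. Swapping $\inf_\sigma$ with $\max_{-I\le T\le I}$ for a \emph{fixed} $\xi$ is a different exchange from the one you need; after performing it you are still facing $\inf_\Lambda\sup_\xi$ versus $\sup_\xi\inf_\Lambda$, and the function $\|((\Phi_1\circ\Lambda-\Phi_2)\otimes id)(\xi)\|_1$ is convex, not concave, in $\xi$, so Theorem \ref{thm:minimax} still does not apply to the pair $(\Lambda,\xi)$. Moreover the value of the inner conic program $\inf\{\Tr[(\Phi_1^*\otimes id)(T)\sigma]:\sigma_R=\xi_R\}$ is not linear in $\xi$ (it is a supremum of linear functionals of $\xi_R$ by LP duality, hence convex), and the final step, ``match the closed dual form with the dual form of $\delta_{\pre}$,'' is exactly the nontrivial assertion to be proved; as written it is a restatement of the goal, not an argument.

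The paper avoids the $\inf_\Lambda$--$\sup_\xi$ exchange altogether. The minimax has already been performed once and for all in Theorem \ref{thm:rand} over the pair $(\Lambda,\varphi)$, where $\varphi$ ranges over the compact convex set $[0,B(\Ve_2^*)]$ of dual tests and the pairing $\<\varphi,b_2-\Lambda(b_1)\>$ is bilinear; the usable output is condition (ii) of Theorem \ref{thm:rand_chans}, which only requires exhibiting, \emph{for each} dual test $\rho\in\states(A'_0A_1)$, some channel $\beta$ (allowed to depend on $\rho$) with $\<\rho,\Phi_2\>\le\<\rho,\Phi_1\circ\beta\>+\tfrac{\epsilon}{2}\|\rho\|^\diamond_{A_1|A'_0}$. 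The bridge from a dual test $\rho$ to a single pure input state is Lemma \ref{lemma:dualnorm_inf}: write $\rho=(\chi_V\otimes id)(G)$ with $\Tr[VV^*]=1$ and $\|G\|=\|\rho\|^\diamond_{A_1|A'_0}$, set $\xi=|V\doublek V|$, take the $\sigma$ with $\sigma_R=\xi_R$ furnished by the hypothesis, realize it as $(\beta\otimes id)(\xi)$ by Lemma \ref{lemma:simple}, and estimate $\<\rho,\Phi_2-\Phi_1\circ\beta\>=\Tr[((\Phi_2\otimes id)(\xi)-(\Phi_1\otimes id)(\sigma))\tilde G]\le\tfrac12\|\cdot\|_1\,\|G\|$. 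To repair your proof you would need either to reproduce this reduction to Theorem \ref{thm:rand_chans}(ii) (your missing ingredient is Lemma \ref{lemma:dualnorm_inf}, which you never invoke), or to actually carry out and compare the two dual conic programs, which amounts to the same computation.
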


\begin{proof} Let $\Lambda\in \Ce(A'_0,A_0)$ be such that $\|\Phi_1\circ\Lambda-\Phi_2\|_\diamond\le \epsilon$. Then for
any $\xi\in\states(A'_0R)$, we have
\[
\|(\Phi_1\circ\Lambda\otimes id)(\xi)-(\Phi_2\otimes id)(\xi)\|_1\le \|\Phi_1\circ\Lambda-\Phi_2\|_\diamond\le \epsilon.
\] 
Put $\sigma=(\Lambda\otimes id)(\xi)$, then we also have $\sigma_{R}=\xi_{R}$, so that the supremum  on the right hand side is upper
bounded by $\delta_{\pre}$. For the converse, let  $\rho\in \states(A'_0A_1)$.
By Lemma \ref{lemma:dualnorm_inf}, there is some $V\in \Be(A'_0)$, $\Tr[VV^*]=1$ and an element $G\in \Be_+(A'_0A_1)$
such that (recall that $\chi_V=V\cdot V^*$) 
\[
\rho=(\chi_V\otimes id)(G),\qquad \|\rho\|^\diamond_{A_1|A'_0}=\|G\|.
\] 
Using \eqref{eq:dualitylink} and \eqref{eq:dualityswp}, we have
\begin{align*}
\<\rho,\Phi_2\>&=\<(\chi_V\otimes id)(G),\Phi_2\>=\<G,\Phi_2\circ\chi_V\>=
\Tr[C_{\Phi_2\circ\chi_V}\tilde G]\\
&=\Tr[(\Phi_2\otimes id)(|V\doublek V|)\tilde G],
\end{align*}
where $\tilde G=\mathcal U^*_{A_1,A'_0}(G^{\mathsf{T}})$.
 Note that  $\xi:=|V\doublek V|\in \states(A'_0A'_0)$, with $\ptr_1[\xi]=V^{\mathsf{T}}(V^{\mathsf{T}})^*$. Assume that  there is some
$\sigma\in \states(A_0A'_0)$ with 
$ \sigma_{A'_0}=\ptr_1[\xi]$ and 
\[
\|(\Phi_2\otimes id)(\xi)-(\Phi_1\otimes id)( \sigma)\|_1\le \epsilon.
\]
By Lemma \ref{lemma:simple}, there is some channel $\beta\in \Ce(A'_0,A_0)$ such that 
$
\sigma=(\beta\otimes id)(|V\doublek V|).
$
We now have 
\begin{align*}
\<\rho,\Phi_2-\Phi_1\circ\beta\>&=\<G,(\Phi_2-\Phi_1\circ\beta)\circ\chi_V\>\\
&=
\Tr[((\Phi_2\otimes id)(\xi)-(\Phi_1\otimes id)(\sigma))\tilde G]\\
&\le \frac12\|(\Phi_2\otimes
id)(\xi)-(\Phi_1\otimes id)(\sigma)\|_1\|G\|\le \frac{\epsilon}2\|\rho\|^\diamond_{A_1|A'_0},
\end{align*}
here the  inequalities  follow from the fact that $\tilde G\in \Be_+(A_1A'_0)$, properties of the trace norm
$\|\cdot\|_1$  and $\|\tilde G\|=\|G\|=\|\rho\|^\diamond_{A_1|A'_0}$.
From Theorem \ref{thm:rand_chans} (ii), we obtain that $\delta_{\pre}(\Phi_1\|\Phi_2)\le \epsilon$.

\end{proof}

Using  the above corollary, we immediately obtain that for any $R$ with $d_R\ge d_{A'_0}$, 
$\delta_{\pre}(\Phi_1\|\Phi_2)=0$ is equivalent to the inclusion
\[
\mathcal R(\Phi_2\otimes id_{R})\subseteq \mathcal R(\Phi_1\otimes id_{R}).
\]
In the case of q-c channels, that is for measurements (POVMs), this result was proved in \cite{buscemi2005clean}, where also a counterexample was given, showing that inclusion of the ranges of the channels is not enough for existence
of even a positive preprocessing, so that tensoring with $id_R$ is necessary in general.

We next show that the pseudo-distance $\Delta_{\pre}$ can be expressed as a distance of  ranges. 
Recall that for two subsets $S,T$ of a metric space with metric $m$, the Hausdorff distance is defined by
\[
m_H(S,T)=\max\{ \sup_{s\in S}\inf_{t\in T}m(s,t), \sup_{t\in T}\inf_{s\in S}m(s,t)\}.
\] 
A natural choice for a metric on the set of states would be the trace distance 
\[
\|\sigma-\rho\|_1=\Tr|\sigma-\rho|,\qquad \sigma,\rho\in \states(A_1R).
\]
As it turns out, we will have to add a term for the  distance of the restrictions to $R$.
For  $\sigma_1,\sigma_2\in \states(R)$, let $p(\sigma_1,\sigma_2)$ denote the purified distance
\[
p(\sigma_1,\sigma_2)=\sqrt{1-F(\sigma_1,\sigma_2)^2}=\inf_{V_1V_1^*=\sigma_1,V_2V_2^*=\sigma_2}
\frac12 ||V^{\mathsf{T}}_1\doublek V^{\mathsf{T}}_1|- |V^{\mathsf{T}}_2\doublek V^{\mathsf{T}}_2|\|_1,
\]
where $F$ denotes the fidelity $F(\sigma_1,\sigma_2)= \|\sigma_1^{1/2}\sigma_2^{1/2}\|_1$.

\begin{coro}\label{coro:pre_dist}
For $\Phi_1\in\Ce(A_0,A_1)$ and $\Phi_2\in \Ce(A'_0,A_1)$,  we have
\[
\Delta_{\pre}(\Phi_1,\Phi_2)=m_H(\mathcal R(\Phi_1\otimes id_{R}),\mathcal R(\Phi_2\otimes id_{R})),
\]
where $d_R=\max\{d_{A_0},d_{A'_0}\}$ and $m_H$ is the Hausdorff distance with respect to the metric $m$ in $\states(A_1R)$, given as
\[
m(\xi,\sigma)=\|\xi-\sigma\|_1+2p(\xi_R,\sigma_R)
\]

\end{coro}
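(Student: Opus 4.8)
The plan is to establish the identity by splitting $m_H$ into its two directional parts and matching each to a one-sided deficiency. Write $m_H=\max\{D_1,D_2\}$ with
\[
D_1=\sup_{s\in \mathcal R(\Phi_2\otimes id_R)}\ \inf_{t\in \mathcal R(\Phi_1\otimes id_R)}m(s,t),
\]
and $D_2$ the reversed expression. I would prove $D_1=\delta_{\pre}(\Phi_1\|\Phi_2)$ and, by exchanging the roles of $\Phi_1$ and $\Phi_2$, $D_2=\delta_{\pre}(\Phi_2\|\Phi_1)$; the corollary then follows from the definition of $\Delta_{\pre}$. The starting observation is that, since $\Phi_1,\Phi_2$ are trace preserving, the $R$-marginal of $(\Phi_i\otimes id_R)$ applied to a state coincides with the $R$-marginal of that state. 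Parametrising $s=(\Phi_2\otimes id)(\xi)$, $t=(\Phi_1\otimes id)(\sigma)$ with $\xi\in\states(A'_0R)$, $\sigma\in\states(A_0R)$, the metric becomes $m(s,t)=\|(\Phi_1\otimes id)(\sigma)-(\Phi_2\otimes id)(\xi)\|_1+2p(\xi_R,\sigma_R)$, so that $D_1=\sup_\xi\inf_\sigma\big[\|(\Phi_1\otimes id)(\sigma)-(\Phi_2\otimes id)(\xi)\|_1+2p(\xi_R,\sigma_R)\big]$.

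For the inequality $D_1\le\delta_{\pre}(\Phi_1\|\Phi_2)$ I would restrict the inner infimum to those $\sigma$ with $\sigma_R=\xi_R$; the purified-distance term then vanishes (as $p(\xi_R,\xi_R)=0$), and what remains is exactly the quantity in Corollary \ref{coro:pre_ranges}. Since $d_R=\max\{d_{A_0},d_{A'_0}\}\ge d_{A'_0}$, enlarging the ancilla does not change the value of that supremum--infimum: the universal bound $\le\delta_{\pre}$ holds for every ancilla by using an optimal preprocessing $\Lambda$ and taking $\sigma=(\Lambda\otimes id)(\xi)$, while the reverse bound follows by restricting $\xi$ to an $A'_0$-dimensional subspace of $R$. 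Hence the restricted expression equals $\delta_{\pre}(\Phi_1\|\Phi_2)$ and $D_1\le\delta_{\pre}(\Phi_1\|\Phi_2)$.

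The reverse inequality is the heart of the matter and relies on a marginal-adjustment lemma: for any $\sigma\in\states(A_0R)$ and any $\tau\in\states(R)$ there is a state $\sigma'\in\states(A_0R)$ with $\sigma'_R=\tau$ and $\tfrac12\|\sigma-\sigma'\|_1\le p(\sigma_R,\tau)$. I would prove this by purifying $\sigma$ to a pure state $|\psi\>\<\psi|$ on $A_0RE$, invoking Uhlmann's theorem to choose a purification $|\phi\>\<\phi|$ of $\tau$ (viewed as a state on $R$) on the same space with $|\<\psi|\phi\>|=F(\sigma_R,\tau)$, and setting $\sigma'=\ptr_E|\phi\>\<\phi|$; contractivity of the partial trace with the pure-state identity $\tfrac12\||\psi\>\<\psi|-|\phi\>\<\phi|\|_1=\sqrt{1-|\<\psi|\phi\>|^2}$ yields the bound. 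Given any $\sigma$ in the unconstrained infimum defining $D_1$, applying the lemma with $\tau=\xi_R$ produces a feasible $\sigma'$ for the constrained problem whose cost, by the triangle inequality and contractivity of $\Phi_1\otimes id$ under $\|\cdot\|_1$, exceeds that of $\sigma$ by at most $\|\sigma-\sigma'\|_1\le 2p(\sigma_R,\xi_R)$ --- precisely the penalty carried in $m$. Thus for each $\xi$ the constrained infimum is bounded by the unconstrained $\inf_\sigma m(s,t)$, and taking the supremum over $\xi$ gives $\delta_{\pre}(\Phi_1\|\Phi_2)\le D_1$, completing $D_1=\delta_{\pre}(\Phi_1\|\Phi_2)$.

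The main obstacle I anticipate is the marginal-adjustment lemma, and within it the bookkeeping of ancilla dimensions: one must verify that the purifying system $A_0E$ is large enough for Uhlmann's theorem to return the exact fidelity, and that the single choice $d_R=\max\{d_{A_0},d_{A'_0}\}$ simultaneously computes both one-sided deficiencies through Corollary \ref{coro:pre_ranges}. Compactness of the state spaces guarantees that the relevant extrema are attained, so the remaining estimates (triangle inequalities and contractivity) are routine.
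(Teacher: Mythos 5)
Your argument is correct, and for the nontrivial inequality $\Delta_{\pre}\le m_H$ it takes a genuinely different route from the paper's. The paper re-enters the duality machinery: for each $\rho\in\states(RA_1)$ it writes $\rho=(\chi_V\otimes id)(G)$ via Lemma \ref{lemma:dualnorm_inf}, picks an approximant $\sigma$ in the range of $\Phi_1\otimes id$, realizes it as $(\gamma\otimes id)(|W\doublek W|)$ via Lemma \ref{lemma:simple} with $W$ chosen so that $2p(\xi_R,\sigma_R)=\||V\doublek V|-|W\doublek W|\|_1$, and then verifies condition (ii) of Theorem \ref{thm:rand_chans} by splitting $\<\rho,\Phi_2\circ\alpha-\Phi_1\circ\gamma\>$ into a trace-norm term and a purification term. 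You instead stay entirely on the primal side: your Uhlmann-based marginal-adjustment lemma (replace $\sigma$ by $\sigma'$ with $\sigma'_R=\xi_R$ at trace-norm cost $\|\sigma-\sigma'\|_1\le 2p(\sigma_R,\xi_R)$) reduces the unconstrained infimum in each directed distance to the marginal-constrained infimum of Corollary \ref{coro:pre_ranges}, so that the term $2p$ in the metric $m$ is exhibited as exactly the price of restoring the marginal constraint. Both proofs rest on the same underlying fact --- that the purified distance of marginals controls the trace distance of suitably chosen extensions --- but your packaging is more modular (the duality theorem is invoked only once, inside Corollary \ref{coro:pre_ranges}), it explains where the factor $2p$ comes from, and it delivers the slightly sharper conclusion that each \emph{directed} Hausdorff distance equals the corresponding one-sided deficiency $\delta_{\pre}$, not merely that the maxima agree. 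The two technical points you flag do check out: Corollary \ref{coro:pre_ranges} persists for any ancilla with $d_R\ge d_{A'_0}$ (the upper bound holds for every $R$ via $\sigma=(\Lambda\otimes id)(\xi)$, and the lower bound follows by supporting $\xi$ on a $d_{A'_0}$-dimensional subsystem of $R$, which forces any competing $\sigma$ with matching marginal onto the same subsystem), and the purifying system in the Uhlmann step can always be enlarged so that the fidelity is attained.
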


\begin{proof} From Corollary \ref{coro:pre_ranges},  we easily obtain that $m_H(\mathcal R(\Phi_1\otimes id_{R}),\mathcal
R(\Phi_2\otimes id_{R}))\le \Delta_{\pre}(\Phi_1,\Phi_2)$. For the converse, put 
\[
\epsilon:=m_H(\mathcal R(\Phi_1\otimes id_{R}),\mathcal
R(\Phi_2\otimes id_{R})).
\]
The idea of the proof is similar to the previous proof. Let 
$\rho\in \states(RA_1)$, $\alpha\in \Ce(R,A'_0)$ and let $V$ and $G$ be connected to $\rho$ as in the proof of Corollary
\ref{coro:pre_ranges}.  Let also $\xi:= (\alpha\otimes id)(|V\doublek V|)\in \states(A'_0R)$.
Then there is some $\sigma\in \states(A_0R)$ such that 
\[
m((\Phi_2\otimes id)(\xi),(\Phi_1\otimes id)(\sigma))\le \epsilon.
\]
Note that unlike the previous proof,  we may now have
$\xi_R\ne \sigma_R$. Let $W\in \Be(R)$ be such that $W^{\mathsf{T}}(W^{\mathsf{T}})^*=\sigma_R$ and 
\[
2p(\xi_R,\sigma_R)= \||V\doublek V|- |W\doublek W|\|_1
\]
(such $W$  always exists since $\|\cdot\|_1$ is unitarily invariant). By Lemma \ref{lemma:simple}, there is a channel $\gamma\in
\Ce(R,A_0)$ such that 
\[
\sigma=(\gamma\otimes id)(|W\doublek W|)=C_{\gamma\circ\chi_W}.
\]
We now have
\begin{align*} 
\<&\rho, \Phi_2\circ\alpha-\Phi_1\circ\gamma\>=\<G,(\Phi_2\circ\alpha-\Phi_1\circ\gamma)\circ\chi_V\>\\
&= \<G,\Phi_2\circ\alpha\circ\chi_V-\Phi_1\circ\gamma\circ\chi_W\>+\<G,\Phi_1\circ\gamma\circ(\chi_W-\chi_V)\>\\
&= \Tr[((\Phi_2\otimes id)(\xi)-(\Phi_1\otimes id)(\sigma))\tilde G]\\
&\qquad +\Tr[(\Phi_1\circ\gamma\otimes id)(|W\doublek
W|-|V\doublek V|)\tilde G]\\
&\le \frac12\|\rho\|^{\diamond}_{A_1|R} \bigl(  \|(\Phi_2\otimes id)(\xi)-(\Phi_1\otimes id)(\sigma)\|_1+\tilde
m_B(\xi_R,\sigma_R)\bigr)\le \frac{\epsilon}2\|\rho\|^{\diamond}_{A_1|R}.
\end{align*}
The last inequality implies that $\delta_{\pre}(\Phi_1\|\Phi_2)\le \epsilon$ and we similarly obtain that also 
$\delta_{\pre}(\Phi_2\|\Phi_1)\le \epsilon$.

\end{proof}

}

{

\subsection{Comparison of bipartite channels by LOCC superchannels}\label{sec:LOCC}

In this section, the objects in $\Fe$ are spaces of bipartite quantum channels $\Le(A_0B_0,A_1B_1)$ and 
 the morphisms are restricted to LOCC superchannels, that is, 
\[
\Fe=\Ce_2^{LOCC}:=\Ce_{LOCC}*\Ce_{LOCC}
\]
 where $\Ce_{LOCC}$ is the set of $A|B$ LOCC channels. To be more precise, in this case the admissible spaces
are of the form $R^AR^B=R^A_0R^B_0R^A_1R^B_1$ and the morphisms in $\Fe(R^AR^B, S^AS^B)$ have the form
 $\Lambda_{\pre}*\Lambda_{\post}$ with 
$\Lambda_{\pre}\in \Ce_{LOCC}(S^A_0|S^B_0,U^AR^A_0|R^B_0U^B)$, $\Lambda_{\post}\in \Ce_{LOCC}(U^AR^A_1|R^B_1U^B,S^A_1|S^B_1)$,
 so that also the ancilla consists of two parts $U=U^AU^B$. The simulation task becomes

\begin{center}
\begin{minipage}[c]{0.45\textwidth}
\centering
\includegraphics{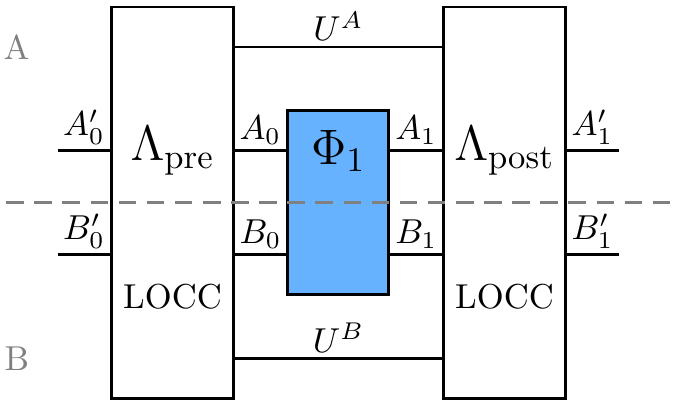}
\end{minipage}
\begin{minipage}[c]{0.01\textwidth}
 $\approx$
\end{minipage}
\begin{minipage}[c]{0.2\textwidth}
\centering
\includegraphics{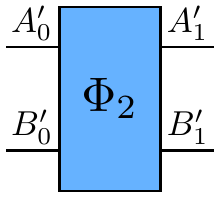}
\end{minipage}
 \end{center}

 In this case, 
\[
\delta_\Fe(\Phi_1\|\Phi_2)= \delta_{LOCC}(\Phi_1\|\Phi_2):=\inf_{\Lambda\in \Ce_2^{LOCC}}\|\Lambda(\Phi_1)-\Phi_2\|_\diamond
\]
 is the LOCC conversion distance $\Phi_1\to \Phi_2$. 
We will use the notation $\|\rho\|^{2-LOCC}_{S|R}:=\|\rho\|^\Fe_{S|R}$ for  $\rho\in \Be_+(R^AR^BS^AS^B)$.

\begin{thm}\label{thm:LOCC} Let $\Phi_1\in \Ce(A_0B_0,A_1B_1)$ and $\Phi_2\in \Ce(A_0'B_0',A_1'B_1')$, $\epsilon\ge 0$.
The following are equivalent.
\begin{enumerate}
\item[(i)] $\delta_{LOCC}(\Phi_1\|\Phi_2)\le \epsilon$;
\item[(ii)] for any spaces $R^AR^B$ and any $\rho\in \states(R^AR^B)$, we have
\[
\|\rho\otimes C_{\Phi_2}\|^{2-LOCC}_{R^AR^B|A'B'}\le \|\rho\otimes
C_{\Phi_1}\|^{2-LOCC}_{R^AR^B|AB}+\frac{\epsilon}2\|\rho\|^\diamond_{R^A_1R^B_1|R^A_0R^B_0}
\]
\item[(iii)] For any spaces $R^AR^B$ and any ensemble $\Ee$ on $R^AR^B$, we have
\[
P_{\succ}^{\Ce_{LOCC}(R^A_0R^B_0,\cdot A_0'B_0'\cdot) \Phi_2, \Me_{LOCC}}(\Ee)\le 
P_{\succ}^{\Ce_{LOCC}(R^A_0R^B_0,\cdot A_0B_0\cdot ),\Phi_1,\Me_{LOCC}}(\Ee)+\frac{\epsilon}2P_{\succ}(\Ee)
\]
where $\Me_{LOCC}$ is the set of LOCC measurements.
\end{enumerate}
Moreover, in (ii) and (iii) it is enough to take $R^A_0\simeq A'_0$, $R^A_1\simeq A'_1$, $R^B_0\simeq B'_0$,
$R^B_1\simeq B'_1$.

\end{thm}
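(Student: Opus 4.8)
The plan is to obtain all three equivalences as specializations of the general channel comparison theorems, the bulk of the work being to check that the bipartite LOCC setting satisfies their hypotheses. The equivalence (i) $\iff$ (ii) is an immediate instance of Theorem \ref{thm:rand_chans}: since $\Fe=\Ce_{LOCC}*\Ce_{LOCC}$ permits processing on every input and output wire, no part of the systems is held fixed, so by Remark \ref{rem:F_link} the link products in part (iii) of that theorem reduce to tensor products, $\rho*C_{\Phi_i}=\rho\otimes C_{\Phi_i}$. Writing $\|\cdot\|^{2-LOCC}_{R|A}$ for $\|\cdot\|^\Fe_{R|A}$ then turns the inequality of Theorem \ref{thm:rand_chans} (iii) into precisely the one in (ii), and the dimension reduction to $R^A_0\simeq A'_0$, $R^A_1\simeq A'_1$, $R^B_0\simeq B'_0$, $R^B_1\simeq B'_1$ is the ``moreover'' clause there.

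For the implication (i) $\implies$ (iii) I would invoke the forward direction of Theorem \ref{thm:rand_chans_psuc}. This first requires verifying the structural conditions (a)--(d) for $\Ce_{\pre}=\Ce_{\post}=\Ce_{LOCC}$: the identity is LOCC, LOCC channels are closed under composition and are convex, and appending or discarding a classical register is a local operation, so (d) holds as well. One further checks that $\Me_{LOCC}$ is closed under preprocessing by $\Ce_{LOCC}$, i.e. that $(id\otimes\beta^*)(M)\in\Me_{LOCC}$ whenever $\beta\in\Ce_{LOCC}$ and $M\in\Me_{LOCC}$; operationally this effective POVM is realized by first running the LOCC channel $\beta$ and then the LOCC measurement $M$, hence is itself LOCC. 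With these in hand, the inequality in (iii) is exactly the conclusion of the forward direction of Theorem \ref{thm:rand_chans_psuc}.

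The converse (iii) $\implies$ (i) again follows from Theorem \ref{thm:rand_chans_psuc}, provided its two extra hypotheses hold here, and this is where the main obstacle lies. Both rest on the same structural fact: the generalized Pauli unitaries on the composite output $A'_1B'_1$ may be chosen as products $U^{A'_1}_{x_A}\otimes U^{B'_1}_{x_B}$ across the $A|B$ cut. From this, the Bell measurement factorizes as $\mathsf{B}^{A'_1B'_1}=\mathsf{B}^{A'_1}\otimes\mathsf{B}^{B'_1}$, a product and hence LOCC measurement, giving the first hypothesis $\mathsf{B}^{A'_1B'_1}\in\Me_{LOCC}$. For the second hypothesis I must show that for any LOCC measurement $M$ with $d_{A'_1B'_1}^2$ outcomes the teleportation channel $\beta^M$ of \eqref{eq:meas_channel} is LOCC. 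The crucial point is that the maximally entangled resource state also factorizes, $\psi^{A'_1B'_1}=\psi^{A'_1}\otimes\psi^{B'_1}$, so each party can prepare its part locally with no entanglement across the cut, while the Pauli correction $\Ue^{A'_1B'_1}_x=\Ue^{A'_1}_{x_A}\otimes\Ue^{B'_1}_{x_B}$ is a local unitary controlled by the communicated classical outcome $x=(x_A,x_B)$. Thus $\beta^M$ uses only local entangled states, the LOCC measurement $M$, and local classically-controlled corrections, and is therefore LOCC. I expect this factorization of the Pauli group and of the maximally entangled state across the bipartite cut to be the step requiring the most care; once it is established, Theorem \ref{thm:rand_chans_psuc} delivers the converse with the stated dimensions.
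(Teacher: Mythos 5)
Your proposal is correct and follows essentially the same route as the paper: (i) $\iff$ (ii) via Theorem \ref{thm:rand_chans} and Remark \ref{rem:F_link}, and (i) $\iff$ (iii) via Theorem \ref{thm:rand_chans_psuc} with $\Me_{\post}=\Me_{LOCC}$, where the two extra hypotheses are verified exactly as you describe, by the factorization of the generalized Pauli group and of the Bell measurement across the $A|B$ cut and the observation that the teleportation channel $\beta^M$ is LOCC whenever $M$ is. Your write-up in fact spells out the LOCC-ness of $\beta^M$ (local halves of the product maximally entangled state, classically controlled local Pauli corrections) in more detail than the paper, which simply calls it obvious.
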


The guessing games in (iii) have the form

\begin{center}
\includegraphics{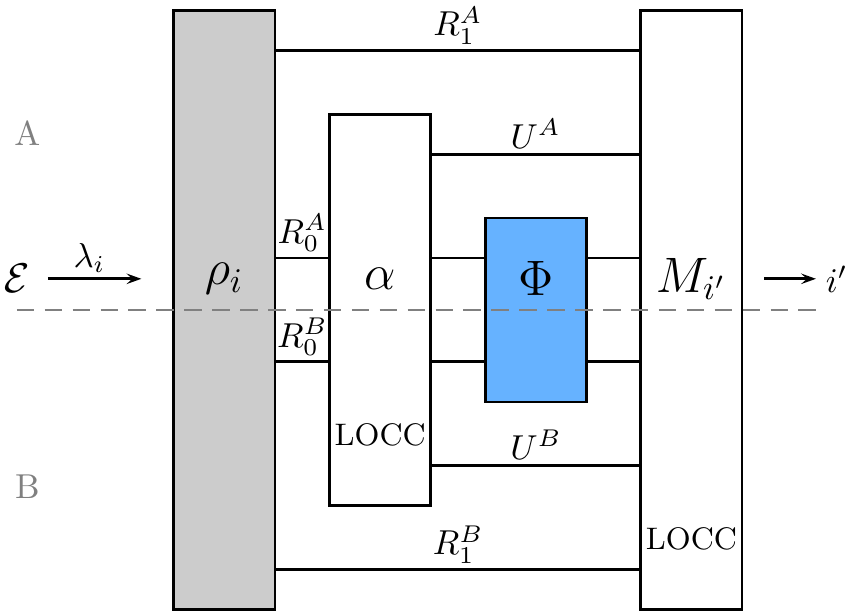}
\end{center}

Here $\alpha$ can be any  LOCC preprocessing channel and $M$ any LOCC measurement.

\begin{proof} The equivalence of (i) and (ii) is proved exactly as before from Theorem \ref{thm:rand_chans} and Remark
\ref{rem:F_link}. To prove the condition (iii), we invoke Theorem \ref{thm:rand_chans_psuc} with 
 $\Me_{\post}=\Me_{LOCC}$, this is obviously closed under $\Ce_{LOCC}$.
To prove the two additional conditions in Theorem \ref{thm:rand_chans_psuc}, observe that the group of generalized Pauli unitaries on $A'_1B'_1$ has the form
\[
\{U^{A'_1B'_1}_{x,y}=U^{A'_1}_x\otimes U^{B'_1}_y,\ x=1,\dots,d^2_{A'_1},\ y=1,\dots, d_{B'_1}^2\}
\]
and we have $\mathsf{B}^{A'_1B'_1}_{x,y}=\mathsf{B}^{A'_1}_x\otimes \mathsf{B}_y^{B'_1}\in \Me_{LOCC}(A_1'A_1'|B_1'B_1')$. 
It is now enough to show that for any measurement $M\in
\Me_{LOCC}(A_1'U^AA_1|B_1U^BB_1')$ with  outcomes labeled by $x,y$ we have $\beta^M\in
\Ce_{LOCC}(U^AA_1|B_1U^B,A_1'|B_1')$. Recall that $\beta^M$ is a channel associated with the instrument 
$\{\beta^M_{x,y}\}$, where the operations $\beta^M_{x,y}$ have the form

\begin{center}
\includegraphics{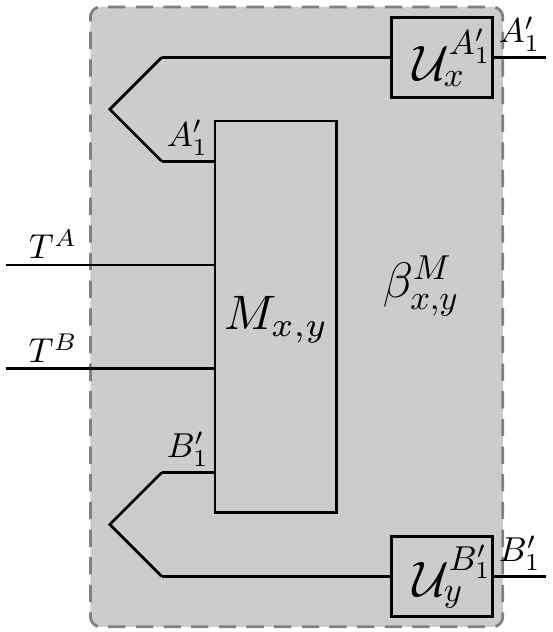}
\end{center}
here $T^A=U^AA_1$ and $T_B=B_1U^B$.
It is quite obvious that $\beta^M$ is LOCC if $M$ is.

\end{proof}

Note that we have a  similar situations for example for restricted PPT or SEP  superchannels,
 where  $\Ce_{\pre}=\Ce_{\post}=\Ce_{PPT}$ or $\Ce_{SEP}$. 

}

\subsection{Comparison of bipartite channels by partial superchannels}\label{sec:partial}

In this section, the objects of $\Fe$ are again spaces of bipartite channels, but this time 
 the morphisms are given by applying arbitrary  superchannels on the $A$ part, in diagram
\begin{center}
\begin{minipage}[c]{0.35\textwidth}
\centering
\includegraphics{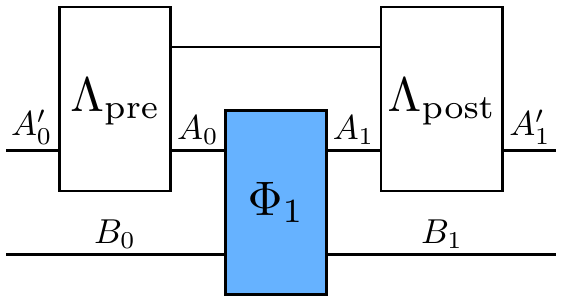}
\end{minipage}
\begin{minipage}[c]{0.01\textwidth}
\centering
\vspace{10mm}
${\approx}$
\end{minipage}
\begin{minipage}[c]{0.2\textwidth} 
\centering
\vspace{10mm}
{\includegraphics{pphi_bipart.t1.pdf}}
\end{minipage}
\end{center}

More precisely, the admissible spaces for $\Fe$ are of the form $RB$, where $R=R_0R_1$  are arbitrary and
$B=B_0B_1$ with $B_0$, $B_1$ fixed.  The morphisms $\Fe(RB, SB)$ in  $\Fe$ are
given by elements of $\Ce_2(R,S)$. Here we have $\Fe=\Ce_{\pre}*\Ce_{\post}$ with $\Ce_{\pre}=\Ce\otimes \{id_{B_0}\}$ and 
 $\Ce_{\post}=\Ce\otimes \{id_{B_1}\}$.

For $\Phi_1\in \Ce(A_0B_0,A_1B_1)$ and $\Phi_2(A'_0B_0,A_1'B_1)$ we denote
\[
\delta_{A|A'}(\Phi_1\|\Phi_2):=\delta_\Fe(\Phi_1\|\Phi_1)=\min_{\Theta\in \Ce_2(A,A')}\|(\Theta\otimes id_B)(\Phi_1)-\Phi_2\|_\diamond.
\]

\begin{thm}\label{thm:bipartite}
Let $\Phi_1\in \Ce(A_0B_0,A_1B_1)$, $\Phi_2\in \Ce(A'_0B_0,A'_1B_1)$ and let $\epsilon\ge 0$. The following are
equivalent.
\begin{enumerate}
\item[(i)] $\delta_{A|A'}(\Phi_1\|\Phi_2)\le \epsilon$;
\item[(ii)] For any spaces $R_0,R_1$ and $\rho\in \states(R_0B_0R_1B_1)$, we have
\[
\|\rho*C_{\Phi_2}\|^{2\diamond}_{R|A'}\le
\|\rho*C_{\Phi_1}\|^{2\diamond}_{R|A}+\frac{\epsilon}2\|\rho\|^{\diamond}_{R_1B_1|R_0B_0};
\]
\item[(iii)] For any spaces $R_0$, $R_1$, $k,l\in \mathbb N$, any ensemble $\Ee$ with $kl$ elements  on $R_0B_0R_1B_1$, any fixed measurement
$M\in \Me_l(B_1B_1)$, we have
\[
P_{\succ}^{\Ce(R_0,A'_0\cdot),\Phi_2,\Me_k\otimes M}(\Ee)\le P_{\succ}^{\Ce(R_0,A_0\cdot),\Phi_1,\Me_k\otimes
M}(\Ee)+\frac{\epsilon}2 P_{\succ}(\Ee).
\]
\end{enumerate}
Moreover, in (ii) and (iii) it is enough to put $R_0\simeq A_0'$, $R_1\simeq A_1'$. 

\end{thm}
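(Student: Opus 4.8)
The plan is to reuse the two-step scheme of Theorems \ref{thm:post}, \ref{thm:pre} and \ref{thm:LOCC}: derive (i) $\iff$ (ii) by specializing Theorem \ref{thm:rand_chans}, and then read off (iii) from Theorem \ref{thm:rand_chans_psuc}. The admissible spaces here are $RB=R_0B_0R_1B_1$ and the morphisms $\Fe(A'B,RB)$ are exactly the partial superchannels $\{\Theta\otimes id_B:\Theta\in\Ce_2(A',R)\}$.

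For (i) $\iff$ (ii) the only point needing care is the identification
\[
\|\rho*C_{\Phi_2}\|^\Fe_{RB|A'B}=\|\rho*C_{\Phi_2}\|^{2\diamond}_{R|A'}
\]
(and its analogue for $\Phi_1$, with $A$ replacing $A'$). Following Remark \ref{rem:F_link}, the fixed systems $B_0,B_1$ are connected in the link product, so $\rho*C_{\Phi_2}$ is an operator on $R_0R_1A'_0A'_1$, precisely the domain of $\|\cdot\|^{2\diamond}_{R|A'}$. Because the $id_B$ factor of each morphism acts trivially on the already-connected $B$-wires, the supremum $\sup_{\Theta\in\Fe(A'B,RB)}\<\rho*C_{\Phi_2},\Theta\>$ collapses to $\sup_{\Theta'\in\Ce_2(A',R)}\<\rho*C_{\Phi_2},\Theta'\>$, which is $\|\rho*C_{\Phi_2}\|^{2\diamond}_{R|A'}$ by the definition of the norm $\|\cdot\|^{2\diamond}$ in Section \ref{sec:2diamond}. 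With $\hat R_0=R_0B_0$, $\hat R_1=R_1B_1$ the penalty term of Theorem \ref{thm:rand_chans}(iii) is $\frac{\epsilon}2\|\rho\|^\diamond_{\hat R_1|\hat R_0}=\frac{\epsilon}2\|\rho\|^\diamond_{R_1B_1|R_0B_0}$, so (ii) is exactly condition (iii) of Theorem \ref{thm:rand_chans}, and the reduction $R_0\simeq A'_0$, $R_1\simeq A'_1$ is inherited.

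For (i) $\implies$ (iii) I would invoke Theorem \ref{thm:rand_chans_psuc} with $\Ce_{\pre}=\Ce\otimes\{id_{B_0}\}$, $\Ce_{\post}=\Ce\otimes\{id_{B_1}\}$ and $\Me_{\post}=\Me_k\otimes M$ for the fixed $M\in\Me_l(B_1B_1)$; closure of $\Me_{\post}$ under preprocessing by $\Ce_{\post}$ is immediate, since a postprocessing $\beta^A\otimes id_{B_1}$ alters only the $\Me_k$ factor and leaves $M$ untouched. This produces exactly the stated guessing game: the preprocessing $\alpha$ acts on the $A$-input while $B_0$ passes through, the $A$-output (with ancilla and $A$-reference) is measured by an arbitrary $k$-outcome measurement, and the $B$-output is measured jointly with its reference by the fixed $M$.

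The converse (iii) $\implies$ (i) is where the tensor structure must be used, and I expect the single real (if routine) verification to be the factorization of the teleportation channel construction. Taking $M=\mathsf{B}^{B_1}$ (so $l=d_{B_1}^2$, $k=d_{A'_1}^2$), the Bell measurement on the full output splits as $\mathsf{B}^{A'_1B_1}=\mathsf{B}^{A'_1}\otimes\mathsf{B}^{B_1}\in\Me_k\otimes M$, giving the first hypothesis of Theorem \ref{thm:rand_chans_psuc}. For the second, any $M'\in\Me_{\post}$ with $d_{A'_1}^2d_{B_1}^2$ outcomes has the product form $M'=M'^A\otimes\mathsf{B}^{B_1}$, and the construction \eqref{eq:meas_channel} factorizes over the two Bell structures, so that $\beta^{M'}=\beta^{M'^A}\otimes\beta^{\mathsf{B}^{B_1}}=\beta^{M'^A}\otimes id_{B_1}$, using $\beta^{\mathsf{B}^{B_1}}=id_{B_1}$ as in the proof of Theorem \ref{thm:pre}. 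Hence $\beta^{M'}\in\Ce\otimes\{id_{B_1}\}=\Ce_{\post}$, both hypotheses of Theorem \ref{thm:rand_chans_psuc} hold, and the converse follows with $R_0\simeq A'_0$, $R_1\simeq A'_1$.
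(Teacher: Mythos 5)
Your proposal is correct and follows essentially the same route as the paper: (i) $\iff$ (ii) via Theorem \ref{thm:rand_chans} together with the identification $\|\rho*C_{\Phi_i}\|^\Fe_{RB|\cdot B}=\|\rho*C_{\Phi_i}\|^{2\diamond}_{R|\cdot}$ from Remark \ref{rem:F_link}, and (i) $\iff$ (iii) via Theorem \ref{thm:rand_chans_psuc} with $\Me_{\post}=\Me_k\otimes M$, checking the two extra hypotheses by taking $M=\mathsf{B}^{B_1}$ and factorizing $\beta^{\tilde N}=\beta^N\otimes id_{B_1}$. The paper's proof is a terser version of exactly these steps.
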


The guessing games have  the form depicted in the diagram:
\begin{center}
\includegraphics{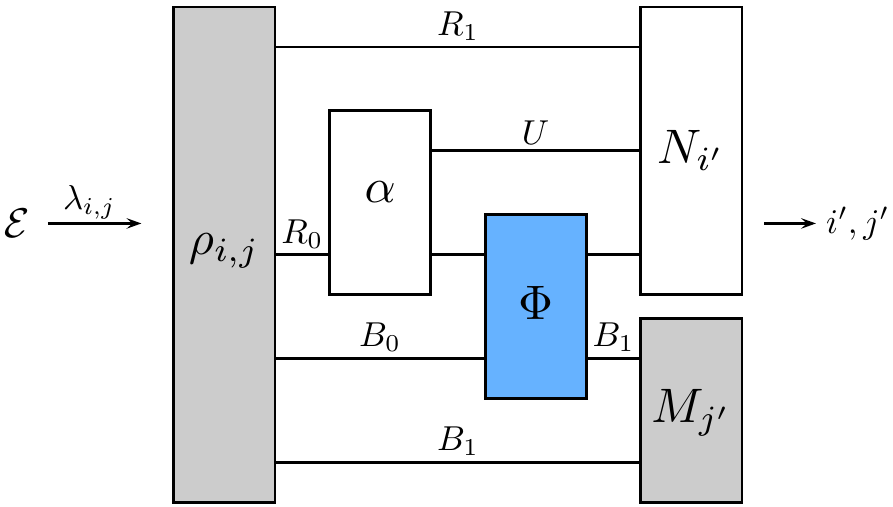}
\end{center}
Here $\Ee=\{\lambda_{i,j},\rho_{i,j}\}_{\substack{i=1,\dots,k\\ j=1,\dots,l}}$ is an and ensemble of states 
 on $R_0B_0R_1B_1$ and  $M\in \Me_l(B_1B_1)$ is  a fixed measurement. The preprocessing  $\alpha$ and the measurement $N$
with  $k$ outcomes can be chosen freely, with no restriction on the ancilla $U$.

\begin{proof} As before, the equivalence (i) $\iff$ (ii)  follows from Thm. \ref{thm:rand_chans} and the definition of the norm 
$\|\cdot\|^{2\diamond}$, taking into account 
 Remark \ref{rem:F_link}.

For the implication (i) $\implies$ (iii) we use Theorem \ref{thm:rand_chans_psuc} 
with  $\Me_{\post}=\Me_k\otimes M$, which is clearly closed under preprocessings from $\Ce_{\post}$. 
For the converse, let $l=d^2_{B_1}$, $k=d^2_{a_1'}$ and put the fixed measurement to $M_{y}=\mathsf{B}_y^{B_1}$.
 Then $\mathsf{B}^{A_1'B_1}=\mathsf{B}^{A'_1}\otimes \mathsf{B}^{B_1}\in \Me_k\otimes M=\Me_{\post}$.
Moreover, any measurement in $\Me_k\otimes M$ on $UA_1B_1A_1'B_1$ has the form $\tilde N=N\otimes
\mathsf{B}^{B_1}$
for some $N\in \Me_k(UA_1A_1')$.
The channel $\beta^{\tilde N}\in \Ce(UA_1B_1,A_1'B_1)$ clearly satisfies 
\[
\beta^{\tilde N}=\beta^{N}\otimes\beta^{\mathsf{B}^{B_1}}=\beta^N\otimes id_{B_1}\in \Ce_{\post}.
\]
The proof is finished by the second part of Theorem \ref{thm:rand_chans_psuc}.

\end{proof}

The case $\epsilon=0$ was also treated in \cite{gour2019comparison}. By a careful comparison of the results there to the
present setting, one can see that Theorem 7 in \cite{gour2019comparison} corresponds to our condition (ii), but
restricted to $\rho$ of the form 
\[
\rho=\sum_{x=1}^{d^2{B_0}}\sum_{y=1}^{d^2_{B_1}} |x\>\<x|\otimes E_y\otimes \Lambda_{y|x},
\]
where $|x\>\<x|$ is a normalized rank 1 basis of $B_0$, $E_y$ is an informationally complete POVM on $B_1$ and 
 $\Lambda_{y|x}\in \Le_+(R_0,R_1)$ is such that $\sum_y \Lambda_{y|x}$ is a channel for any $x$. The possibility of such
a restriction seems specific to the case $\epsilon=0$, as it is basically  a consequence of the fact that to check
equality of two channels it is enough to input some basis states and measure by an IC POVM.
Apart from some special cases (e.g. the one treated below), this cannot be extended to $\epsilon>0$ since in general
one needs entangled states to attain the diamond norm.

\subsection{Classical simulability of measurements}\label{sec:measurements}

As a further  application, we investigate the problem of classical simulability of measurements. In this problem, two sets of
measurements  $\Mbb=\{M^1,\dots,M^k\}$, $M^i\in \Me_l(C)$, $i=1,\dots,k$ and $\Nbb=\{N^1,\dots,N^m\}$,
 $N^y\in \Me_n(C)$, $y=1,\dots,m$ are given. 
We will say that $\Mbb$ can  simulate  $\Nbb$ if all elements in
$\Nbb$ can be obtained as convex combinations of postprocessings of elements in $\Mbb$,  \cite{guerini2017operational}. It can be seen  that we may exchange the order of convex combinations and postprocessings, and always
obtain the same notion of simulability. Our aim is to study an approximate version with respect to some suitable norm. In particular, we will
show that this problem can be put into the setting of Section \ref{sec:partial}: we represent $\Mbb$ and $\Nbb$
by  bipartite channels and  express the simulations as  applications of superchannels to one of the parts.

We will need the following type of guessing games. Let $\Ee=\{\lambda_x,\rho_x\}_{x=1}^n$ be an ensemble of states of 
$A$, but assume that only one fixed measurement $M\in \Me_l(A)$ can be performed and the true state has to be guessed using its outcome. Any
guessing procedure is described by conditional probabilities $\{p(x|j)\}$, giving the probability of
guessing $x\in \{1,\dots,n\}$ if $j\in\{1,\dots,l\}$ was measured. This defines a measurement 
 $N\in \Me_n(A)$ given as  
\[
N_x=\sum_j p(x|j)M_j,\quad x=1,\dots,n.
\]
Such measurement  is a postprocessing of $M$. The average probability of a correct guess using this procedure is
\[
P_{\succ}(\Ee,N)=P_{\succ}(\Ee,M,p):=\sum_{x,j} \lambda_x  p(x|j) \Tr[\rho_xM_j]
\] 
and the maximal success probability  is denoted by (cf. \cite{skrzypczyk2019robustness})
\[
P_{\succ}^Q(\Ee,M):=\sup_{\{p(x|j)\}} P_{\succ}(\Ee,M,p).
\]
 Any set of conditional probabilities can be identified with the classical-to-classical (c-c) channel in $\Ce(S,R)$, $d_S=l$,
$d_R=n$ determined by the Choi matrix
\[
C_p:=\sum_{x,j} p(x|j)|x\>\<x|\otimes|j\>\<j|.
\] 
This channel will be denoted by $p$. We then have $\Phi_N=p\circ \Phi_M$
for the  q-c channels given by the measurements $M$ and $N$. 
Moreover, for any  $\alpha\in \Ce(S,R)$, there are conditional probabilities $\{p(x|j):=\<x|\alpha(|j\>\<j|)|x\>\}$, such
that
\begin{equation}\label{eq:psucc_Q_channel}
\<(\Phi_M\otimes id)(\rho_\Ee),\alpha\>= \<\rho_\Ee,\alpha\circ \Phi_M\>= \<\rho_\Ee,p\circ \Phi_M\>= P_{\succ}(\Ee,M,p),
\end{equation}
This proves the following result.

\begin{lemma}\label{lemma:psuc_Q} For any ensemble $\Ee$ on $A$ and $M\in \Me_{d_S}(A)$,
\[
P_{\succ}^Q(\Ee,M)=\|(\Phi_M\otimes id)(\rho_\Ee)\|^\diamond_{S|A}= P_{\succ}(\Phi_M(\Ee)).
\]

\end{lemma}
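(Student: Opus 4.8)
The plan is to read off all three quantities from the conditional min-entropy norm of the single state $(\Phi_M\otimes id)(\rho_\Ee)\in\states(SR)$ (with $d_S=l$ and $d_R=n$), using the operational identities already established in Section~\ref{sec:guessing}. Concretely, I would prove the two equalities separately: the right-hand one is essentially a restatement of \eqref{eq:ensemble}, while the left-hand one is a two-sided inequality that converts the optimisation over channels $\Ce(S,R)$ appearing in \eqref{eq:cond_min_op} into the optimisation over conditional probabilities defining $P_{\succ}^Q$.

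For the equality $\|(\Phi_M\otimes id)(\rho_\Ee)\|^\diamond_{R|S}=P_{\succ}(\Phi_M(\Ee))$, the key observation is that
\[
(\Phi_M\otimes id)(\rho_\Ee)=\sum_x\lambda_x\,\Phi_M(\rho_x)\otimes|x\>\<x|
\]
is precisely the quantum-classical state $\rho_{\Phi_M(\Ee)}$ of the ensemble $\Phi_M(\Ee)=\{\lambda_x,\Phi_M(\rho_x)\}$ of states on $S$, with register $R$. Hence \eqref{eq:ensemble}, applied to this ensemble, yields the claim immediately.

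For the equality $P_{\succ}^Q(\Ee,M)=\|(\Phi_M\otimes id)(\rho_\Ee)\|^\diamond_{R|S}$, I would use \eqref{eq:cond_min_op} to write the norm as $\max_{\alpha\in\Ce(S,R)}\<(\Phi_M\otimes id)(\rho_\Ee),\alpha\>$ and then invoke the computation \eqref{eq:psucc_Q_channel}. For one inequality, every family of conditional probabilities $\{p(x|j)\}$ determines a classical-to-classical channel $p\in\Ce(S,R)$ with $\<(\Phi_M\otimes id)(\rho_\Ee),p\>=P_{\succ}(\Ee,M,p)$, so the supremum of $P_{\succ}(\Ee,M,p)$ over $p$ is bounded above by the maximum over all $\alpha\in\Ce(S,R)$, i.e.\ by the norm. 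Conversely, every $\alpha\in\Ce(S,R)$ induces conditional probabilities $p_\alpha(x|j)=\<x|\alpha(|j\>\<j|)|x\>$ for which \eqref{eq:psucc_Q_channel} gives $\<(\Phi_M\otimes id)(\rho_\Ee),\alpha\>=P_{\succ}(\Ee,M,p_\alpha)\le P_{\succ}^Q(\Ee,M)$; taking the maximum over $\alpha$ bounds the norm by $P_{\succ}^Q(\Ee,M)$. The two bounds together force equality.

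The only point requiring care — and hence the main, if mild, obstacle — is the reduction from a general channel $\alpha\in\Ce(S,R)$ to a set of conditional probabilities, which is exactly the content of \eqref{eq:psucc_Q_channel}: since $\Phi_M$ produces diagonal (classical) output and the register $R$ of $\rho_\Ee$ is classical, only the diagonal entries $\<x|\alpha(|j\>\<j|)|x\>$ of $\alpha$ enter the pairing, so replacing $\alpha$ by the associated c-c channel $p_\alpha$ loses nothing. Once this is granted, the proof is a direct assembly of \eqref{eq:cond_min_op}, \eqref{eq:psucc_Q_channel} and \eqref{eq:ensemble}, with no further estimates needed.
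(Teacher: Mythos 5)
Your proof is correct and follows essentially the same route as the paper: the identity \eqref{eq:psucc_Q_channel} gives the bijective correspondence (for the purposes of the pairing) between channels $\alpha\in\Ce(S,R)$ and conditional probabilities $p(x|j)$, which together with \eqref{eq:cond_min_op} yields the first equality, and \eqref{eq:ensemble} applied to the quantum-classical state $(\Phi_M\otimes id)(\rho_\Ee)=\rho_{\Phi_M(\Ee)}$ yields the second. Note only that your subscript $\|\cdot\|^\diamond_{R|S}$ is the one consistent with the paper's conventions (state on $SR$, optimisation over $\Ce(S,R)$); the subscript $S|A$ in the lemma as printed appears to be a typo.
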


Let now $\Mbb$ and $\Nbb$ be sets of measurements as above. By definition, $\Mbb$ can simulate $\Nbb$ if for each  
$N^y\in \Nbb$ there are probabilities $q(i|y)$ such that 
\[
\Phi_{N^y}=\sum_{i} q(i|y) p_{i,y}\circ\Phi_{M^i}
\]
for some c-c channels $p_{i,y}$ determined by sets of conditional probabilities $\{p_{i,y}(x|j),\ x=1,\dots,n,\ j=1,\dots, l\}$,
$i=1,\dots,k$, $ y=1,\dots, m$. That is,  
\begin{equation}\label{eq:simul}
N^y_x=\sum_{i,j}p(i,x|j,y) M^i_j,\qquad x=1,\dots,n,\ y=1,\dots,m,
\end{equation}
where $p(i,x|j,y):=q(i|y)p_{i,y}(x|j)$ are  conditional
probabilities.   Let $\Phi_\Mbb$ be a channel in $\Ce(A_0C,A_1)$, $d_{A_0}=k$, $d_{A_1}=l$ with  the Choi matrix
\[
C_\Mbb:= \sum_{i,j} |j\>\<j|\otimes |i\>\<i|\otimes (M^i_j)^{\mathsf{T}}.
\]
Note that $\Phi_\Mbb$ is a bipartite channel as in the setting of Theorem \ref{thm:bipartite}, with $B_0=C$ and $B_1=1$,
moreover, the first input and the output of $\Phi_\Mbb$ is classical.
Similarly, $\Nbb$  is represented by the  channel $\Phi_\Nbb\in \Ce(A'_0C,A'_1)$, with $d_{A'_0}=m$, $d_{A'_1}=n$. It
can be easily checked that  \eqref{eq:simul} can be expressed as
\[
C_p*C_\Mbb=C_{\Nbb},
\] 
where  $p\in \Ce(A_0'A_1,A_0A_1')$ is the c-c channel given by $\{p(i,x|j,y)\}$. Using \eqref{eq:comb}, we ca see that
$p$ is a superchannel if and only if there are conditional probabilities $\{q(i|y)\}$ such that 
\[
\sum_x p(x,i|y,j)=q(i|y),\qquad \forall i,y,\ \forall j.
\]
By putting $p_{i,y}(x|j):=q(i|y)^{-1}p(x,i|y,j)$ if $q(i|y)> 0$ and choosing any conditional probabilities for
$p_{i,y}$ otherwise, we obtain the following result.

\begin{lemma}\label{lemma:cc_combs} A c-c channel $p\in \Ce(A'_0A_1,A_0A'_1)$ is a superchannel if and only if  there are conditional
probabilities $\{p_{i,y}(x|j)\}$ and $\{q(i|y)\}$, with $i=1,\dots, d_{A_0}$, $j=1,\dots,d_{A_1}$, $x=1,\dots,d_{A'_1}$, 
$y=1,\dots,d_{A'_0}$ such that
\[
p(x,i|y,j)=q(i|y)p_{i,y}(x|j).
\]

\end{lemma}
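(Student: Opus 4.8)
The plan is to reduce the superchannel (2-comb) characterization \eqref{eq:comb} to an elementary statement about the conditional probabilities defining $p$, and then to read off the claimed factorization. Since $p$ is a c-c channel in $\Ce(A'_0A_1,A_0A'_1)$, its Choi matrix is diagonal in the product computational basis:
\[
C_p=\sum_{x,i,y,j}p(x,i|y,j)\,|i\>\<i|_{A_0}\otimes|x\>\<x|_{A'_1}\otimes|y\>\<y|_{A'_0}\otimes|j\>\<j|_{A_1}.
\]
First I would compute $\ptr_{A'_1}[C_p]$ by summing over $x$, obtaining $\sum_{i,y,j}\bigl(\sum_x p(x,i|y,j)\bigr)|i\>\<i|_{A_0}\otimes|y\>\<y|_{A'_0}\otimes|j\>\<j|_{A_1}$. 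Comparing this with the 2-comb condition $\ptr_{A'_1}[C_p]=I_{A_1}\otimes C_2$ (with $C_2\in\Be(A_0A'_0)$) forces the quantity $\sum_x p(x,i|y,j)$ to be independent of $j$; denoting its common value $q(i|y)$, we obtain $C_2=\sum_{i,y}q(i|y)|i\>\<i|_{A_0}\otimes|y\>\<y|_{A'_0}$. The remaining condition $\ptr_{A_0}[C_2]=I_{A'_0}$ then reads $\sum_i q(i|y)=1$ for every $y$, i.e. $\{q(i|y)\}$ is a family of conditional probabilities. This is exactly the intermediate characterization stated just before the lemma.

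With this dictionary in hand, the factorization is immediate. For the ``only if'' direction I would define, for each pair $i,y$ with $q(i|y)>0$, the conditional probabilities $p_{i,y}(x|j):=q(i|y)^{-1}p(x,i|y,j)$, and choose $p_{i,y}$ to be arbitrary conditional probabilities when $q(i|y)=0$; the normalization $\sum_x p_{i,y}(x|j)=1$ follows from $\sum_x p(x,i|y,j)=q(i|y)$. The identity $p(x,i|y,j)=q(i|y)p_{i,y}(x|j)$ then holds in both cases, since whenever $q(i|y)=0$ the nonnegativity of the summands gives $p(x,i|y,j)\le\sum_x p(x,i|y,j)=0$. For the converse, assuming the factorization, summing over $x$ recovers $\sum_x p(x,i|y,j)=q(i|y)$ (independent of $j$) and $\sum_i q(i|y)=1$, which are precisely the two conditions derived above; hence $C_p$ satisfies \eqref{eq:comb} and $p$ is a superchannel. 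One also verifies $\sum_{i,x}p(x,i|y,j)=\sum_i q(i|y)=1$, so that $p$ is a genuine c-c channel.

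There is no serious obstacle here. The only points requiring care are the correct bookkeeping of which index labels which tensor factor when writing $C_p$ and taking the two partial traces, and the handling of the degenerate case $q(i|y)=0$ in defining $p_{i,y}$, which is resolved by the observation that $p(x,i|y,j)$ then vanishes identically in $x$.
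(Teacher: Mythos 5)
Your proof is correct and follows essentially the same route as the paper: the paper also reads off the 2-comb conditions \eqref{eq:comb} on the diagonal Choi matrix of $p$ to get $\sum_x p(x,i|y,j)=q(i|y)$ independent of $j$ with $\sum_i q(i|y)=1$, and then defines $p_{i,y}(x|j):=q(i|y)^{-1}p(x,i|y,j)$ when $q(i|y)>0$ and arbitrarily otherwise. Your write-up merely makes explicit a few details the paper leaves implicit (the vanishing of $p(x,i|y,j)$ when $q(i|y)=0$ and the verification of the converse), which is fine.
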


We obtain that $\Nbb$ is simulable by $\Mbb$ if and only if   $\Phi_{\Nbb}=p(\Phi_\Mbb)$ for some c-c superchannel $p$, 
in diagram
\begin{center}
\begin{minipage}[c]{0.2\textwidth} 
\centering
\vspace{10mm}
\includegraphics{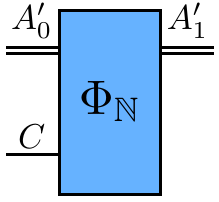}
\end{minipage}
\begin{minipage}[c]{0.01\textwidth}
\centering
\vspace{10mm}
$=$
\end{minipage}
\begin{minipage}[c]{0.35\textwidth}
\centering
\includegraphics{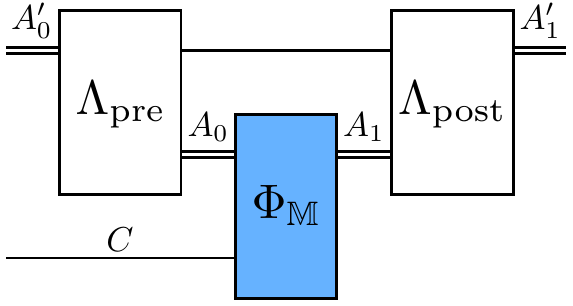}
\end{minipage}

\end{center}
here the double lines denote classical inputs and outputs.

We now introduce the following notion of approximate simulability: for $\epsilon\ge 0$, we say that $\Nbb$ is $\epsilon$-simulable by 
$\Mbb$ if $\Mbb$ can simulate some set of measurements $\Nbb'=\{(M')^1,\dots, (M')^m\}\subset \Me_n(C)$ such that 
\[
\|\Phi_\Nbb-\Phi_{\Nbb'}\|_\diamond\le \epsilon.
\]

The next result shows that approximate simulability is expressed by the conversion distance 
$\delta_{A|A'}(\Phi_\Mbb\|\Phi_\Nbb)$.
\begin{prop}\label{prop:simul_defic} Let $\Mbb=\{M^1,\dots,M^k\}\subset \Me_l(C)$, $\Nbb=\{N^1,\dots,N^m\}\subset
\Me_n(C)$, $\epsilon\ge 0$. Let $A_0,A_1$, $A'_0,A'_1$ be systems such that $d_{A_0}=k$, $d_{A_1}=l$, $d_{A'_0}=m$,
$d_{A'_1}=n$. Then $\Nbb$ is $\epsilon$-simulable by $\Mbb$ if and only if 
\[
\delta_{A|A'}(\Phi_\Mbb\|\Phi_\Nbb)\le \epsilon.
\]
\end{prop}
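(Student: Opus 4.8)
The plan is to prove the two implications separately: the forward implication is immediate from the simulability characterization, while the reverse implication requires \emph{classicalizing} the optimal superchannel. First I would treat the easy direction. If $\Nbb$ is $\epsilon$-simulable by $\Mbb$, then by definition there is a set $\Nbb'$ of $n$-outcome measurements, simulable by $\Mbb$, with $\|\Phi_\Nbb-\Phi_{\Nbb'}\|_\diamond\le\epsilon$. By Lemma \ref{lemma:cc_combs} and the discussion preceding it, simulability of $\Nbb'$ means $\Phi_{\Nbb'}=p(\Phi_\Mbb)$ for some c-c superchannel $p$. In the setting of Section \ref{sec:partial} (with $B_0=C$, $B_1=1$) this reads $\Phi_{\Nbb'}=(p\otimes id_B)(\Phi_\Mbb)$, and since $p\in\Ce_2(A,A')$ is admissible in the infimum defining $\delta_{A|A'}$, we obtain $\delta_{A|A'}(\Phi_\Mbb\|\Phi_\Nbb)\le\|(p\otimes id_B)(\Phi_\Mbb)-\Phi_\Nbb\|_\diamond=\|\Phi_{\Nbb'}-\Phi_\Nbb\|_\diamond\le\epsilon$.

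For the converse, suppose $\delta_{A|A'}(\Phi_\Mbb\|\Phi_\Nbb)\le\epsilon$. Since $\Ce_2(A,A')$ is compact, the minimum is attained, so there is $\Theta\in\Ce_2(A,A')$ with $\|(\Theta\otimes id_B)(\Phi_\Mbb)-\Phi_\Nbb\|_\diamond\le\epsilon$; the obstacle is that $\Theta$ need not be c-c, so $(\Theta\otimes id_B)(\Phi_\Mbb)$ need not be a measurement-channel. I would remove this by dephasing. Let $\mathcal D_X$ denote the completely dephasing (pinching) channel in the standard basis of a system $X$, and let $p$ be the map whose Choi matrix is $C_\Theta$ dephased on all four registers $A_0,A_1,A'_0,A'_1$. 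Two facts make this construction work. Because $\mathcal D_X$ commutes with the partial traces appearing in \eqref{eq:comb}, dephasing preserves the $2$-comb conditions, so $p$ is again a superchannel, and it is c-c by construction. On the other hand, since $C_\Mbb$ is diagonal on $A_0$ and $A_1$, the link product $C_\Theta* C_\Mbb$ depends only on the diagonal blocks of $C_\Theta$ on those two systems, so dephasing $C_\Theta$ on $A_0,A_1$ leaves $(p\otimes id_B)(\Phi_\Mbb)$ unchanged; and, as $\mathcal D_X$ is invariant under transpose, dephasing on the input register $A'_0$ and the output register $A'_1$ amounts to sandwiching, giving $(p\otimes id_B)(\Phi_\Mbb)=\mathcal D_{A'_1}\circ(\Theta\otimes id_B)(\Phi_\Mbb)\circ(\mathcal D_{A'_0}\otimes id_C)=:\tilde\Psi$.

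It then remains to estimate $\|\tilde\Psi-\Phi_\Nbb\|_\diamond$. Since the input $A'_0$ and output $A'_1$ of $\Phi_\Nbb$ are classical, $\Phi_\Nbb$ is fixed by the same sandwich, $\mathcal D_{A'_1}\circ\Phi_\Nbb\circ(\mathcal D_{A'_0}\otimes id_C)=\Phi_\Nbb$, and pre- and post-composition with the channels $\mathcal D_{A'_0}\otimes id_C$ and $\mathcal D_{A'_1}$ is a diamond-norm contraction (Prop.\ \ref{prop:basesec} (v)). Hence $\|\tilde\Psi-\Phi_\Nbb\|_\diamond=\|\mathcal D_{A'_1}\circ[(\Theta\otimes id_B)(\Phi_\Mbb)-\Phi_\Nbb]\circ(\mathcal D_{A'_0}\otimes id_C)\|_\diamond\le\|(\Theta\otimes id_B)(\Phi_\Mbb)-\Phi_\Nbb\|_\diamond\le\epsilon$. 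By Lemma \ref{lemma:cc_combs}, since $p$ is a c-c superchannel, $\tilde\Psi=(p\otimes id_B)(\Phi_\Mbb)=\Phi_{\Nbb'}$ for a set $\Nbb'$ of measurements simulable by $\Mbb$, and $\|\Phi_{\Nbb'}-\Phi_\Nbb\|_\diamond\le\epsilon$, so $\Nbb$ is $\epsilon$-simulable by $\Mbb$. The main obstacle is precisely this classicalization step: replacing the arbitrary optimal $\Theta$ by a c-c superchannel at no cost in diamond norm. Its whole content is that dephasing preserves the comb constraints, that the link product with the diagonal operator $C_\Mbb$ absorbs the $A_0,A_1$-dephasing for free, and that the $A'$-dephasing is harmless because $\Phi_\Nbb$ is already classical on $A'_0$ and $A'_1$.
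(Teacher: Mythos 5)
Your proof is correct and follows essentially the same route as the paper: the forward direction is immediate from the identification of simulations with c-c superchannels, and the converse is obtained by sandwiching with the dephasing channels $\Delta_{A'_0},\Delta_{A'_1}$, using that $\Phi_\Nbb$ is invariant under this sandwich and that the diamond norm contracts, so the minimum is attained at a c-c superchannel. You merely make explicit two points the paper leaves implicit — that dephasing the Choi matrix on all four registers preserves the 2-comb conditions, and that the $A_0,A_1$-dephasing is absorbed because $C_\Mbb$ is diagonal there — which is a welcome clarification but not a different argument.
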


\begin{proof} Assume that $\Nbb$ is $\epsilon$-simulable by $\Mbb$. As shown above, there is some c-c  superchannel
$p\in \Ce_2(A,A')$ such that $\|p(\Phi_{\Mbb})-\Phi_{\Nbb}\|_\diamond\le \epsilon$ and hence  
$\delta_{A|A'} (\Phi_\Mbb\|\Phi_{\Nbb})\le \epsilon$. For the converse, for any system $D$, let $\Delta_D\in \Ce(D,D)$
denote the channel that maps any  $X\in \Be(D)$ to its diagonal elements in the  basis $|i\>_D$:
\[
\Delta_D(X)=\sum_{i=1}^{d_D} \<i|X|i\>|i\>\<i|.
\] 
For any $\Theta\in \Ce_2(A,A')$, we have
\[
\|\Theta(\Phi_\Mbb)-\Phi_\Nbb\|_\diamond\ge \|\Delta_{A'_1}\circ (\Theta(\Phi_\Mbb)-\Phi_\Nbb)\circ\Delta_{A'_0}\|_\diamond=
\|\Delta_{A'_1}\circ\Theta(\Phi_\Mbb)\circ\Delta_{A'_0}-\Phi_\Nbb \|_\diamond,
\]
so that the minimum in $\delta_{A|A'}(\Phi_\Mbb\|\Phi_\Nbb)$ is attained at some  c-c superchannel $p$ and we have seen 
that  $p(\Phi_\Mbb)=\Phi_{\Nbb'}$ for some set $\Nbb'$ of measurements that are
simulated by  $\Mbb$. It follows that $\Nbb$ is
 $\epsilon$-simulable by $\Mbb$.

\end{proof}

We are ready to apply the results of Section \ref{sec:partial} and prove that 
 $\epsilon$-simulability can be characterized by guessing games. Note that  in this case, it is
 enough to use  ensembles on the system $C$ (so $R_0=R_1=1$ in Theorem \ref{thm:bipartite} (iii)). For $\Mbb$ 
 consisting of a
single element and $\epsilon=0$, this result was proved in \cite{skrzypczyk2019robustness}.

\begin{coro}\label{coro:sc_simul} Let $\Mbb$ and $\Nbb$ be sets of measurements as above, $\epsilon\ge 0$. Then
 $\Nbb$ is $\epsilon$-simulable by $\Mbb$
if and only if  for any ensemble $\Ee$ on $C$, 
\[
\max_{1\le y\le m} P_{\succ}^Q(\Ee,N^y)\le \max_{1\le i\le k} P_{\succ}^Q(\Ee,M^i) + \frac{\epsilon}2P_{\succ}(\Ee).
\]

\end{coro}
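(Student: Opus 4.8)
The plan is to combine Proposition \ref{prop:simul_defic} with Theorem \ref{thm:bipartite}, specialized to the bipartite channels $\Phi_\Mbb\in\Ce(A_0C,A_1)$ and $\Phi_\Nbb\in\Ce(A'_0C,A'_1)$; here the role of $B_0$ in Theorem \ref{thm:bipartite} is played by $C$ and $B_1=1$ is trivial. By Proposition \ref{prop:simul_defic}, $\Nbb$ is $\epsilon$-simulable by $\Mbb$ exactly when $\delta_{A|A'}(\Phi_\Mbb\|\Phi_\Nbb)\le\epsilon$, so it remains to translate condition (iii) of Theorem \ref{thm:bipartite} into the stated inequality and to verify that in this classical situation the ancillas $R_0,R_1$ may be dropped, i.e. that $R_0=R_1=1$ suffices.

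First I would carry out the value identification for ensembles $\Ee$ on $C$ (so $R_0=R_1=1$, and since $B_1=1$ the fixed measurement is trivial and $\Me_{\post}=\Me_k$). Because $\Phi_\Nbb$ reads its register $A'_0$ classically, any preprocessing $\alpha\in\Ce(1,UA'_0)$ may be replaced by a classical-quantum preparation $\sum_y|y\>\<y|_{A'_0}\otimes\omega_y$ on $UA'_0$; applying $\Phi_\Nbb$ then performs the POVM $N^y$ on $C$ and records the outcome on the classical register $A'_1$, while the ancilla $U$ carries the value of $y$. A final measurement on $UA'_1$ is thus nothing but a conditional probability $p(z\mid y,x)$, i.e. a classical postprocessing of the chosen measurement index $y$ and the outcome $x$. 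Optimising first over the postprocessing and then over the preparation gives
\[
P_{\succ}^{\Ce(1,A'_0),\Phi_\Nbb,\Me_k}(\Ee)=\max_{\{p_y\}}\sum_y p_y\,P_{\succ}^Q(\Ee,N^y)=\max_{1\le y\le m}P_{\succ}^Q(\Ee,N^y),
\]
where Lemma \ref{lemma:psuc_Q} identifies the inner optimum as $P_{\succ}^Q(\Ee,N^y)$ and a linear functional on the simplex is maximised at a vertex. The identical computation for $\Phi_\Mbb$ yields $\max_i P_{\succ}^Q(\Ee,M^i)$, and the error term $\tfrac{\epsilon}2P_{\succ}(\Ee)$ is unchanged. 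With these identifications, condition (iii) of Theorem \ref{thm:bipartite} at $R_0=R_1=1$ is exactly the claimed inequality, and the implication ``$\epsilon$-simulable $\Rightarrow$ inequality'' follows immediately from Theorem \ref{thm:bipartite} (i)$\Rightarrow$(iii), which holds for all admissible ancillas and in particular for the trivial ones.

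The substantive step is the converse, where I must show that testing ensembles on $C$ already certifies $\delta_{A|A'}(\Phi_\Mbb\|\Phi_\Nbb)\le\epsilon$, i.e. that the ancillas $R_0\simeq A'_0$, $R_1\simeq A'_1$ required in general by Theorem \ref{thm:bipartite} are superfluous here. The key is that both $\Phi_\Mbb$ and $\Phi_\Nbb$ are classical on the whole $A$-side: the input register ($A_0$ resp. $A'_0$) only selects which POVM is applied, and the output register ($A_1$ resp. $A'_1$) only stores an outcome. Hence a reference entangled with these registers can carry only classical correlations with the selection and the outcome, and such correlations fold into the prior distribution $\{\lambda_x\}$ and into the classical postprocessing of a discrimination game performed on $C$ alone. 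Concretely, I would start from a hypothetical ensemble on $R_0CR_1$ violating the inequality, dephase the classical registers, and rewrite the resulting strategy as a mixture over $y$ of state-discrimination games on $C$ with the fixed POVMs $N^y$; since the reference systems bypass the channels identically, they contribute a common, channel-independent factor that cancels in the comparison of the $\Phi_\Nbb$- and $\Phi_\Mbb$-values, leaving a $C$-ensemble that violates the inequality by at least as much. This confinement of the genuinely quantum test to $C$, with $R_0,R_1$ entering only classically, is the main obstacle; it is precisely the feature that, as remarked after Theorem \ref{thm:bipartite}, is special to the measurement case and survives for $\epsilon>0$. Once it is established, Theorem \ref{thm:bipartite} (iii)$\Rightarrow$(i) gives $\delta_{A|A'}(\Phi_\Mbb\|\Phi_\Nbb)\le\epsilon$, and Proposition \ref{prop:simul_defic} yields that $\Nbb$ is $\epsilon$-simulable by $\Mbb$.
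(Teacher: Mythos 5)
Your forward direction is essentially the paper's: with $R_0=R_1=1$ the admissible strategies for $\Phi_\Mbb$ and $\Phi_\Nbb$ are exactly the classical-to-classical superchannels, which by Lemma \ref{lemma:cc_combs} factor as $q(i)p_i(a|j)$, so the optimal value is $\max_i P^Q_{\succ}(\Ee,M^i)$ (resp.\ $\max_y P^Q_{\succ}(\Ee,N^y)$); that part is fine. The gap is in the converse. You propose to take an ensemble on $R_0CR_1$ violating Theorem \ref{thm:bipartite}~(iii), dephase the classical registers, and argue that the reference systems ``contribute a common, channel-independent factor that cancels,'' leaving a violating $C$-ensemble. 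This heuristic does not hold. After dephasing, the $\Phi_\Nbb$-strategy has the form $\sum_w\lambda_w\sum_{y,x}\Tr\bigl[\sigma^w_y\,(N^y_x\otimes F^{(x)}_w)\bigr]$, where the states $\sigma^w_y\in\Be_+(CUR_1)$ may be entangled between $C$ and $UR_1$ and the POVM $F^{(x)}$ on $UR_1$ is chosen adaptively on the outcome $x$ of $N^y$. The references are measured jointly with, and conditionally on, the channel output; nothing factors out, and reducing this to a discrimination game on $C$ alone with the fixed POVMs $N^y$ is precisely the nontrivial content you have not supplied. A second, independent problem is the error term: the hypothesis yields $\frac{\epsilon}{2}P_{\succ}(\Ee_y)$ for each derived $C$-ensemble, and you must still bound the resulting weighted sum by the error term of the target inequality, which is not automatic.

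The paper avoids both issues by proving the converse through Theorem \ref{thm:rand_chans}~(ii) instead of Theorem \ref{thm:bipartite}~(iii). For a state $\rho$ on the input-output space of $\Phi_\Nbb$, the block structure of $C_\Nbb$ (classical in $A'_0$ and $A'_1$) gives directly $\<\rho,\Phi_\Nbb\>=\sum_y\lambda(y)P_{\succ}(\Ee_y,N^y)$ for ensembles $\Ee_y$ on $C$ --- no preprocessing, no ancilla, and no adaptive measurement ever enters. The hypothesis is applied to each $y$, the resulting indices $i_y$ and postprocessings $p_y$ are assembled into a single c-c superchannel $p$ with $\<\rho,\Phi_\Nbb\>\le\<\rho,p(\Phi_\Mbb)\>+\frac{\epsilon}{2}\sum_y\lambda(y)P_{\succ}(\Ee_y)$, and the error term is controlled by taking optimal measurements $F^y$ for the $\Ee_y$ and observing $\sum_y\lambda(y)P_{\succ}(\Ee_y)=\<\rho,\Phi_{\mathbb F}\>\le\|\rho\|^\diamond_{A'_1|A'_0C}$, after which Theorem \ref{thm:rand_chans}~(ii) closes the argument. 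This last normalization step has no counterpart in your sketch. To repair your proof, either adopt this dual-state route or make the ancilla-elimination argument genuinely precise; as written, the converse direction is not established.
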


\begin{proof} We start by expressing the success probabilities of  part (iii) of Theorem \ref{thm:bipartite} for $R_0=R_1=1$. 
 Let $\Ee=\{\lambda_a,\rho_a\}$ be an ensemble on $C$. Since $B_0=C$ and $B_1=1$,   the guessing games with
$\Phi_\Mbb$  can be represented as in the diagram 

\begin{center}
\includegraphics{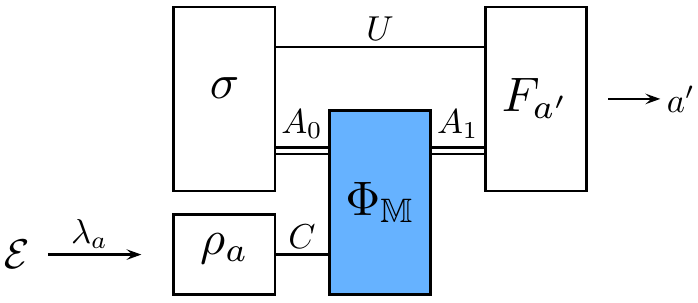}
\end{center}
Here, as a preprocessing, we pick a quantum-classical state $\sigma\in \states(UA_0)$ seen as a channel in $\Ce(1,SA_0)$ and we also pick a measurement $F$ on $UA_1$.
The success probability is then by \eqref{eq:Psucc_e_m} 
\[
P_{\succ}(\sigma\otimes \Ee, \Phi_\Mbb^*(F))=\< \rho_\Ee, \Theta(\Phi_\Mbb)\>,
\]
with  $\Theta:=\sigma*\Phi_F\in \Ce_2(A,Q)$, where  $Q=Q_0Q_1$, $Q_0=1$, $Q_1$ is the output system of 
$\Phi_F$. 
Since $\Theta$ is obviously a c-c superchannel, by Lemma \ref{lemma:cc_combs} there are conditional probabilities 
$p_i(a|j)$ and probabilities $q(i)$ such that $\Theta=p$ with $p(a,i|j)=q(i)p_i(a|j)$.  We obtain
\begin{align*}
\< \rho_\Ee, \Theta(\Phi_\Mbb)\>&= \rho_\Ee*C_p*C_\Mbb= \sum_i q(i) \sum_{j,a}
\lambda_a p_i(a|j)\Tr[\rho_aM^i_j]\\
&= \sum_i q(i) P_{\succ}(\Ee, M^i,p_i).
\end{align*}
Since any c-c superchannel in $\Ce_2(A,Q)$ consists of a preprocessing and postprocessing of the above form, we see that 
\begin{align*}
P_{\succ}^{\Ce(1,\cdot A_0),\Phi_\Mbb,\Me}(\Ee)&=\sup_{\sigma, F} P_{\succ}(\sigma\otimes \Ee,\Phi_\Mbb^*(F))=
\sup_{p\in \Ce_2(A,Q)} \<\rho_\Ee,p(\Phi_\Mbb)\>\\
&=\sup_{q,\{p_i\}} \sum_i q(i) P_{\succ}(\Ee, M^i,p_i)=\\
&=\sup_q \sum_i q(i) P_{\succ}^Q(\Ee, M^i)=\max_{1\le i\le m}P_{\succ}^Q(\Ee, M^i),
\end{align*}
where the second supremum is taken over all c-c superchannels. Since we have a similar equality for $\Phi_\Nbb$, we
obtain the 'only if' part.

For the converse, let $\rho\in \states(A'_0CA_1)$. It is easy to see from the shape of the Choi matrix $C_\Nbb$ that there are
probabilities $\lambda(y)$, conditional probabilities $\mu(x|y)$ and states $\rho^y_x\in \states(C)$ such that 
\[
\<\rho,\Phi_\Nbb\>=\rho* C_\Nbb=\sum_y \lambda(y) \sum_x\mu(x|y) \Tr[\rho^y_xN^y_x]=\sum_y\lambda(y) P_{\succ}(\Ee_y,N^y),
\]
here $\Ee_y=\{\mu(x|y), \rho^y_x\}_{x=1}^n$. For each $y$, $P_{\succ}(\Ee_y,N^y)\le P_{\succ}^Q(\Ee_y,N^y)$, so that by the
assumption, there is some
$1\le i_y\le k$ and conditional probabilities $p_{y}(x|j)$ such that 
\[
P_{\succ}(\Ee_y,N^y)\le P_{\succ}(\Ee_y, M^{i_y},p_{y})+\frac{\epsilon}2P_{\succ}(\Ee_y).
\]
Put $q(i|y)=\delta_{i,i_y}$, then $q(i|y)$ are conditional probabilities. Put
$p(i,x|y,j)=q(i|y)p_{y}(x|j)$, then $p$ is a c-c superchannel in $\Ce_2(A,A')$. It can be easily computed that 
\begin{align*}
\<\rho,p(\Phi_\Mbb)\>&=\rho*C_p*C_\Mbb=\sum_{i,j,x,y} \lambda(y)\mu(x|y)\Tr[\rho^y_xM^i_j]p(i,x|y,j)\\
&= \sum_{y,x,j} \lambda(y)\mu(x|y)\Tr[\rho^y_xM^{i_y}_j]p_y(x|j)=\sum_y \lambda(y) P_{\succ}(\Ee_y,M^{i_y},p_y)
\end{align*}
It follows that 
\[
\<\rho,\Phi_\Nbb\>\le \<\rho,p(\Phi_\Mbb)\>+\frac{\epsilon}2\sum_y\lambda(y)P_{\succ}(\Ee_y).
\]
Let now $F^y\in \Me_n(C)$ be such that $P_{\succ}(\Ee_y)=P_{\succ}(\Ee_y,F^y)$ and let $\mathbb F=\{F^1,\dots, F^m\}$. 
As we have seen before,
\[
\sum_y\lambda(y)P_{\succ}(\Ee_y)=\sum_y\lambda(y) P_{\succ}(\Ee_y,F^y)=\<\rho,\Phi_{\mathbb F}\>\le
\|\rho\|^\diamond_{A'_1|A'_0C}.
\]
By Theorem \ref{thm:rand_chans} (ii), this finishes the proof.

\end{proof}

\section{Concluding remarks}

We have introduced a general framework for comparison of channels, in quantum 
information theory as well as in the broader setting of GPT. 
The framework is based on
 the category $\mathsf{BS}$, which is a special category of ordered (finite dimensional) vector spaces,  modelled on the
set of channels.  In this setting, we 
defined a notion of an $\Fe$-conversion distance $\delta_\Fe$ with respect to a convex subcategory $\Fe$ and proved a general
result giving an operational  characterization of this distance.  
This result was then applied to  quantum channels, where we proved that the $\Fe$-conversion distance can be characterized 
by a set of  modified conditional min-entropies. In the setting of quantum resource theories of processes, these
quantities form a complete set of resource monotones. Under some conditions of the subcategory $\Fe$, the modified
conditional min-entropies can be obtained by a conic program.

In the case when elements of  $\Fe$  can be characterized as
concatenations $\Theta_{\pre}*\Theta_{\post}$ with $\Theta_{\pre}\in \Ce_{\pre}$ and $\Theta_{\post}\in \Ce_{\post}$ for some
suitable sets of channels, we also characterized the $\Fe$-conversion distance in terms of success probabilities in some
guessing games. We discussed several choices of such subcategories:
postprocessings, preprocessings, LOCC and partial superchannels on bipartite channels. We also noted that our results 
 hold for other classes of restricted superchannels, such as PPT or SEP. As another application, we studied the problem
of approximate classical simulability for sets of measurements. 

The advantage of the general formulation in the GPT setting of Section \ref{sec:gpt} is that our results  can be applied not only to pairs of 
channels but also to more specialized networks. In particular, one can study the problem of converting $m$ copies of a channel to
$n$ copies  of another, using parallel or sequential schemes (see also \cite{liu2020operational}). If  the $m$ copies of a channel, or more generally an
$m$-tuple of channels are used in parallel, we can treat their tensor product simply as a channel and some variants of the present theory
may be applied. If we use a sequential scheme with respect to some fixed ordering, the tensor product is a special case
of an $m$-comb \cite{chiribella2009theoretical}. Similarly to 2-combs, spaces of $m$-combs are also objects of the category $\mathsf{BS}$, so convertibility in this
setting can be treated using  Theorem \ref{thm:rand}. On the other hand, if there is no fixed ordering in the use of the
channels, then we may see the tensor product of the channels as a product  element in an object obtained from a symmetric
monoidal structure in $\mathsf{BS}$. Note also that the choice of the subcategory $\Fe$ not only restricts the allowed
transformations (by the choice of morphisms) but also determines the distance measure in $\delta_\Fe$ (by the choice of
the objects).

This work concerns only the one shot situation, when the channels in question are used only once. To go beyond the one
shot setting in the general framework, we need to discuss possible symmetric  monoidal structures (tensor products) in
$\mathsf{BS}$, their properties and the corresponding behaviour of the related norms. Another important
direction is an extension to infinite dimensions. The present framework strongly depends on the finite dimensional
setting, but some corresponding  results for  post- and preprocessings for quantum  channels  on semifinite von Neumann
algebras were proved in \cite{ganesan2020quantum}.

\section*{Acknowledgements}

I am indebted to  Martin Pl\'avala for the idea of introducing  the category that is here called
$\mathsf{BS}$ and sharing his notes, as well as, together with Gejza Jen\v{c}a, for discussions on its definition and
properties. I am also grateful to anonymous referees whose comments helped me greatly to improve the
presentation and results of the paper. 
Part of the results (Section 3.5) were presented at the Algebraic and Statistical ways into Quantum Resource Theories workshop at BIRS
 in July 2019. The research was supported by grants VEGA 2/0142/20 and APVV-16-0073.



\end{document}